\numberwithin{equation}{section}
\DeclareMathAlphabet{\pazocal}{OMS}{zplm}{m}{n}
\let\mathcal\pazocal
\spnewtheorem{theorem}{Theorem}{\bfseries}{\itshape}
\spnewtheorem{cor}[theorem]{Corollary}{\bfseries}{\itshape}
\spnewtheorem{lemma}[theorem]{Lemma}{\bfseries}{\itshape}
\spnewtheorem{proposition}[theorem]{Proposition}{\bfseries}{\itshape}
\spnewtheorem{definition}[theorem]{Definition}{\bfseries}{\itshape}
\spnewtheorem{remark}[theorem]{Remark}{\bfseries}{\upshape}
\spnewtheorem{assumption}[theorem]{Assumption}{\bfseries}{\itshape}
\definecolor{myred}{rgb}{0.8,0,0}  
\definecolor{orange}{rgb}{1,0.53,0}
\definecolor{mygreen}{rgb}{0,0.5,0}
\definecolor{myviolet}{rgb}{0.58,0,0.83}
\newcommand{\mymarginpar}[1]{ \marginpar{{\tiny #1}}}
\renewcommand{\paragraph}[1]{{\bf #1.}}
\renewcommand{\mymarginpar}[1]{} 
{\vskip\baselineskip\noindent\textbf{Proof of {#1}:}}%
{\hspace*{.1pt}\hspace*{\fill}\BOX\vskip\baselineskip}
\newcommand{\KBY}[1]{C_{Y,#1}}
\newcommand{\KBYreg}[1]{\widetilde{C}_{Y,#1}}
\renewcommand{\paragraph}[1]{{\bf #1.}}
\definecolor{myred}{rgb}{0.8,0,0}  
\definecolor{mygreen}{rgb}{0,0.7,0}
\definecolor{mygrey}{rgb}{0.5,0.5,0.5}
\def \R{\mathbb{R}}               
\def \N{\mathbb{N}}               
\def \P{\mathbb{P}}             
\def \1{{\bf 1}}                
\def \0{{\bf 0}}
\def\utility{\mathcal U}
\def\qed{\hfill$\Box$}
\def\eps{\varepsilon}
\def\revlevel{\overline{\mu}}
\def\revspeed{\kappa}
\def\driftinitial{\overline{\filter}_0}
\def\filterinitial{\filter_0}
\def\covinitial{\overline{\variance}_0}
\def\condcovinitial{\variance_0}
\def\restri{\mathcal R}
\def\nAktien{d}
\def\YQ{G}
\def\nQ{\nAktien_{\YQ}}
\def\zsmall{g}
\def\nZustand{{d_Y}}
\def\alphaYQ{{\alpha}_{\YQ}}
\def\alphaYQquer{\underline{\alpha}_{\YQ}}
\def\gammaYQ{\gamma_{\YQ}}
\def\variance{q}
\def\pointp{p}
\def\HC{Z}
\def\HF{F}
\def\HR{R}
\def\nWienerRendite{d_1}
\def\nWienerDrift{d_2}
\def\varianceexp{\Gamma}
\def\volR{\sigma_R}
\def\voldrift{\sigma_{\mu}}
\def\filter{m}
\def\alphamm{\alpha_{\Mpro}}
\def\betam{\beta_{\Mpro}}
\def\gammam{\gamma_{\Mpro}}
\def\alphaqq{\alpha_{\Qpro}}
\def\betaq{\beta_{\Qpro}}
\def\gammaq{\gamma_{\Qpro}}
\def\drift{\mu}
\def\komppoi{\widetilde{N}^{\lambda}}
\def\Radon{\Lambda}
\def\alphay{\alpha_Y}
\def\betay{\beta_Y}
\def\gammay{\gamma_Y}
\def\generator{\mathcal L}
\def\reward{D}
\def\valuefkt{V}
\def\valueorigin{\mathcal V}
\def\rewardorigin{\mathcal D}
\def\stateprocess{Y}
\def\wealth{X}
\def\Jpi{\overline \rewardorigin_0}
\def\Mpro{M}
\def\Qpro{Q}
\def\etaT{\eta}
\def\np{k}
\def \dist{\text{\rm dist}\,}
\newcommand\norm[1]{\left\lVert#1\right\rVert}
\def \trace{\operatorname{tr}}
\def \E{{\mathbb{E}}}
\def\Statespace{{\mathcal S}}
\def\diffop{D}
\def\AlgT{\mathcal{F}^T}
\def \one{\mathbbm{1}}
\begin{document}
	
	\title{Portfolio Optimization in a Market with Hidden Gaussian Drift and
		Randomly Arriving Expert Opinions: Modeling and Theoretical Results}

	\titlerunning{Portfolio Optimization with Randomly Arriving Expert Opinions}
	\author{Abdelali Gabih \and Ralf Wunderlich}
	
	\authorrunning{A.~Gabih,  R.~Wunderlich} 
	
	\institute{Abdelali Gabih, corresponding author   \at
		Equipe de Modélisation et Contrôle des Systèmes Stochastiques et Déterministes (MC2SD), Faculty of Sciences, Chouaib Doukkali
		University Morocco, El Jadida 24000,   \email{\texttt{a.gabih@uca.ma}}
		\and 
		Ralf Wunderlich\at
		Brandenburg University of Technology Cottbus-Senftenberg, Institute of Mathematics, P.O. Box 101344, 03013 Cottbus, Germany;  
		\email{\texttt{ralf.wunderlich@b-tu.de}}\newline \newline
		The authors thank   Hakam Kondakji (Helmut Schmidt University Hamburg), Dorothee Westphal and Jörn Sass  (TU Kaiserslautern) for valuable discussions that improved this paper.
	}

	\date{}
	
	\maketitle

	\abstract{
		This paper investigates the optimal selection of portfolios for power utility maximizing investors in a financial market where stock returns depend on a hidden Gaussian mean reverting drift process. Information on the drift is obtained from returns and expert opinions in the form of noisy signals about the current state of the drift arriving randomly over time. The arrival dates are modeled as the jump times of a homogeneous Poisson process. Applying  Kalman filter techniques we derive estimates of the hidden drift which are described by the conditional mean and covariance of the drift given the observations. The utility maximization problem is solved with dynamic programming methods. We derive the associated dynamic programming equation and study regularization arguments for a rigorous mathematical justification.}
	
	\keywords{ Power utility maximization \and Partial	information\and  Dynamic programming equation\and 	Kalman-Bucy filter\and   Expert  opinions\and   Black-Litterman model\and Regularization.}
	\subclass{91G10 \and  93E20 \and 93E11 \and  60G35 \and  49L20}
	

	\section{Introduction}
	The selection of optimal portfolios  is  one of the  classical problems in mathematical finance.  The
	main objective is to find the best allocation  of the investors capital between the available assets to maximize some performance criterion. Constructing  an optimal portfolio is challenging since the market participants  operate
	in an uncertain environment.  		
	This fact was already addressed  by   Merton \cite{Merton (1971)}, Brennan, Schwartz,
	and Lagnado \cite{Brennan et al (1997)}, and Bielecki and Pliska \cite{Bielecki_Pilska (1999)}. They  argue that stochastic factors explain observed variations in expected asset returns, and accordingly optimal trading strategies depend crucially on the drift of the underlying asset return processes.
	However, practitioners face a challenging task when trying to get accurate empirical estimations of the drift based exclusively on historical asset prices.  Rogers \cite[Chapter 4.2]{Rogers (2013)} addressed in his celebrated '20s example'  that even for a constant drift  the volatility in a typical financial asset overwhelms the drift to such an extent that one needs centuries of data  to form reliable estimates of the drift.  Such data is of course not available and it is also quite unrealistic to consider a constant drift over longer periods of time.  Instead, drift models should allow for random and non-predictable fluctuations and drift effects overshadowed by volatility.
	
	In order to tackle the problem of constructing good drift estimates which are needed for the construction of trading decisions, practitioners  do not only rely on historical asset returns but  also  on external sources of information such as news, company reports, ratings or their own intuitive views on the future asset performance.  In the literature these outside sources of information are coined \textit{expert opinions}. They are used to improve drift estimates. A mathematical formalization			of this procedure for a static one-period financial market model is known in the literature as Black-Litterman model  which can be considered as an	extension of the classical one-period Markowitz model, see Black and Litterman~\cite{Black_Litterman (1992)}. The approach there consists in incorporating expert opinions to  improve return predictions  by means of a  Bayesian updating of the drift estimate. 
	
	Contrary to this classic approach in this paper we consider a continuous-time financial market model and allow expert opinions to arrive repeatedly over time. In   Davis and Lleo  \cite{Davis and Lleo (2013_1)} this approach is coined  ``Black--Litterman in Continuous Time'' (BLCT). We work with a  hidden Gaussian model where asset prices follow a diffusion process whose drifts  are driven by an unobservable mean-reverting Gaussian process.			
	Information on the drift is available from two sources. First, investors observe continuously stock prices or equivalently the return process.  Second,  investors exploit expert opinions delivered  at random time points in a form of unbiased drift estimates. The arrival dates are described by the jump times of a homogeneous Poisson process with known intensity. This setting is well-suited to model the arrival of information such as news, ratings.  Further,  also so-called 'alternative data'  from sources outside the traditional data about companies and financial markets such as social media posts or product review trends are characterized by a non-regular and non-predictable arrival of  information. The setting with random arrival dates complements the approach presented in Gabih et al.~\cite{Gabih et al (2023) PowerFix} in which the authors assume fixed and known arrival dates. The latter is a suitable model for regularly arriving company reports and the views of appointed analysts.

	For portfolio selection problems the task is to reflect  the informational advantage of expert opinions in the dynamic trading strategies. 
	It is well-known that  the construction of  good trading strategies depends on the quality of the hidden drift estimation. One way to specify such an estimation is combining  filtering results for hidden Gaussian models and Bayesian updating to compute the conditional distribution of the drift given the available information drawn from  the return observations and expert opinions. In the case of investors having access only to the return process that filter is known as the classical  Kalman--Bucy filter, see for example Liptser and Shiryaev~\cite{Liptser-Shiryaev}. Based on this,  the filter for investors combining both returns and expert opinions can be derived  by a Bayesian update of the  current drift estimate at each information date.  This idea leads to the above mentioned   continuous-time version of the static Black-Litterman approach.

	The literature on  utility maximization problems for partially informed investors mainly uses  two  models for the hidden drift. The first group works with asset price models in which the drift is a  Gaussian mean reverting process such   by Lakner \cite{Lakner (1998)} and  Brendle \cite{Brendle2006}. In a second group  the drift is described by a continuous-time hidden Markov chain  as in   Rieder and Bäuerle \cite{Rieder_Baeuerle2005}, Sass and Haussmann \cite{Sass and Haussmann (2004)}. For a  good overview with further references and generalizations  we refer to  Björk et al.~\cite{Bjoerk et al (2010)}.

	The literature on BLCT in which expert opinions are included comprises  a series of papers \cite{Gabih et al (2014),Gabih et al (2019) FullInfo,Sass et al (2017),Sass et al (2021),Sass et al (2022)} 
	of the present authors and of Sass and Westphal. The case of power utility maximization for expert opinions arriving at fixed and known time points is studied in  Gabih et al. \cite{Gabih et al (2023) PowerFix}. Similar problems for  continuous-time expert opinions are treated in a series of papers by Davis and Lleo, see \cite{Davis and Lleo (2013_1),Davis and Lleo (2020)} and the references therein. Power utility maximization problems for drift processes described by  continuous-time hidden Markov chains have been studied in  Frey et al.~\cite{Frey et al. (2012),Frey-Wunderlich-2014}. Finally,  the well posedness of the power utility maximization problem is  addressed in Gabih et al. \cite{Gabih et al (2022) Nirvana}. 
	
	\smallskip
	In the present paper we study in detail the case of power utility maximization for market models with a hidden Gaussian drift and randomly arriving discrete-time expert opinions. In the literature this case  was only considered in the PhD thesis of Kondakji \cite{Kondkaji (2019)} which serves in this paper as starting point for various extensions and generalizations. The utility maximization problem is treated as a stochastic optimal control problem and solved using dynamic programming methods. 	We  apply a change of measure technique which  allows to study  simplified control problems with  a risk-sensitive performance criterion in which the state variables are reduced  to the Kalman filter processes of conditional mean and covariance.

	The setting  with expert opinions arriving at already known times  is explored in  Gabih et al.~\cite{Gabih et al (2023) PowerFix}. In this work  the  conditional covariance is deterministic and can be computed offline and therefore removed from the state variables. Then, the state process between two arrival dates is a pure diffusion. This allows for a closed-form solution of the associated dynamic programming equation in terms of a backward recursion for the value function as well as for the optimal strategy. Contrary to this, in the current setting with randomly arriving expert views the conditional covariance is a stochastic process with jumps at the random  arrival dates and  deterministic with continuous paths between those dates. Therefore, the conditional covariance  has to be included in the state variables. The dynamic programming equation for the resulting control problem is a partial integro-differential		equation (PIDE) which is degenerated in the diffusion part of the differential operator. This precludes
	the application of classical existence and uniqueness results for the solution of the PIDE
	as well as corresponding verification theorems for the control problem.
	\smallskip
	
	\paragraph{Our contribution}
	We tackle the above mentioned problems by using a regularization technique motivated by similar approaches in Frey et al. \cite{Gabih et al (2014)} and Shardin and Wunderlich \cite{Shardin and Wunderlich (2017)}. Note that in these papers the  drift is a finite-state	Markov chain and hence bounded. However,  our setting  involves an unbounded
	Ornstein-Uhlenbeck drift process which creates a number of new and delicate technical difficulties. Starting point of this regularization  is  a ``small'' perturbation of the dynamics of the	state process of the control problem by another driving  Brownian motion of appropriate dimension and independent of the already existing drivers.  This perturbation is  scaled by a small parameter which later is sent to zero. Then the dynamic programming equation associated with the regularized optimization problem is strictly elliptical so that classical verification results can be applied. For this family of perturbed control problems, we first show that for scaling parameters approaching zero, the solutions of the regularized state equations  converge in the mean-square sense to the solution of the original, non-regularized state equation.	 Next,  we show that the reward functions of the control problems converge uniformly for all admissible strategies. For the present risk-sensitive performance criterion this appears  as a problem of convergence of exponential moments for which  we apply the concept of uniform integrability.  Finally, we  prove that   value functions of the regularized problem converge to the value function of the original problem. That result implies that optimal strategies for the regularized problem  are nearly or	$\varepsilon$-optimal in the original problem which gives a method for computing $\varepsilon$-optimal strategies. We finally point out that the theoretical findings of this paper are illustrated in ~\cite{Kondkaji (2019)} by results of extensive numerical experiments based on the numerical solution of the PIDE appearing in the dynamic programming equation.
	\smallskip
	
	\paragraph{Outline}
	The paper is organized as follows. In Sec.~\ref{market_model} we
	introduce our continuous-time model for the financial market including the discrete-time expert
	opinions delivered randomly over time. Further,  we formulate the portfolio optimization problem under partial information. Sec.~\ref{Filtering} is devoted to the estimation of the hidden drift and states the generalized Kalman filter equations for the corresponding conditional mean and conditional covariance process. In Sec.~\ref{Utility_Max}
	the power utility maximization problem   is reformulated as an equivalent stochastic optimal control problem with a risk-sensitive performance criterion. The associated dynamic programming equation to that problem is derived in Sec.~\ref{UtilityMaxDiscreteExperts}.  Finally, we present in Sec.~\ref{regularization} our main results concerning the mathematical justification of the dynamic programming equation using the regularization approach which allows  to compute $\varepsilon$-optimal strategies.  An appendix collects technical proofs which were removed from the main text.

	\medskip
	\paragraph{Notation} Throughout this paper, we use the notation $I_d$ for the identity matrix in $\R^{d\times d}$,  $0_{d}$
	denotes the null vector in $\R^d$ and  $0_{d\times m}$ the null matrix in $\R^{d\times m}$. 
	For a symmetric and positive-semidefinite matrix $A\in\R^{d\times d}$ we call a symmetric and positive-semidefinite matrix $B\in\R^{d\times d}$ the \emph{square root} of $A$ if $B^2=A$. The square root is unique and will be denoted by $A^{1/2}$. For a vector $X$ we denote by  $\norm{X}$ the maximum norm, it induces  for a matrix $A$ the row sum norm denoted by $\lVert A\rVert$.

	\section{Financial Market and Optimization Problem}
	\label{market_model}
	\subsection{\it\textbf{ Price Dynamics}}
	\label{PriceDynamics}  We model a financial market with one
	risk-free and multiple risky assets. The  setting is based on Gabih
	et al.~\cite{Gabih et al (2019) FullInfo,Gabih et al (2023) PowerFix} and
	Sass et al.~\cite{Sass et al (2017),Sass et al (2022),Sass et
		al (2021)}.
	For a  fixed date $T>0$ representing the investment horizon, we work
	on a filtered probability space $(\Omega,\mathcal{G},\mathbb{G},\P)$
	with filtration $\mathbb{G}=(\mathcal {G}_t)_{t \in [0,T]}$
	satisfying the usual conditions. All processes are assumed to be
	$\mathbb{G}$-adapted.
	We consider a market model  for one risk-free asset 
	and $\nAktien$ risky securities. We fix the risk-free asset as numéraire so that the  risk-free asset has a normalized	constant price $S^0_t=1$.
	The excess  log returns or risk premiums $R=(R^{1},\ldots,R^{\nAktien})$ of the risky securities  are described by stochastic processes defined by the SDE
	\begin{align}
		dR_t=\mu_t\, dt+\volR\, dW^{R}_t, \label{ReturnPro}
	\end{align}
	driven by  a $\nWienerRendite$-dimensional $\mathbb{G}$-adapted	Brownian motion $W^{\HR}$ with  $\nWienerRendite\geq\nAktien$. In the remainder of this paper we will call $R$ simply \textit{returns}.    $\mu=(\mu_t)_{t\in[0,T]}$ denotes the stochastic drift process which is described in detail below.	 The volatility matrix $\volR\in\mathbb
	R^{\nAktien\times\nWienerRendite}$ is  assumed to be constant over
	time such that $\Sigma_{R}:=\volR\volR^{\top}$ is positive definite. Accordingly, the discounted price process $S=(S^1,\ldots,S^{\nAktien})$ of
	the risky securities reads as
	\begin{align}
		dS_t&=diag(S_t)\, dR_t,~~ S_0=s_0, \label{stockmodel}
	\end{align}
	with some fixed initial value $s_0=(s_0^1,\ldots,s_0^d)$. Note
	that for the solution to the above SDE it holds
	\begin{align*}
		\log S_t^{i}-\log s_0^{i} &= \int\nolimits_0^t \drift_s^{i}ds
		+\sum\limits_{j=1}^{\nWienerRendite}\Big(
		\sigma_R^{ij}W_t^{R,j}-\frac{1}{2} (\sigma_R^{ij})^2 t\Big)
		=R_t^{i}-\frac{1}{2}\sum\limits_{j=1}^{\nWienerRendite}
		(\sigma_R^{ij})^2 t ,\quad i=1,\ldots,\nAktien.
	\end{align*}
	So we have the equality $\mathbb{G}^R = \mathbb{G}^{\log S} =
	\mathbb{G}^S$, where  for a generic process $X$ we denote by
	$\mathbb{G}^X$ the filtration generated by $X$. This is useful since
	it allows to work with $R$ instead of $S$ in the filtering part.
	
	The dynamics of the drift process $\mu=(\mu_t)_{t\in[0,T]}$ in \eqref{ReturnPro}
	are
	given by the stochastic differential equation (SDE)
	\begin{eqnarray}
		\label{drift} d\mu_t=\revspeed(\revlevel-\mu_t)\, dt+\voldrift\,
		dW^{\mu}_t,
	\end{eqnarray}
	defining an $d$-dimensional mean reverting process known as  Ornstein-Uhlenbeck process. Here $\revspeed\in\mathbb R^{\nAktien\times\nAktien}$,
	$\voldrift\in\mathbb R^{\nAktien\times\nWienerDrift}$ and
	$\revlevel\in\mathbb R^{\nAktien} $ are constants such that all eigenvalues of $\revspeed$ have
	a positive real part (that is, $-\revspeed$ is a stable matrix), $\Sigma_{\mu}:=\voldrift\voldrift^{\top}$
	is positive definite, and $W^{\mu}$ is a
	$\nWienerDrift$-dimensional Brownian motion such that
	$\nWienerDrift\geq\nAktien$.    The parameters $\revlevel, \revspeed, \voldrift$ are called mean-reversion level,
	the mean-reversion speed and volatility of $\mu$, respectively. The initial value $\drift_0$ is assumed to be a
	normally distributed random variable independent of $W^{\mu}$ and
	$W^{R}$ with mean $\driftinitial\in \R^{\nAktien}$ and covariance
	matrix $\covinitial\in\mathbb R^{\nAktien\times\nAktien}$ assumed to
	be symmetric and  positive semi-definite.   For the sake of simplification and
	shorter notation we assume  that the Wiener processes $W^{R}$ and
	$W^{\mu}$ driving the return and drift process, respectively, are
	independent. We refer to Brendle \cite{Brendle2006},  Colaneri
	et al. \cite{Colaneri et al (2021)} and  Fouque et al.~\cite{Fouque et al. (2015)} for
	the general case.
	
	\subsection{\it\textbf{ Expert Opinions }}
	\label{Expert_Opinions}  
	We assume that investors do not have a complete access to the market information, they can instead observe the historical data of the return process  $R$ but they neither observe the factor process $\mu$ nor the Brownian motion $W^{R}$. Further, investors are aware of  
	the model
	parameters such as $\volR,\revspeed, \revlevel, \voldrift $  and
	the distribution $\mathcal{N}(\driftinitial,\covinitial)$ of  the
	initial value $\drift_0$. Accordingly, information about the drift $\mu$ can be
	drawn from observing the returns $R$. A special feature of our model
	is that practitioners may also rely on  additional sources of information
	about the drift in form of \textit{expert opinions} such as news,
	company reports, ratings or their own intuitive views on the future
	asset performance. These expert opinions are delivered at a random points in time
	$T_k$, and  provide  noisy signals about
	the current state of the drift. We model these expert opinions  by a marked point process
	$(T_k,Z_k)_k$, so that at $T_k$ the investor observes the
	realization of a random vector $Z_k$ whose distribution depends on
	the current state $\mu_{T_k}$ of the drift process. The arrival
	dates $T_k$ are modelled as jump times of a standard Poisson process
	with intensity $\lambda>0$, independent of $\mu$, so that the timing
	of the information arrival does not carry any useful information
	about the drift.  A generalization to the case of a time-dependent intensity expressed by some deterministic function $\lambda(t) >0$ with $ \int_0^T\lambda(s)ds<\infty $ is straightforward. For ease of exposition we restrict to the constant intensity case.
	Further, for the sake of convenience we also write $T_0:=0$ although no
	expert opinion arrives at time $t=0$.
	
	The signals or ``the expert's views'' at time $T_k$ are modelled by
	$\R^\nAktien$-valued  Gaussian random vectors
	$Z_k=(Z_k^1,\cdots,Z_k^{\nAktien})^{\top}$ with
	\begin{align}
		\label{Expertenmeinungen_fest}
		Z_k=\drift_{T_k}+{\varianceexp}^{\frac{1}{2}}\mathcal{E}_k,\quad
		k=1,2,\ldots,
	\end{align}
	where the matrix  $\varianceexp\in\R^{\nAktien\times\nAktien}$ is
	symmetric and positive definite,  
	$\mathcal{E}_k\sim \mathcal{N}(0,I_d), k\ge 1$ 
	is a sequence of independent
	standard normally distributed random vectors taken to be independent of
	both the Brownian motions $W^R, W^\mu$ and of the initial value $\mu_0$
	of the drift process. Accordingly, that means that, given $\mu_{T_k}$, the expert
	opinion $Z_k$ is $\mathcal{N}(\mu_{T_k},\varianceexp)$-distributed so that
	$Z_k$ can be considered as an unbiased estimate of the unknown
	state of the drift at time $T_k$. The matrix $\varianceexp$ is a
	measure of the expert's reliability. 
	The diagonal entry $\varianceexp_{ii}, i=1,\ldots,d,$ is the variance of the $i$-th component of the expert's
	estimate of the drift at time $T_k$. The larger $\varianceexp_{ii}$ the less
	reliable is that expert's view.
	
	\smallskip
	The above model can be modified such that expert opinions arrive at fixed and known dates   instead of random dates.
	That approach together with results for filtering  and maximization of log-utility was studied in detail in Sass et al.~\cite{Sass et al (2021)}. For maximization of power utility in such a model we refer  Gabih et al.~\cite{Gabih et al (2023) PowerFix}.
	
	One may also allow for relative expert views where experts
	give an estimate for the difference in the drift of two stocks
	instead of absolute views. This extension  can be studied in
	Sch\"ottle et al.~\cite{Schoettle et al. (2010)} where the authors
	show how to switch between these two models for expert opinions by
	means of a pick matrix containing the information which assets are affected by the respective forecasts.

	\subsection{\it \textbf{ Investor Filtration}}
	\label{Investor_Filtration}   
	The information available to
	the investor combining return observations with the discrete-time expert  opinions
	$Z_k$ is described by the \textit{investor filtration}		$\mathbb{F}^{\HC}=(\mathcal{F}^{\HC}_t)_{t\in[0,T]}$ where for $t\in [0,T]$ the  $\sigma$-algebra $\mathcal{F}^{\HC}_t$ is defined
	\[\begin{array}{rl}
		\mathcal {F}_t^{\HC}&=\sigma(R_s, s\le t,\,~(T_k,Z_k),~ T_k\le t) \vee \mathcal{F}_0^I, \\[0.5ex]
	\end{array}
	\]
	where we  assume that the above $\sigma$-algebras $\mathcal{F}_t^{\HC}$	are augmented by the null sets of $\P$. 
	The $\sigma$-algebra $\mathcal{F}_0^I$ models  prior knowledge about the drift process at time $t=0$, e.g., from
	observing  returns  or expert opinions in the past, before the
	trading period $[0,T]$. 	The conditional distribution	of the initial value drift $\mu_0$ given $\mathcal{F}_0^{ I}$ is assumed to be the
	normal distribution $\mathcal{N}(m_0,q_0)$ with mean	$\filterinitial\in \R^{\nAktien}$ and covariance matrix given by the symmetric and  positive semi-definite matrix 
	$\condcovinitial\in\mathbb R^{\nAktien\times\nAktien}$. An  investor with information described by the filtration 	$\mathbb{F}^{\HC}$ will be  called $\HC$-investor.
	For investor filtrations describing alternative information regimes such as investors observing only returns,  combining return observations with the continuous-time expert  opinions, or fully informed investors observing the drift directly we refer  Gabih et al.~\cite{Gabih et al (2023) PowerFix}.

	\subsection{\it \textbf{Portfolio and Optimization Problem}}
	We describe the self-financing trading of an investor by the initial
	capital $x_0>0$ and the  $\mathbb{F}^{\HC}$-adapted trading strategy
	$\pi=(\pi_t)_{t\in[0,T]} $ where $\pi_t\in\R^{\nAktien}$. Here
	$\pi_t^{i}$ represents the proportion of wealth invested in the
	$i$-th stock at time $t$.  The assumption that $\pi$ is
	$\mathbb{F}^{\HC}$-adapted models that investment decisions have to be
	based on the information available to the $\HC$-investor which he obtains
	from observing assets prices combined with expert
	opinions. Following
	the strategy $\pi$   the investor generates a wealth process
	$(X_t^{\pi})_{t\in [0,T]}$ whose dynamics  reads as
	\begin{eqnarray} \label{wealth_phys}
		\frac{dX_t^{\pi}}{X_t^{\pi}}= \pi_t^{\top}dR_t &= &
		\pi_t^{\top}\mu_t\; dt+\pi_t^{\top}\volR\; dW_t^{R},\quad
		X_0^{\pi}=x_0.
	\end{eqnarray}
	We denote by 
	\begin{align}
		\begin{aligned}
			\label{set_admiss_0} \mathcal{A}_0 =\Big\{&\pi= (\pi_t)_{t}  \colon
			\pi_t\in\mathbb R^{\nAktien}, \text{ $\pi$ is $\mathbb{F}^{\HC}$-adapted
			}, X^\pi_t > 0, \E\Big[ \int\nolimits_0^T \|\pi_t\|^2\, dt
			\Big]<\infty, \\
			&  \text{the process } \Radon \text{ defined below in \eqref{Radon} satisfies } \E\big[\Radon_T\big]=1  \Big\}
		\end{aligned}
	\end{align}
	the  class of {\em admissible trading strategies}.  The last condition in the definition of $\mathcal{A}_0$ is  needed to apply  a change of measure technique for the solution of the optimization problem. More  details  are provided below in Sec.~\ref{Utility_Max}.
	
	We assume that
	the investor wants to maximize the expected utility of terminal
	wealth for a given utility function $U : \R_+\rightarrow\R$
	modelling the risk aversion of the investor. In our approach we will
	use  the  power utility function
	\begin{align}
		\label{util_def}
		\utility_{\theta}(x):=\frac{x^{\theta}}{\theta},\quad
		\theta\in(-\infty,0)\cup(0,1).
	\end{align}		
	The optimization problem thus reads as 
	\begin{align}
		\label{opti_org} 
		\mathcal V_0(x_0):=\sup\limits_{\pi\in\mathcal{A}_0}
		\rewardorigin_0(x_0;\pi) \quad \text{where}\quad
		\rewardorigin_0(x_0;\pi) =
		\E\left[\utility_{\theta}(X_T^{\pi})~|~\mathcal{F}^{\HC}_0\right],~\pi\in\mathcal
		A_0,
	\end{align}
	where we call $\rewardorigin_0(x_0;\pi)$ \textit{reward function}
	or \textit{performance criterion} of the strategy $\pi$ and $
	\mathcal V_0(x_0)$ \textit{value function} to given initial
	capital $x_0$.
	This is a maximization problem under partial
	information since we have required that the strategy $\pi$ is
	adapted to the investor filtration $\mathbb F^{\HC}$ while the drift
	coefficient of the wealth equation \eqref{wealth_phys} is not
	$\mathbb F^{\HC}$-adapted, it depends on the non-observable drift
	$\mu$. Note that for $x_0 > 0$ the solution to SDE
	\eqref{wealth_phys} is strictly positive. This guarantees that
	$X_T^{\pi}$ is in the domain of the power utility function $\utility_\theta$.
	
	\subsection{\it \textbf{Well Posedness of the Optimization Problem}}
	\label{Subsec:WellPosedness}		
	A rigorous analysis of the   utility maximization problem
	\eqref{opti_org} requires that it is well-posed in the sense that  $\valueorigin_0^{H}(x_0)$ is finite. This ensures that the maximum expected
	utility  of terminal wealth cannot explode in finite time. The latter is	the case for  so-called nirvana strategies described e.g. in Kim and
	Omberg \cite{Kim and Omberg (1996)} and Angoshtari	\cite{Angoshtari2013}. Nirvana strategies generate in finite time  a
	terminal wealth with a distribution leading to infinite expected 	utility although the realizations of terminal wealth  may be finite.	
	In Gabih et al.~\cite{Gabih et al (2023) PowerFix} the authors  show that for power utility with $\theta < 0$ problem \eqref{opti_org} is always well-posed. For parameters $\theta\in(0,1)$ we give conditions to the model parameters ensuring well posedness. In the following we always assume that these conditions are satisfied.

	\section{Partial Information and Filtering}
	\label{Filtering}  
	The trading decisions of investors are based on
	their knowledge about the drift process $\mu$. While the
	fully informed investor observes the drift directly, the $\HC$-investor has to estimate it. This leads us to a filtering problem
	with hidden signal process $\mu$ and observations given by the
	returns $R$ and  the sequence of expert opinions $(T_k,Z_k)$. The \textit{filter} for
	the drift  $\mu_t$ is its projection on the
	$\mathcal{F}_t^{\HC}$-measurable random variables described by the
	conditional distribution of the drift given $\mathcal{F}_t^{\HC}$. The
	mean-square optimal estimator for the drift at time $t$, given the
	available information is  the \textit{conditional mean}
	$$\Mpro_t:=E[\mu_t|\mathcal{F}_t^{\HC}].$$
	The accuracy of that estimator  can be described by the
	\textit{conditional covariance matrix}
	\begin{align}\nonumber
		\Qpro_t:=E[(\mu_t-\Mpro_t)(\mu_t-\Mpro_t)^{\top}|\mathcal{F}^{\HC}_t].
	\end{align}
	Since in our filtering problem  the signal   $\mu$, the observations
	and the initial value of the filter  are jointly Gaussian also the
	filter distribution is Gaussian and completely characterized by the
	conditional mean $\Mpro_t$ and the conditional covariance
	$\Qpro_t$. 
	
	\medskip
	\paragraph{Dynamics of the filter}
	We now  give the dynamics of the filter  processes $\Mpro$ and 	$\Qpro$. The next lemma provides the filter for the $\HC$-investor who
	combines continuous-time observations of  stock
	returns and expert opinions received at discrete points in time. For a detailed proof we refer to  Lemma 2.3 in \cite{Sass et al
		(2017)} and  Lemma 2.3 in \cite{Sass et al (2022)}.
	\begin{lemma}
		\label{filter_C}For the $\HC$-investor the filter  is Gaussian and the
		conditional distribution of the drift $\mu_t$ given $\mathcal
		F_t^{\HC}$ is the normal  distribution $\mathcal
		N\left(\Mpro_t,\Qpro_t\right)$.
		\ \\[-2ex]
		\begin{enumerate}
			\item[(i)]
			Between two information dates $T_k$ and $T_{k+1}$, $k\in\N_0$,   the
			conditional mean  $\Mpro_t$  satisfies 
			\begin{align}
				\label{filter_C0}
				d\Mpro_t&=\revspeed(\revlevel-\Mpro_t)\;dt+\Qpro_t\,
				\Sigma_R^{-1/2}\;d\widetilde{W}_t \quad\text{for}~~ t\in
				[T_k,T_{k+1}).
			\end{align}
			The  innovation process
			$\widetilde{W}=(\widetilde{W}_t)_{t\in[0,T]}$   given by
			$$d\widetilde{W}_t=\Sigma_R^{-1/2}\left(dR_t-\Mpro_t
			dt\right), \quad \widetilde{W}_0=0,$$  is a standard  Brownian motion adapted to $\mathbb{F}$.\\
			The conditional covariance $\Qpro$ satisfies the ordinary Riccati
			differential equation 
			\begin{align*}
				d\Qpro_t&=(\Sigma_{\mu}-\revspeed
				\Qpro_t-\Qpro_t \revspeed^{\top}-\Qpro_t \Sigma_{R}^{-1}
				\Qpro_t)\; dt.
			\end{align*}
			The initial values are $\Mpro_{T_k} $ and $\Qpro_{T_k}$,
			respectively, with $\Mpro_{0}=\filterinitial$ and~
			$\Qpro_{0}=\condcovinitial$.
			\item[(ii)]
			At the information dates $T_k$,  $k\in\N$, the conditional mean  and
			covariance $\Mpro_{T_k}$ and $\Qpro_{T_k}$ are obtained from the
			corresponding values at time $T_{k^-}$ (before the arrival of the
			view) using the update formulas
			\begin{align}
				\label{filter_C_update}
				\Mpro_{T_k}=\rho_k\Mpro_{T_k-}+(I_d-\rho_k)Z_k
				\quad \text{and}\quad
				\Qpro_{T_k}=\rho_k \Qpro_{T_k-},
			\end{align}
			with the update factor
			$\rho_k=\varianceexp(\Qpro_{T_k-}+\varianceexp)^{-1}$.
		\end{enumerate}
	\end{lemma}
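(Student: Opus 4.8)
The plan is to exploit the fact that the whole system is conditionally linear and Gaussian, and to split the argument into purely continuous filtering on the inter-arrival intervals (part (i)) and the discrete Bayesian update at the jump times (part (ii)). Since the marked point process $(T_k,Z_k)_k$ has arrival times $T_k$ that are independent of the signal $\mu$ and of the Brownian drivers $W^R,W^\mu$, I would first condition on a realization of the arrival times $(T_k)_k$. Given these times, the signal equation \eqref{drift}, the observation equation \eqref{ReturnPro} and the discrete observations \eqref{Expertenmeinungen_fest} form a linear system driven by jointly Gaussian noise with a Gaussian initial condition; hence $\mu_t$ together with all observations is jointly Gaussian and the conditional law of $\mu_t$ given $\mathcal{F}_t^{\HC}$ is normal, i.e. characterized by $\Mpro_t$ and $\Qpro_t$ alone. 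It remains to derive the dynamics of these two quantities, which decouple into a continuous regime between consecutive $T_k$ and an instantaneous update at each $T_k$.

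For part (i), on an interval $[T_k,T_{k+1})$ no discrete information arrives, so the filter is the classical Kalman--Bucy filter for the linear Gaussian signal/observation pair $(\mu,R)$ with coefficients read off \eqref{drift}--\eqref{ReturnPro}: linear signal coefficient $-\revspeed$, signal volatility with $\Sigma_\mu=\voldrift\voldrift^\top$, observation matrix $I_d$ and observation noise covariance $\Sigma_R$. I would first introduce the innovation via $d\widetilde W_t=\Sigma_R^{-1/2}(dR_t-\Mpro_t\,dt)$ and show by Lévy's characterization that it is an $\mathbb F^{\HC}$-Brownian motion: it is a continuous $\mathbb F^{\HC}$-martingale, since the $\mathbb F^{\HC}$-compensator of $dR_t$ is $\Mpro_t\,dt$ by the definition of the conditional mean, and its quadratic variation is $\langle\widetilde W\rangle_t=t\,I_d$ because $d\langle R\rangle_t=\Sigma_R\,dt$. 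The filter equation \eqref{filter_C0} then follows by projecting the signal dynamics onto the observations, and the Riccati equation for $\Qpro$ is obtained by applying Itô's formula to $(\mu_t-\Mpro_t)(\mu_t-\Mpro_t)^\top$, taking conditional expectations and using that the martingale part drops out; the term $\Qpro_t\Sigma_R^{-1}\Qpro_t$ arises from the quadratic variation of the innovation-driven correction. These are standard computations that I would reference from Liptser and Shiryaev \cite{Liptser-Shiryaev} rather than reproduce in full.

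For part (ii), at a jump time $T_k$ the investor additionally observes $Z_k$. Conditionally on $\mathcal F_{T_k-}^{\HC}$ the prior law of $\mu_{T_k}$ is $\mathcal N(\Mpro_{T_k-},\Qpro_{T_k-})$, and by \eqref{Expertenmeinungen_fest} the likelihood of $Z_k$ given $\mu_{T_k}$ is $\mathcal N(\mu_{T_k},\varianceexp)$. The posterior is therefore the Gaussian conjugate update, i.e. the discrete Kalman step with gain $K_k=\Qpro_{T_k-}(\Qpro_{T_k-}+\varianceexp)^{-1}$, giving $\Mpro_{T_k}=\Mpro_{T_k-}+K_k(Z_k-\Mpro_{T_k-})$ and $\Qpro_{T_k}=(I_d-K_k)\Qpro_{T_k-}$. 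The stated formulas \eqref{filter_C_update} then follow from the elementary identity $I_d-K_k=I_d-\Qpro_{T_k-}(\Qpro_{T_k-}+\varianceexp)^{-1}=\varianceexp(\Qpro_{T_k-}+\varianceexp)^{-1}=\rho_k$, which simultaneously yields $\Qpro_{T_k}=\rho_k\Qpro_{T_k-}$ and $\Mpro_{T_k}=(I_d-K_k)\Mpro_{T_k-}+K_kZ_k=\rho_k\Mpro_{T_k-}+(I_d-\rho_k)Z_k$.

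The main obstacle I anticipate is not in either computation separately, but in justifying that the two regimes may be glued together and, in particular, that $\widetilde W$ is a Brownian motion with respect to the full investor filtration $\mathbb F^{\HC}$ rather than only the return-generated filtration. This requires the independence of the Poisson arrival times from the signal, so that observing the \emph{timing} of an expert opinion carries no information about $\mu$ and does not alter the conditional law between jumps, together with a restart argument using the Markov property of the conditionally Gaussian system: at each $T_k$ the updated pair $(\Mpro_{T_k},\Qpro_{T_k})$ serves as a fresh Gaussian initial condition for the Kalman--Bucy recursion on $[T_k,T_{k+1})$. Carrying this out rigorously is precisely the content of the detailed arguments in Sass et al.~\cite{Sass et al (2017),Sass et al (2022)}, to which I would defer for the technical verification.
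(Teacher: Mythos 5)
Your proposal is correct and follows essentially the same route as the paper: the paper gives no proof of this lemma but defers to Lemma 2.3 of Sass et al.~\cite{Sass et al (2017)} and Lemma 2.3 of \cite{Sass et al (2022)}, and your sketch (conditioning on the arrival times, classical Kalman--Bucy filtering with the innovation process between arrivals, and the conjugate Gaussian update with gain $K_k=\Qpro_{T_k-}(\Qpro_{T_k-}+\varianceexp)^{-1}$, hence $I_d-K_k=\rho_k$, at each $T_k$) is precisely the argument carried out in those references, to which you likewise defer for the technical gluing. The computations you do display, including the identity $I_d-K_k=\varianceexp(\Qpro_{T_k-}+\varianceexp)^{-1}$ and the Lévy-characterization argument for $\widetilde W$, are accurate.
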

	Note that the values at an information date	$T_k$ are obtained from a Bayesian update. Further, 
	the dynamics of $\Mpro$ and $\Qpro$ between
	information dates are the same as for the  investor observing only returns but no expert opinions, see Gabih et al. \cite[Lemma 3.1]{Gabih et al (2023) PowerFix}. 
	In that setting the above conditional mean $\Mpro$ is
	a diffusion process and the conditional covariance $\Qpro$ is
	deterministic. Contrary to that,  for the  $\HC$-investor considered in this work  the conditional mean $\Mpro$ is a jump-diffusion process and the conditional covariance $\Qpro$ is no longer deterministic since the updates
	lead to jumps at the random arrival dates  $T_k$ of the expert
	opinions. Hence, $\Qpro$ is a piecewise deterministic stochastic
	process. This process enjoys the following boundedness property.
	
	\begin{proposition}[Sass et al. \cite{Sass et al (2022)}, Lemma 2.4]
		\label{properties_filter}\\  
		There exists a constant $C_{\Qpro}>0 $ such that $\norm{\Qpro_t} \le  C_{\Qpro} $ for all
		$t\in[0,T]$.			
	\end{proposition}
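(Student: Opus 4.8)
The plan is to work entirely in the Loewner (positive semidefinite) order and to dominate the piecewise deterministic process $\Qpro$ pathwise by a deterministic, jump-free reference solution whose boundedness is classical. Writing $A\preceq B$ for ``$B-A$ positive semidefinite'', the argument rests on two monotonicity facts: each Bayesian update can only decrease the conditional covariance, i.e.\ $\Qpro_{T_k}\preceq\Qpro_{T_k-}$, and the matrix Riccati flow governing $\Qpro$ between information dates is order preserving. Granting these, let $\overline{\Qpro}$ denote the solution of the jump-free Riccati equation $\dot{\overline{\Qpro}}_t=\Sigma_\mu-\revspeed\,\overline{\Qpro}_t-\overline{\Qpro}_t\revspeed^\top-\overline{\Qpro}_t\Sigma_R^{-1}\overline{\Qpro}_t$ with $\overline{\Qpro}_0=\condcovinitial$. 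On $[0,T_1)$ one has $\Qpro_t=\overline{\Qpro}_t$; at $T_1$ the update gives $\Qpro_{T_1}\preceq\Qpro_{T_1-}=\overline{\Qpro}_{T_1}$, and order preservation of the Riccati flow then propagates $\Qpro_t\preceq\overline{\Qpro}_t$ over $[T_1,T_2)$. Iterating over the almost surely finitely many arrival times in $[0,T]$ yields $0\preceq\Qpro_t\preceq\overline{\Qpro}_t$ for all $t\in[0,T]$, the lower bound being immediate since $\Qpro_t$ is a covariance matrix.

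For the update step I would compute directly from $\rho_k=\varianceexp(\Qpro_{T_k-}+\varianceexp)^{-1}$ that $\Qpro_{T_k-}-\Qpro_{T_k}=\Qpro_{T_k-}(\Qpro_{T_k-}+\varianceexp)^{-1}\Qpro_{T_k-}$, which is of the form $B^\top M B$ with $B=\Qpro_{T_k-}$ symmetric and $M=(\Qpro_{T_k-}+\varianceexp)^{-1}\succ0$ (recall $\varianceexp\succ0$). Hence $\Qpro_{T_k-}-\Qpro_{T_k}\succeq0$, so $\Qpro_{T_k}\preceq\Qpro_{T_k-}$ and, as a byproduct, $\Qpro_{T_k}$ is symmetric. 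This is the analytic counterpart of the obvious statement that incorporating an additional signal cannot increase the posterior uncertainty.

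For the Riccati comparison, let $P_1,P_2$ be two solutions with $P_1(s)\preceq P_2(s)$ and set $D=P_2-P_1$, $S=P_1+P_2$. Using the identity $P_2\Sigma_R^{-1}P_2-P_1\Sigma_R^{-1}P_1=\tfrac12(S\Sigma_R^{-1}D+D\Sigma_R^{-1}S)$, the difference of the two Riccati equations becomes the linear Lyapunov-type equation $\dot D=A(t)D+DA(t)^\top$ with $A(t)=-(\revspeed+\tfrac12 S(t)\Sigma_R^{-1})$. Its solution is $D_t=\Phi(t)D_s\Phi(t)^\top$ for the fundamental matrix solving $\dot\Phi=A\Phi$, $\Phi(s)=I_d$, so $D_s\succeq0$ forces $D_t\succeq0$, i.e.\ $P_1(t)\preceq P_2(t)$ for $t\ge s$. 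I expect this to be the main obstacle: the naive linearization produces a non-symmetric coefficient and does not manifestly preserve semidefiniteness, and it is precisely the symmetrization via $S$ that makes the flow comparison work.

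It remains to bound $\overline{\Qpro}$. I would compare it with the unconditional covariance $V_t:=\var(\mu_t)$, the solution of the Lyapunov equation $\dot V_t=\Sigma_\mu-\revspeed V_t-V_t\revspeed^\top$ with $V_0=\condcovinitial$. Subtracting the two equations, $V-\overline{\Qpro}$ solves a Lyapunov equation with vanishing initial value and nonnegative source $\overline{\Qpro}\Sigma_R^{-1}\overline{\Qpro}\succeq0$ (the quadratic damping term dropped from the Riccati equation), so variation of constants with fundamental matrix $e^{-\revspeed(t-r)}$ gives $\overline{\Qpro}_t\preceq V_t$. Since $-\revspeed$ is stable, $V_t$ stays bounded uniformly on $[0,\infty)$, in particular on $[0,T]$. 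Chaining the estimates gives $0\preceq\Qpro_t\preceq V_t\preceq cI_d$ with $c=\sup_{t\ge0}\norm{V_t}$; transferring this Loewner bound to the row-sum norm of the statement (the norms being equivalent up to a dimensional factor on $\R^{d\times d}$) yields the claim with a suitable constant $C_{\Qpro}$, uniform in the number of arriving expert opinions.
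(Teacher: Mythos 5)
Your proof is correct, but there is nothing in the paper to compare it against: the paper does not prove this proposition at all, it simply imports it from Sass et al.\ (2022), Lemma 2.4, as the bracketed attribution indicates. So what you have produced is a self-contained proof of a result the authors only cite, and it follows what is essentially the standard argument for this kind of statement. The individual steps all check out: the update identity $\Qpro_{T_k-}-\Qpro_{T_k}=\Qpro_{T_k-}(\Qpro_{T_k-}+\varianceexp)^{-1}\Qpro_{T_k-}\succeq 0$ is a correct computation (using $I_d-\varianceexp(\Qpro_{T_k-}+\varianceexp)^{-1}=\Qpro_{T_k-}(\Qpro_{T_k-}+\varianceexp)^{-1}$), and it simultaneously gives monotonicity and symmetry of the updated covariance; the symmetrized linearization of the Riccati difference, $\dot D=A(t)D+DA(t)^\top$ with $A(t)=-\big(\revspeed+\tfrac12 S(t)\Sigma_R^{-1}\big)$, is exactly the right device, since $A(t)^\top$ matches the right factor precisely because $S$ and $\Sigma_R^{-1}$ are symmetric, and the transition-matrix representation $D_t=\Phi(t)D_s\Phi(t)^\top$ then preserves positive semidefiniteness; the Lyapunov domination $\overline{\Qpro}_t\preceq V_t$ via variation of constants with kernel $e^{-\revspeed(t-r)}(\cdot)e^{-\revspeed^\top(t-r)}$ is sound; and stability of $-\revspeed$ gives a bound uniform even on $[0,\infty)$, which is stronger than needed. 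Passing from the Loewner bound $0\preceq \Qpro_t\preceq cI_d$ to the row-sum norm costs only a dimensional factor, as you say.

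Two cosmetic points you should fix in a final write-up. First, the comparison matrix you call $V_t:=\var(\mu_t)$ is the unconditional covariance only if $\condcovinitial=\covinitial$; what your argument actually uses (and all it needs) is the solution of the Lyapunov equation started at $\condcovinitial$, so define it that way rather than as $\var(\mu_t)$. Second, make explicit that the induction over arrival times is a pathwise argument valid on the event, of full probability, that the Poisson process has finitely many jumps in $[0,T]$, and that the resulting constant $C_{\Qpro}=c\,d$ is deterministic and independent of the realized arrival times — this uniformity over paths is precisely what the proposition asserts and what the rest of the paper (e.g.\ the boundedness of the state component $\YQ$) relies on.
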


	\medskip
	\paragraph{Unified representation of filter dynamics}
	For an effecient treatment  of the utility maximization problems in Sec.~\ref{Utility_Max} it will be helpful to  express the dynamics of $\Mpro$ and $\Qpro$ given in Lemma \ref{filter_C} in a way that comprises both the behavior between information dates and the jumps at times $T_k$. 
	For this we need the following  result on the conditional distribution of the expert's opinion $Z_k$
	given the available information before the information date $T_k$. 
	\begin{lemma}[Kondakji \cite{Kondkaji (2019)}, Lemma 3.1.6]
		\label{bedingte_Verteilung_Z_lem} \\The conditional distribution of
		the expert opinions $Z_k$ given $\mathcal F_{T_k-}^\HC$ is the
		multivariate normal distribution  $\mathcal
		N\left(\Mpro_{T_{k-}},\varianceexp+\Qpro_{T_k-}\;\right)$,
		$k=1,2,\ldots$.
	\end{lemma}
	Next, we  work with a Poisson random measure as in Cont and Tankov~\cite[Sec.~2.6]{cont_tankov_2004}.
	Let $E=[0,T]\times\R^d$ and let $ \mathcal{E}_k$, $k=1,2,\dots$, be a sequence of independent multivariate standard Gaussian random variables on $\R^d$. For any  Borel set  $I \subset [0,T]$ and $B\subset \R^d$ let
	\[ N(I\times B)=\sum_{k\colon T_k\in I} \one_{\{ \mathcal{E}_k\in B\}} \]
	denote the number of jump times in $I$ for which  $ \mathcal{E}_k$ takes a value in $B$. Then  $N$ defines a Poisson random measure with a corresponding compensated measure $\widetilde{N}^\lambda(ds,du)=N(ds,du)-\lambda\,ds\,\varphi(u)\,du$, where $\varphi$ is the multivariate standard normal density on $\R^d$, see Cont and Tankov~\cite[Sec.~2.6.3]{cont_tankov_2004}.

	The next lemma rewrites
	the dynamics of  the filter processes  $\Mpro$ and $\Qpro$
	given in Lemma \ref{filter_C} and provides a semi-martingale
	representation  which is driven by the two martingales
	$\widetilde{W}$ and $\komppoi$.  
	For a detailed proof and further explanations we refer to Westphal~\cite[Prop.~8.14]{westphal_2019} and Kondakji \cite[Sec.~3.1]{Kondkaji (2019)}. 
	\begin{lemma}
		\label{Filter_C_Int} The dynamics of the conditional
		mean $\Mpro$ and the conditional covariance matrix $\Qpro$,
		respectively, are given by  
		\begin{align}
			\label{Filter_CN_3}
			d\Mpro_t&=\revspeed(\revlevel-\Mpro_t)\,dt
			+\betam(\Qpro_t)\,d\widetilde{W}_t
			+\int\nolimits_{\mathbb R^{\nAktien}} \gammam(\Qpro_{t-},u)\; \komppoi(dt,du),\\
			\label{Riccati_CN_3} 
			d\Qpro_t&=\alphaqq(\Qpro_{t})\,
			dt+\int\nolimits_{\mathbb R^{\nAktien}} \gammaq(\Qpro_{t-})\;
			\komppoi(dt,du),~~~
		\end{align}
		with initial values $\Mpro_0=\filterinitial$,	$\Qpro_0=\condcovinitial~$   where for $u\in\R^d$ and $q\in \R^{d\times d}$	
		\begin{align}
			\label{coeff_MQ_def}  
			\begin{array}{rlrl}				
				\betam(\variance)& =\variance\Sigma_{\HR}^{-1/2},  & 
				\gammam(\variance,u)& = \phantom{-}\variance\left(\variance+\varianceexp\right)^{-\frac{1}{2}}u,
				\\[1ex]
				\alphaqq(\variance) &= \Sigma_{\mu}-\revspeed \variance -\variance\revspeed^{\top}-\variance \Sigma_{R}^{-1} \variance
				+ \lambda\gammaq(\variance), \hspace*{1em} &				 
				\gammaq(\variance)  &=-\variance\left(\variance+\varianceexp\right)^{-1}\variance.
			\end{array}				
		\end{align}	
	\end{lemma}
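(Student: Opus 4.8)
The plan is to verify that the stated semimartingale representation simultaneously reproduces the between-jump dynamics of Lemma~\ref{filter_C}(i) and the Bayesian update jumps of Lemma~\ref{filter_C}(ii), the only genuinely new ingredient being the reformulation of the discrete updates as integrals against the Poisson random measure $N$. First I would compute the jump heights at an information date $T_k$. Using $\rho_k=\varianceexp(\Qpro_{T_k-}+\varianceexp)^{-1}$ together with the elementary identity $I_d-\rho_k=\Qpro_{T_k-}(\Qpro_{T_k-}+\varianceexp)^{-1}$, the covariance update $\Qpro_{T_k}=\rho_k\Qpro_{T_k-}$ yields $\Delta\Qpro_{T_k}=(\rho_k-I_d)\Qpro_{T_k-}=-\Qpro_{T_k-}(\Qpro_{T_k-}+\varianceexp)^{-1}\Qpro_{T_k-}=\gammaq(\Qpro_{T_k-})$, a quantity depending only on $\Qpro_{T_k-}$ and not on the expert's signal. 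The mean update $\Mpro_{T_k}=\rho_k\Mpro_{T_k-}+(I_d-\rho_k)Z_k$ gives $\Delta\Mpro_{T_k}=(I_d-\rho_k)(Z_k-\Mpro_{T_k-})=\Qpro_{T_k-}(\Qpro_{T_k-}+\varianceexp)^{-1}(Z_k-\Mpro_{T_k-})$.

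The crucial step is to exhibit a standard Gaussian mark that turns the mean jump into $\gammam(\Qpro_{T_k-},\cdot)$. To this end I would define the standardized innovation $\mathcal E_k:=(\Qpro_{T_k-}+\varianceexp)^{-1/2}(Z_k-\Mpro_{T_k-})$. By Lemma~\ref{bedingte_Verteilung_Z_lem}, conditionally on $\mathcal F_{T_k-}^{\HC}$ the vector $Z_k-\Mpro_{T_k-}$ is $\mathcal N(0,\Qpro_{T_k-}+\varianceexp)$-distributed, so that $\mathcal E_k$ is conditionally $\mathcal N(0,I_d)$. Since this conditional law is free of $\mathcal F_{T_k-}^{\HC}$, each $\mathcal E_k$ is independent of the past --- in particular of the arrival times $T_j$ and of the earlier marks --- and an induction shows that $(\mathcal E_k)_{k\ge1}$ is an i.i.d.\ standard Gaussian sequence independent of the Poisson arrival process. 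Substituting $Z_k-\Mpro_{T_k-}=(\Qpro_{T_k-}+\varianceexp)^{1/2}\mathcal E_k$ into the expression for $\Delta\Mpro_{T_k}$ gives $\Delta\Mpro_{T_k}=\Qpro_{T_k-}(\Qpro_{T_k-}+\varianceexp)^{-1/2}\mathcal E_k=\gammam(\Qpro_{T_k-},\mathcal E_k)$, and the marked point process $(T_k,\mathcal E_k)_k$ is precisely the one generating the Poisson random measure $N$ with compensator $\lambda\,dt\,\varphi(u)\,du$.

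It then remains to assemble the representation. The accumulated jumps of $\Qpro$ and $\Mpro$ are $\int_{\R^{\nAktien}}\gammaq(\Qpro_{t-})\,N(dt,du)$ and $\int_{\R^{\nAktien}}\gammam(\Qpro_{t-},u)\,N(dt,du)$, and to these I add the drift and diffusion terms valid between information dates from Lemma~\ref{filter_C}(i). To pass from $N$ to the compensated measure $\komppoi=N-\lambda\,dt\,\varphi(u)\,du$, I subtract and re-add the compensator. For the covariance the integrand $\gammaq(\Qpro_{t-})$ is constant in $u$, so its compensator is $\lambda\gammaq(\Qpro_{t-})\,dt$; absorbing it into the drift turns $\Sigma_\mu-\revspeed\Qpro_t-\Qpro_t\revspeed^{\top}-\Qpro_t\Sigma_R^{-1}\Qpro_t$ into $\alphaqq(\Qpro_t)$ and delivers \eqref{Riccati_CN_3}. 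For the mean the decisive observation is that $\gammam(\variance,u)$ is linear in $u$ with $\int_{\R^{\nAktien}}u\,\varphi(u)\,du=0$, so the compensator of the jump integral vanishes; consequently the drift $\revspeed(\revlevel-\Mpro_t)$ is left untouched and the jumps may be written directly against $\komppoi$, which is \eqref{Filter_CN_3}.

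I expect the main obstacle to be not the matrix algebra but the probabilistic identification in the second step: one must check rigorously that the standardized innovations $(\mathcal E_k)$ are i.i.d.\ $\mathcal N(0,I_d)$ and jointly independent of the arrival times, so that $(T_k,\mathcal E_k)_k$ indeed defines a Poisson random measure with the stated intensity $\lambda\,\varphi(u)\,du$. This is exactly the point at which Lemma~\ref{bedingte_Verteilung_Z_lem} is indispensable, and it is also what makes the two compensation computations in the final step legitimate; the remaining manipulations are routine bookkeeping.
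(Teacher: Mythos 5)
Your proof is correct, and it is essentially the argument the paper delegates to its references: the paper proves nothing in-text for Lemma~\ref{Filter_C_Int}, deferring instead to Westphal~\cite[Prop.~8.14]{westphal_2019} and Kondakji~\cite[Sec.~3.1]{Kondkaji (2019)}, and your three steps (jump heights computed from the updates \eqref{filter_C_update}, identification of the marks as the standardized innovations $\mathcal{E}_k=(\Qpro_{T_k-}+\varianceexp)^{-1/2}(Z_k-\Mpro_{T_k-})$ via Lemma~\ref{bedingte_Verteilung_Z_lem}, and compensation, where the $\Mpro$-compensator vanishes because $\gammam(\variance,\cdot)$ is linear with $\int_{\R^d} u\,\varphi(u)\,du=0$ while the $\Qpro$-compensator produces the $\lambda\gammaq$ term in $\alphaqq$) are exactly what those references supply. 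You also correctly pinpoint the one subtlety the paper's notation glosses over: the marks generating the Poisson random measure $N$ in Sec.~\ref{Filtering} must be these standardized innovations, not the expert-noise vectors of \eqref{Expertenmeinungen_fest}, even though the paper reuses the symbol $\mathcal{E}_k$ for both.
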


	\section{Utility Maximization}
	\label{Utility_Max}
	In this section  we  investigate optimization problem  \eqref{opti_org}, i.e.~the maximization the expected power utility of the $\HC$-investor's terminal wealth. 	We will apply a change of measure technique which was already used
	among others by Nagai and Peng \cite{Nagai and Peng (2002)} and Davis and Lleo \cite{Davis and Lleo (2013_1)}. This allows to  rewrite the performance criterion and to decompose the dependence on the wealth and the filter processes. Based on this decomposition we will derive a simplified stochastic optimal control problem with a risk-sensitive performance criterion. Here, the state variables are reduced  to the  filter processes of conditional mean  and covariance whereas the wealth process is no longer contained. 				
	This problem can be treated as a stochastic optimal control problem and solved using dynamic programming methods. 
	
	\paragraph{Performance criterion} 
	Recall  equation
	\eqref{wealth_phys} for the wealth process $\wealth^{\pi}$ saying that ${dX_t^{\pi}}/{X_t^{\pi}}=
	\pi_t^{\top}dR_t$. Rewriting SDE \eqref{ReturnPro} for the return process $R$  in terms  of the
	innovations process $\widetilde{W}$ given in
	Lemma \ref{filter_C} we obtain 
	the $\mathbb F^{\HC}$-semi-martingale decomposition of $\wealth^{\pi}$
	(see also Lakner \cite{Lakner (1998)}, Sass Haussmann
	\cite{Sass and Haussmann (2004)})  
	\begin{align}
		\label{wealth-innovation}
		\frac{d\wealth_t^{\pi}}{\wealth_t^{\pi}}=\pi^{\top}_td\Mpro_t+\pi_t\sigma_{\wealth}\;d\widetilde{W}_t,\quad\wealth_0^{\pi}=x_0,
	\end{align}
	where $\sigma_{\wealth}= \Sigma_R^{1/2}$.
	From the above wealth equation we obtain that for a given admissible
	strategy  $\pi$ the power  utility of terminal wealth $\wealth_T^{\pi}$ is
	given by
	\begin{align}
		\label{Power_Nutzen_zerlegung}
		\utility_{\theta}(\wealth_T^{\pi})&=\frac{1}{\theta}(\wealth_T^{\pi})^{\theta}
		=\frac{x_0^{\theta}}{\theta} \Radon_T\exp\Big\{\int\nolimits_0^T
		{b}(\Mpro_s,\pi_s)ds \Big\}
	\end{align}
	where  for $\filter,p\in\R^d$
	\begin{align}
		\label{b_underline}
		{b}(\filter,\pointp)&=\theta\Big(\pointp^{\top}\filter-\frac{1-\theta}{2} \|\sigma_{\wealth}\pointp\|^2 \Big) 
		\quad\text{and}\\
		\label{Radon}
		\Radon_{t}&=\exp\Big\{ \theta\int\nolimits_0^t \pi_s^{\top}\sigma_{\wealth} \;d\widetilde{W}_s 
		-\frac{1}{2}\theta^2\int\nolimits_0^T \|\sigma_{\wealth}\pi_s \|^2  \;ds \Big\}, \quad  t\in[0,T].
	\end{align}
	The above process $\Radon$ is known as  Doléans-Dade exponential. It is  a martingale if 
	the admissible strategy $\pi$ is such that  $\E[\Radon_T]=1$.  Then 
	we can define an equivalent probability measure $\overline{\P}$
	by $\Radon_T=\frac{d\overline{\P}}{d\P~}$ for which Girsanov's
	theorem
	guarantees that the process
	$\overline{W}=(\overline{W}_t)_{t\in[0,T]}$ with			
	\begin{align}
		\label{Girsanov_W_prozess}
		\overline{W}_t=\widetilde{W}_t-\theta\int_0^t
		\sigma_{\wealth}\pi_s \; ds,\quad t\in[0,T],
	\end{align}
	is a $(\mathbb F^{\HC},\overline{\P})-$standard Brownian motion.
	This change of measure allows to rewrite the performance criterion $\rewardorigin_0(x_0;\pi) =
	\E\left[\utility_{\theta}(X_T^{\pi})~|~\mathcal{F}^{\HC}_0\right]$ of the  utility maximization problem \eqref{opti_org} for $\pi\in\mathcal
	A_0$ as
	\begin{align}\nonumber		
		\rewardorigin_0(x_0;\pi)&= \frac{x_0^{\theta}}{\theta} \;
		\E\Big[  \Radon_T\exp\Big\{\int\nolimits_0^T {b}(\Mpro_s^{\filterinitial,\condcovinitial},\pi_s)ds \Big\}\Big]\\
		\label{ExpUtility_RiskSensitive}
		&=\frac{x_0^{\theta}}{\theta} \;\E^{\overline{\P}}\Big[\exp\Big\{\int\nolimits_0^T {b}(\Mpro_s^{\filterinitial,\condcovinitial},\pi_s)ds \Big\}\Big].
	\end{align}
	The above used notation $\Mpro^{\filterinitial,\condcovinitial}$ emphasizes the dependence of the filter process $\Mpro$ on the initial values $\filterinitial,\condcovinitial$ at time $t=0$  and reflects the conditioning in $\E\left[\utility_{\theta}(X_T^{\pi})~|~\mathcal{F}^{\HC}_0\right]$ on the initial information given by the $\sigma$-algebra $\mathcal{F}^{\HC}_0$. 
	It turns out that the utility maximization problem \eqref{opti_org}
	is equivalent to the maximization of 
	\begin{align}\label{opti-equivalent}
		\Jpi(\filterinitial,\condcovinitial;\pi)=\E^{\overline{\P}}\Big[\exp\Big\{\int\nolimits_0^T
		{b}(\Mpro_s^{\filterinitial,\condcovinitial},\pi_s)ds \Big\}\Big]
	\end{align}
	over all admissible strategies for $\theta\in(0, 1)$ while for
	$\theta< 0$ the above expectation has to be minimized.   In the literature \eqref{opti-equivalent} is often called risk-sensitive performance criterion,  e.g. in  Davis and Lleo \cite{Davis and Lleo (2013_2)}, Nagai and Peng \cite{Nagai and Peng	(2002)}. Note that it holds $\rewardorigin_0(x_0;\pi)=\frac{1}{\theta}x_0^{\theta}\, \Jpi(\filterinitial,\condcovinitial;\pi)$.
	This allows us to remove the wealth process $X$ from the state of the control problem which we formulate next.

	\paragraph{State process}  Recall the $\P$-dynamics of the filter processes
	$\Mpro$ and $\Qpro$   given in Lemma \ref{Filter_C_Int}. Using \eqref{Girsanov_W_prozess} the dynamics under $\overline{\P}$  are obtained by expressing $\widetilde{W}$ in terms of $\overline{W}$ and leads to  the SDEs
	\begin{align}
		\label{Filter_M_int_Ph}
		d\Mpro_t& =\alphamm(\Mpro_t,\Qpro_t,\pi_t)\, dt 
		+\betam(\Qpro_{t})\,d\overline{W}_t
		+\int\nolimits_{\mathbb R^{\nAktien}} \gammam(\Qpro_{t-},u)\; \komppoi(dt,du),	\\
		\label{Filter_Q_int_Ph} 
		d\Qpro_t&=\alphaqq(\Qpro_{t})\,dt
		~~+~\int\nolimits_{\mathbb R^{\nAktien}} \gammaq(\Qpro_{t-})\;	\komppoi(dt,du),
	\end{align}
	with initial values $\Mpro_0=\filterinitial$ and $\Qpro_0=\condcovinitial$ where  for $m,p\in\R^d$ and $q\in \R^{d\times d}$ 
	\begin{align}
		\label{alphamm_def}
		\alphamm(\filter,\variance,\pointp)=\revspeed(\revlevel-\filter)+ \theta\variance\pointp,
	\end{align}
	and $\alphaqq, \betam,\gammam,\gammaq$ are given in \eqref{coeff_MQ_def}.  
	Note that the drift coefficient $\alphamm$ in the SDE \eqref{Filter_M_int_Ph} for $\Mpro$ now depends also on the conditional covariance $\Qpro$ as well as on the strategy $\pi$. The dynamics of the conditional covariance $\Qpro$  is not affected by the change of measure, so SDE  \eqref{Filter_Q_int_Ph} coincides with SDE \eqref{Riccati_CN_3}.

	For the  treatment of the optimization of $\Jpi(\filterinitial,\condcovinitial;\pi)$ given in \eqref{opti-equivalent} as a stochastic optimal control problem we have to take into account that the dynamics $\Mpro$ appearing in the performance criterion depends via its diffusion and jump coefficients in \eqref{Filter_M_int_Ph} on the conditional covariance $\Qpro$.  Although the dynamics of $\Qpro$ is  deterministic  between the arrival dates $T_1,T_2,\ldots $ of the expert opinions,  these dates defining the jump times are random. Therefore,  $\Qpro$ is a (piecewise deterministic) stochastic process and  has  to be included in the state of the stochastic control problem. 
	Thus,  the state variable now  formally reads as $(\Mpro,\Qpro)$ and takes values in  $\R^{d+d ^2}$.
	The  first state component $\Mpro$ is a $\nAktien$-dimensional vector while  the  second component $\Qpro$
	is a $\nAktien\times\nAktien$-matrix.

	In order to apply results from the theory of controlled jump-diffusions which are available for vector-valued state processes we recast the SDE for  the conditional covariance matrix  $\Qpro$ which is known to be symmetric. For such matrices it is enough to know the  entries on and below  the diagonal of $\Qpro$ which we want to collect in a  vector $\YQ$ of dimension $\nQ=d(d+1)/2$. 	
	We  define the mapping  $\restri:\mathbb R^{\nAktien\times\nAktien}\rightarrow\mathbb R^{\nQ},  ~q \mapsto \restri(q)=g$ restricting the $d\times d$ symmetric matrix $q$ to the $\nQ$-dimensional vector $g$ by 
	\begin{align}
		g:=\restri(q)=\big(q^{11}, q^{21},q^{22},q^{31},\ldots,q^{\nAktien \nAktien}  \big)^{\top},
		\label{transf_Y_QQ}
	\end{align}
	meaning that the matrix entry $q^{ij}~$ with $~ i\geq j ~$ is mapped to the entry of the vector $g$ at the position $k=K(i,j)=\frac{1}{2}i(i-1)+j$.			
	Conversely, we can reconstruct the symmetric matrix $q$ from the vector $g$ using  the inverse mapping $\restri^{-1}:\mathbb R^{\nQ}\rightarrow\mathbb R^{\nAktien\times\nAktien}, ~g \mapsto \restri^{-1}(g)=q$ which can be  defined element-wise $\text{for } k=1,\ldots, \nQ$ as		
	\begin{equation}
		\label{ruecktransform_Q}
		\begin{split}	
			g^k \mapsto q^{ij}=q^{ji} \quad  \quad \text{with}\quad 
			i&=\max\big\{ l\in\{1,\ldots,\nAktien\},~ K(l,1)\leq k \big\},	~~								
			j=k-K(i,1)+1. 
		\end{split}
	\end{equation}				
	The above mapping allows us to restrict the process  $(\Mpro,\Qpro)$ of dimension $d+d^2$  with the matrix-valued second component $\Qpro$ to  the vector-valued process  $\stateprocess= {\Mpro \choose \YQ}$ of dimension  $\nZustand=\nAktien+\nQ=d(d+3)/2$ which in the following will serve  as state process. The associated  state space is denoted  by $\Statespace^{\stateprocess}\subset \mathbb R^{\nZustand}$.   It can be decomposed as $\Statespace^{\stateprocess}=\mathcal S^{\Mpro}\times\mathcal S^{\YQ}$ with the domain  $\mathcal S^{\Mpro}=\mathbb R^{\nAktien}$ of the first component $\Mpro$, and the domain 	$\mathcal{S}^{\YQ}  \subset \mathbb R^{\nQ}$ of second component $\YQ$.  Note that $\mathcal{S}^{\YQ}$ can be chosen as a bounded subset of $\mathbb R^{\nQ}$ since Prop.~\ref{properties_filter} on the boundedness of the conditional covariance  $\Qpro$ implies that there is a constant $K_G>0$ such that  $\mathcal{S}^{\YQ} \subset \{\zsmall \in \R^{\nQ}: \|g\|\le K_G\}$.  In order to present the SDEs governing the dynamics of the vector process $\YQ$ and the state process $\stateprocess$   we redefine the drift and jump coefficients  in SDE \eqref{Filter_Q_int_Ph} in terms of the vector $\YQ$ instead of the matrix  $\Qpro$ using the mapping $\variance=\mathcal R^{-1}(g)$  as follows						
	\begin{align}\nonumber
		\alphaYQ(\zsmall):=\restri(\alphaqq(\variance)), \quad
		\gammaYQ(\zsmall):=\restri(\gammaq(\variance)). 			
	\end{align}
	Then, the dynamics of the restricted state process $\stateprocess$  for a fixed  admissible strategy $\pi\in\mathcal
	A_0$  are given by 
	\begin{align}
		\label{State_SDE}
		d\stateprocess_t& =\alphay(\stateprocess_t,\pi_t) \;dt+\betay(\stateprocess_t)\;  dW_t+\int\nolimits_{\mathbb R^{\nAktien}} \gammay(\stateprocess_{t-},u) \komppoi(dt,du),~~
		\stateprocess_0 ={\filterinitial \choose\restri(\condcovinitial)},			
	\end{align}
	where the coefficients $\alphay$, $\betay$ and $\gammay$
	are defined by
	\begin{align}
		\alphay(y,p):=\begin{pmatrix} \alphamm(m,\variance,p)\\  \alphaYQ(\zsmall)  \end{pmatrix},~~
		\betay(y):=\begin{pmatrix} \betam(\variance) \\  0_{\nQ \times \nAktien} \end{pmatrix},~~
		\gammay(y,u):=\begin{pmatrix} \gammam(\variance,u) \\ \gammaYQ(\zsmall)  \end{pmatrix},\label{coeffs_rest_state}
	\end{align}
	with $y= {m\choose \zsmall}=\mathbb R^{\nZustand}$ and $\variance=\mathcal{R}^{-1}(\zsmall)$.

	For expressing the	performance criterion given in \eqref{opti-equivalent} in terms of
	the state variable $Y$, we rewrite with some abuse of notation  the function $b$	defined in \eqref{b_underline} for $	y={\filter \choose g}$ as			
	\begin{align}
		\label{function_b}
		b(y,p)=b((\filter,g),p):=\theta\Big(\pointp^{\top}\filter-\frac{1-\theta}{2}\|\sigma_{\wealth}\pointp\|^2\Big).
	\end{align}
	Then the performance criterion from \eqref{opti-equivalent} for a
	given strategy $\pi$ reads as
	\begin{align}\label{opti-equivalent-b}
		\Jpi(\pi):=\E^{\overline{\P}}\Big[  \exp\Big\{\int\nolimits_0^T
		b(\stateprocess_s^{ y_0},\pi_s)ds \Big\}\Big],
	\end{align}
	where $\stateprocess^{y_0}$ denotes the state process starting at time $t=0$ with the initial value $y_0={\filterinitial \choose \zsmall_0}$ with $\zsmall_0=\restri(\condcovinitial)$.
	
	\medskip
	\paragraph{Markov Strategies}
	For applying the dynamic programming approach to the optimization
	problem \eqref{opti-equivalent-b} the state process $\stateprocess$
	needs to be a Markov process adapted to $\mathbb F^{\HC}$. To allow
	for this situation we restrict the set of admissible strategies to
	those of Markov type which are defined in terms of the state process
	$\stateprocess$  by a  decision rule $\Pi$ via	$\pi_t=\Pi(t,\stateprocess_t)$ for  some given measurable
	function $\Pi : [0,T]\times\Statespace^{\stateprocess}\rightarrow
	\mathbb{R}^{\nAktien}$. 
	Below  we will need some
	technical conditions on $\Pi$ which we collect in the following
	\begin{assumption}\label{admi_stra_rule}\hspace{20cm}
		\begin{enumerate}
			\item {\textbf{Lipschitz condition and linear growth condition}   \\
				There exists  constants $C_M,C_G>0$ such that for all $y, y_1,
				y_2\in\Statespace^{\stateprocess}$ and all $t\in[0,T]$ it holds
				\begin{align}\label{Lipschitz_rule}
					\|\Pi(t,y_1)-\Pi(t,y_2)\|& \leq C_M\|y_1-y_2\|, \\[0.5ex]
					\label{Linear-growth-rule}
					\|\Pi(t,y)\|& \leq C_G(1+\|y\|).
			\end{align}}
			\ \\[-6ex]
			\item \textbf{Integrability condition} \\ 
			The strategy processes $\pi$ defined by $\pi_t=\Pi(t,\Mpro_t)$ on $[0,T]$  are such that the  process $\Radon$ defined by \eqref{Radon} satisfies 	$\E[\Radon_T]=1$.

		\end{enumerate}
	\end{assumption}
	We denote by  
	\begin{align}
		\label{set_admiss_Markov} 
		\hspace{-0.3cm}\mathcal{A}:=\big\{\Pi:[0,T]\times \Statespace^{\stateprocess}\to\R^d: \Pi \text{ is a measurable function satisfying Ass.~\ref{admi_stra_rule}}\big\}
	\end{align}
	the \textit{set of admissible decision rules}.
	\begin{remark}
		\label{remark on decision rule} 			
		The integrability  condition 
		guarantees that the Radon-Nikodym density 	process $\Radon$ given in \eqref{b_underline} is 
		a martingale, hence	the equivalent measure	$\overline{\P}$ is well-defined.
		A Markov strategy $\pi=(\pi_t)_{t\in[0,T]}$ with
		$\pi_t=\Pi(t,\Mpro_t)$ defined by an admissible decision rule
		$\Pi$ is contained in the set of admissible strategies $\mathcal
		{A}_0$ given in \eqref{set_admiss_0} since by construction it is
		$\mathbb{F}^{\HC}$-adapted, the positivity of the wealth process
		$\wealth^{\pi}$ follows from \eqref{Power_Nutzen_zerlegung}. The integrability condition implies the
		square-integrability of $\pi$.  Finally, the Lipschitz and linear growth condition ensure that SDE \eqref{State_SDE} for the dynamics for the controlled state process  enjoys for all admissible strategies a unique strong solution.
	\end{remark}

	\textbf{Control problem}. We are now ready to formulate the
	stochastic optimal control problem with the state process $\stateprocess$ and a Markov control defined by the decision rule
	$\Pi$. Recall that the dynamics of the state process is now given in \eqref{State_SDE} with initial value
	$\stateprocess_0=  {\filterinitial \choose  \zsmall_0}$ with {$\zsmall_0=\restri(\condcovinitial)$}.  We now vary the initial condition and start the processes at time $t\in[0,T]$ with initial value $\stateprocess_t=y= {m \choose g}$, i.e.,  $\Mpro_t=m$ and $\YQ_t=g$.   Let us denote  the state process at time
	$s\in[0,T]$ to a fixed decision rule $\Pi$ by $\stateprocess_s^{ \Pi,t,y}$ and its components by $\Mpro_s^{  \Pi,t,y}$ and
	$\YQ_s^{ \Pi,t,y}$. Note that $\YQ^{ \Pi,t,y}$ only
	depends on  the initial value $ \zsmall$ but does not  depend on $m$ and $\Pi$.
	
	For solving the control problem \eqref{opti-equivalent-b} we will apply the
	dynamic programming approach which requires the introduction of the
	following reward and value functions. For all $t\in[0,T]$ and
	$\Pi\in\mathcal {A}$ the associated reward function of the control
	problem \eqref{opti-equivalent-b} reads as
	\begin{align}
		\label{Zielfkt_H} \reward(t,y;\Pi):&= \E^{\overline{\P}}\Big[  \exp\Big\{
		\int\nolimits_t^Tb(\stateprocess_s^{   \Pi,t,y},\Pi(s,\stateprocess_s^{  \Pi,t,y}))ds
		\Big\}\Big],\quad\text{for}\quad \Pi\in\mathcal{A},
	\end{align}
	while the associated value function reads as
	\begin{align}
		\label{Wertfkt_H}
		\valuefkt(t,y):&=\begin{cases}\sup\limits_{\Pi\in\mathcal A}\reward(t,y;\Pi),&\quad \theta\in(0,1), \\[2ex] \inf\limits_{\Pi\in\mathcal A}\reward(t,y;\Pi),&\quad \theta\in(-\infty,0), \end{cases}
	\end{align}
	and it holds $\valuefkt(T,y)=\reward(T,y;\Pi)=1$. In the sequel we
	will concentrate on the case $\theta\in(0,1)$, the necessary changes
	for $\theta < 0$ will be indicated where appropriate.
	In view of relation \eqref{ExpUtility_RiskSensitive}  and the above transformations the value function of the original utility maximization problem \eqref{opti_org} can be obtained  from the value function of control problem \eqref{Wertfkt_H} by 		
	\begin{align}
		\label{Value_Orig_RiskSens}
		\mathcal{V}(x_0)=\frac{x_0^\theta}{\theta} \valuefkt(0,y_0) \quad\text{with }\quad  y_0= (\filter_0,\condcovinitial).
	\end{align}

	\section{Utility Maximization for Partially Informed Investors With Expert Opinions	}
	\label{UtilityMaxDiscreteExperts} 
	
	In this section we want to derive the dynamic programming equation for the control problem given above in \eqref{Wertfkt_H} which consists in the optimization of the risk-sensitive performance criterion or reward function \eqref{Zielfkt_H}. Recall from \eqref{State_SDE} and \eqref{coeffs_rest_state}    
	the dynamics of the  state process $\stateprocess= {\Mpro \choose\YQ}$. For deriving the dynamic programming equation for $V(t,y)$ we first
	need to state the generator of  the state process
	$\stateprocess$. 
	
	\begin{lemma}
		\label{Generator_pi}  The generator
		$\generator=\generator^{p}$ of the state process
		$\stateprocess$ given in equation
		\eqref{State_SDE} for $f\in C^2(\Statespace^{\stateprocess})$  is given for a fixed control $\pointp\in\R^\nAktien$ and $y={\filter \choose \zsmall }  \in \Statespace^{\stateprocess}$ by
		\begin{align}
			\mathcal{L}^{\pointp}f(y)=&\diffop_{\filter}^{\top}f(y)\alphamm(y,\pointp)
			+\diffop_{\zsmall}^{\top} f(y)\alphaYQquer(\zsmall)\nonumber +\frac{1}{2}\trace\big\{\diffop_{\filter\filter} f(y)\betam(\variance)\betam^{\top}(\variance)\big\}\nonumber\\
			&\label{generator_restri}+ \lambda
			\Big\{\int_{\mathbb R^{\nAktien}}f\big(\filter+\gammam(\variance,u),~\zsmall+\gammaYQ(\zsmall)\big)\varphi(u)du-f(y)
			\Big\},
		\end{align}
		where   $\alphaYQquer(\zsmall):=\alphaYQ(\zsmall)-\lambda\gammaYQ(\zsmall)$,
		$\variance=\mathcal{R}^{-1}(\zsmall)$, $\diffop_{\filter} f,  \diffop_{\filter\filter} f$ denoting the gradient and Hessian matrix of $f$ w.r.t.~$\filter$ and  $\diffop_{\zsmall} f$ the gradient of $f$ w.r.t.~$\zsmall$, respectively.
		
	\end{lemma}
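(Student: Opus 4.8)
The plan is to apply the Itô formula for jump-diffusions (the Itô--Lévy formula) to $f(\stateprocess_t)$ for $f\in C^2(\Statespace^{\stateprocess})$ and to identify the generator $\generator^{\pointp}$ with the finite-variation (drift) part of the resulting semimartingale decomposition. First I would rewrite the compensated jump integral in the state SDE \eqref{State_SDE} in terms of the uncompensated Poisson random measure $N$ and its compensator $\lambda\,\varphi(u)\,du\,dt$; this exhibits $\stateprocess$ as a process with continuous drift $\alphay(\stateprocess_t,\pi_t)-\lambda\int_{\R^\nAktien}\gammay(\stateprocess_t,u)\varphi(u)\,du$, continuous local-martingale part $\betay(\stateprocess_t)\,dW_t$, and pure-jump part $\int_{\R^\nAktien}\gammay(\stateprocess_{t-},u)\,N(dt,du)$. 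Applying Itô's formula, writing the jump sum as $\sum_{s\le t}\big[f(\stateprocess_s)-f(\stateprocess_{s-})\big]=\int_0^t\!\int_{\R^\nAktien}\big[f(\stateprocess_{s-}+\gammay(\stateprocess_{s-},u))-f(\stateprocess_{s-})\big]\,N(ds,du)$ and compensating it back, the continuous $dW$-integral and the $\komppoi$-integral become local martingales, while the remaining finite-variation terms give, for fixed control $\pointp$ (with $\diffop_y$, $\diffop_{yy}$ the gradient and Hessian in $y$),
\begin{align*}
\generator^{\pointp}f(y)={}&\diffop_y^\top f(y)\Big[\alphay(y,\pointp)-\lambda\int_{\R^\nAktien}\gammay(y,u)\varphi(u)\,du\Big]+\frac12\trace\big\{\diffop_{yy}f(y)\,\betay(y)\betay^\top(y)\big\}\\
&+\lambda\int_{\R^\nAktien}\big[f(y+\gammay(y,u))-f(y)\big]\varphi(u)\,du,
\end{align*}
the standard form of the generator of a jump-diffusion driven by a compensated jump measure.

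It then remains to insert the block structure of the coefficients \eqref{coeffs_rest_state} (with $\variance=\restri^{-1}(\zsmall)$) and to exploit two model-specific simplifications. Since the $\zsmall$-block of $\betay(y)$ vanishes and its $\filter$-block equals $\betam(\variance)$, only the $\filter\filter$-block of the Hessian survives in the trace, producing $\frac12\trace\{\diffop_{\filter\filter}f(y)\,\betam(\variance)\betam^\top(\variance)\}$. For the drift correction, the jump coefficient $\gammay(y,u)$ has $\filter$-block $\gammam(\variance,u)$ and $\zsmall$-block $\gammaYQ(\zsmall)$. The latter does not depend on $u$, and $\int_{\R^\nAktien}\varphi(u)\,du=1$, so the compensator contributes exactly $-\lambda\gammaYQ(\zsmall)$ to the $\zsmall$-drift, turning $\alphaYQ(\zsmall)$ into $\alphaYQquer(\zsmall)=\alphaYQ(\zsmall)-\lambda\gammaYQ(\zsmall)$. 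The former, $\gammam(\variance,u)=\variance(\variance+\varianceexp)^{-1/2}u$ from \eqref{coeff_MQ_def}, is linear in $u$ while $\varphi$ is the zero-mean standard normal density, hence $\int_{\R^\nAktien}\gammam(\variance,u)\varphi(u)\,du=0$ and the $\filter$-drift remains $\alphamm(y,\pointp)$. Using $\int\varphi=1$ once more to absorb $-f(y)$ in the jump term yields exactly \eqref{generator_restri}.

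The computation is routine once the Itô--Lévy formula is available; the point demanding care is the bookkeeping of the compensator, namely that it is precisely the mean-zero property of $\gammam$ in $u$ together with the $u$-independence of $\gammaYQ$ that convert $\alphaYQ$ into $\alphaYQquer$ while leaving $\alphamm$ untouched. One should also check that the integral in the jump term is well defined and that the $\komppoi$-integral is a genuine local martingale dropping out of the generator: this is ensured by $f\in C^2$, the boundedness of $\Statespace^{\YQ}$ from Prop.~\ref{properties_filter} (so the $\zsmall$-jumps keep the state in the domain), and the light Gaussian tails of $\varphi$, which dominate the linear-in-$u$ growth of $\gammam$, so that $\int_{\R^\nAktien}|f(y+\gammay(y,u))-f(y)|\varphi(u)\,du$ converges for $f$ of at most polynomial growth.
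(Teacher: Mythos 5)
Your proposal is correct and follows essentially the same route as the paper: the paper likewise starts from the standard form of the generator for a jump-diffusion driven by the compensated measure $\komppoi$ (which you derive explicitly via the It\^o--L\'evy formula rather than citing as standard), and then performs the identical bookkeeping — using that $\int_{\mathbb R^{\nAktien}}\gammam(\variance,u)\varphi(u)\,du=0$ by the linearity of $\gammam$ in $u$ and the mean-zero property of $\varphi$, and that $\gammaYQ$ is $u$-independent so its compensator contribution is exactly $-\lambda\gammaYQ(\zsmall)$, converting $\alphaYQ$ into $\alphaYQquer$ while leaving $\alphamm$ unchanged. The block-structure observation killing all but the $\filter\filter$-part of the Hessian trace is also the same in both arguments.
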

	
	\begin{proof}
		Standard arguments show that the solution of equation
		\eqref{State_SDE} is a Markov process whose generator
		$\mathcal{L}^{\pointp}$ operates on $f\in \mathcal{C}^2
		(\Statespace^{\stateprocess})$ as follows
		\begin{align}
			\mathcal{L}^{\pointp}f(y)=&\diffop_{\filter}^{\top}f(y)\alphamm(\filter,\variance,\pointp) +
			\diffop_{\zsmall}^{\top} f(y)\alphaYQ(\zsmall)\nonumber+\frac{1}{2}\trace\big\{\diffop_{\filter\filter} f(y)\betam(\variance)\betam^{\top}(\variance)\big\}\nonumber\\
			&+\lambda
			\int_{\mathbb R^{\nAktien}}\big(f\big(\filter+\gammam(\variance,u)~,~\zsmall+
			\gammaYQ(\zsmall)\big)-f(y)\big)   \varphi(u)du\nonumber\\
			&-\lambda
			\int_{\mathbb R^{\nAktien}}\big( \diffop_{\filter}^{\top} f(y)\gammam(\variance,u)+
			\diffop_{\zsmall}^{\top} f(y)\gammaYQ(\zsmall)\big)   \varphi(u)du.
		\end{align}				
		Recall that $\gammam(\variance,u)=\variance(\variance+\Gamma)^{-1/2}u$ which implies that $\int_{\mathbb R^{\nAktien}}\gammam(\variance,u)\varphi(u) du=0$ so that it holds  for the last term 
		\begin{align}
			\int_{\mathbb R^{\nAktien}}\big( \diffop_{\filter}^{\top} f(y)\gammam(\variance,u)+\diffop_{\zsmall}^{\top} f(y)\gammaYQ(\zsmall)\big)\varphi(u) du
			&=\diffop_{\zsmall}^{\top} f(y)\gammaYQ(\zsmall)\nonumber
		\end{align}
		and  together with the second  term  we obtain
		\begin{align}
			\diffop_{\zsmall}^{\top} f(y)\alphaYQ(\zsmall)-\lambda \diffop_{\zsmall}^{\top} f(y)\gammaYQ(\zsmall)
			=\diffop_{\zsmall}^{\top} f(y)\alphaYQquer(\zsmall).
			\nonumber
		\end{align}
		\qed
	\end{proof}
	For solving control problem \eqref{Wertfkt_H} we now apply dynamic programming techniques. Starting point is the dynamic programming principle given in the following lemma. For the proof we refer to Frey et al.~\cite[Prop.~6.2]{Gabih et al (2014)} and  Pham~\cite[Prop.~3.1]{Pham (1998)}.

	\begin{lemma}(Dynamic Programming Principle).
		\label{lemma_DPP}\\
		For every $t\in[0,T]$, $y={\filter \choose \zsmall}\in\mathbb R^{\nZustand}$ and for
		every stopping time $\tau$ with values in $[t,T]$ it
		holds
		\begin{align}
			\valuefkt(t,y)=
			\sup\limits_{\Pi\in\mathcal A} \E^{\overline{\P}}  \Big[\exp\Big\{\int_t^{\tau} b\big(\stateprocess_s^{\Pi,t,y},\Pi(s,\stateprocess_s^{\Pi,t,y})\big)ds\Big\}
			\valuefkt(\tau,\stateprocess_{\tau}^{\Pi,t,y}) \Big]. \label{DPP_Y}
		\end{align}
	\end{lemma}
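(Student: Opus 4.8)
The plan is to prove the Dynamic Programming Principle (DPP) for the risk-sensitive control problem by the standard two-inequality argument, exploiting the key structural fact that the state process $\stateprocess$ is a time-homogeneous Markov process (for each fixed Markov decision rule $\Pi\in\mathcal{A}$) together with the multiplicative structure of the reward functional. The reward in \eqref{Zielfkt_H} is an exponential of an integral, so the flow property of the state process and the additivity of the integral will convert into a multiplicative (tower) property of the conditional expectations, which is exactly what the DPP encodes. Throughout I write $\stateprocess_s=\stateprocess_s^{\Pi,t,y}$ and denote by $D_s^{\Pi}:=\exp\{\int_t^s b(\stateprocess_r,\Pi(r,\stateprocess_r))\,dr\}$ the running discount factor, so that $\reward(t,y;\Pi)=\E^{\overline{\P}}[D_T^{\Pi}]$ and the right-hand side of \eqref{DPP_Y} is $\sup_{\Pi}\E^{\overline{\P}}[D_\tau^{\Pi}\,\valuefkt(\tau,\stateprocess_\tau)]$.

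First I would establish the inequality $\valuefkt(t,y)\le \sup_{\Pi}\E^{\overline{\P}}[D_\tau^{\Pi}\valuefkt(\tau,\stateprocess_\tau)]$. Fix any $\Pi\in\mathcal{A}$ and condition on $\mathcal{F}_\tau^{\HC}$. Using the additive splitting $\int_t^T = \int_t^\tau + \int_\tau^T$ we get the factorization $D_T^{\Pi}=D_\tau^{\Pi}\cdot \exp\{\int_\tau^T b(\stateprocess_s,\Pi(s,\stateprocess_s))\,ds\}$. By the strong Markov property of $\stateprocess$ and the flow property $\stateprocess_s^{\Pi,t,y}=\stateprocess_s^{\Pi,\tau,\stateprocess_\tau}$ for $s\ge\tau$, the conditional expectation of the second factor given $\mathcal{F}_\tau^{\HC}$ equals $\reward(\tau,\stateprocess_\tau;\Pi)$, which is bounded above by $\valuefkt(\tau,\stateprocess_\tau)$ by definition of the value function. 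Since $D_\tau^{\Pi}$ is $\mathcal{F}_\tau^{\HC}$-measurable and nonnegative, taking expectations and then the supremum over $\Pi$ on the left yields the claim.

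For the reverse inequality $\valuefkt(t,y)\ge \sup_{\Pi}\E^{\overline{\P}}[D_\tau^{\Pi}\valuefkt(\tau,\stateprocess_\tau)]$, I would use a measurable-selection/pasting argument: given $\eps>0$, for each possible terminal state I select a nearly optimal continuation rule $\Pi^{\eps}$ with $\reward(\tau,\stateprocess_\tau;\Pi^\eps)\ge \valuefkt(\tau,\stateprocess_\tau)-\eps$, paste it onto an arbitrary initial rule on $[t,\tau)$, and verify that the pasted rule is again admissible (lies in $\mathcal{A}$). Then the same conditioning and Markov/flow argument shows that this pasted strategy attains a reward at least $\E^{\overline{\P}}[D_\tau^{\Pi}\valuefkt(\tau,\stateprocess_\tau)]-\eps\,\E^{\overline{\P}}[D_\tau^{\Pi}]$, and letting $\eps\to0$ gives the inequality. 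Since the two inequalities together give equality, the proof is complete.

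The main obstacle is the measurable-selection step in the reverse inequality together with verifying that the pasted decision rule remains admissible in the sense of Assumption~\ref{admi_stra_rule}, in particular that the integrability condition $\E[\Radon_T]=1$ and the Lipschitz/linear-growth bounds survive the pasting so that the controlled SDE \eqref{State_SDE} still has a unique strong solution. A secondary technical point is the finiteness and integrability of the exponential reward functional, which is needed to justify interchanging expectation, conditioning, and the supremum; here the boundedness of the conditional covariance from Prop.~\ref{properties_filter} (equivalently the boundedness of $\mathcal{S}^{\YQ}$) and the admissibility conditions control the growth of $b(\stateprocess_s,\cdot)$ and render the exponential moments finite. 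Because both of these delicate points are already handled in the references cited in the statement, namely Frey et al.~\cite[Prop.~6.2]{Gabih et al (2014)} and Pham~\cite[Prop.~3.1]{Pham (1998)}, I would invoke those results for the technical machinery rather than reconstructing it, and restrict the argument to verifying that our setting meets their hypotheses.
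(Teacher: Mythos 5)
Your proposal is correct and takes essentially the same route as the paper: the paper gives no proof of its own but refers precisely to Frey et al.\ (Prop.~6.2) and Pham (Prop.~3.1), and your two-inequality sketch (conditioning together with the Markov/flow property for the upper bound, measurable selection plus pasting for the lower bound) is exactly the argument implemented in those references, with the same technical machinery deferred to the same sources. The delicate points you flag --- that the continuation rule pasted at the stopping time must remain in $\mathcal{A}$, i.e.\ preserve the Lipschitz/linear-growth and martingale conditions of Assumption~\ref{admi_stra_rule}, and that exponential integrability of the reward functional is needed to justify the conditioning --- are precisely the reasons the paper outsources the proof, so your treatment is consistent with, and more explicit than, the paper's.
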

	Next we turn to the derivation of the dynamic programming equation (DPE)  for the  value function $\valuefkt(t, y)$ of  optimization problem \eqref{Wertfkt_H}.  For this purpose we suppose  that the value function $\valuefkt$ is a $C^{1,2}$-function.  Then Theorem \ref{DPE_general} below shows that the DPE  for the current problem  appears as a non-linear partial integro-differential equation (PIDE). It  constitutes a necessary optimality condition  and allows to derive the optimal decision rule.  We recall that we focus on  the solution for $\theta\in(0,1)$, the case  $\theta <0$ follows analogously by changing  $\sup$  into $\inf$ in  \eqref{DPP_Y}. 
	
	\begin{theorem}[Dynamic programming equation]%
		\label{DPE_general}~
		\begin{enumerate} 
			\item The value function $\valuefkt$ satisfies  for  $t\in[0,T)$ and  $y={\filter \choose \zsmall}\in\mathbb R^{\nZustand}$ the PIDE 
			\begin{align}
				\label{HJB_Power_C} 
				\begin{split}
					0=&~~~~\frac{\partial}{\partial t}
					\valuefkt(t,y)+\diffop_{\filter}^{\top}
					\valuefkt(t,y)\big(\revspeed(\revlevel-\filter)+\frac{\theta}{1-\theta}\Sigma_{R}^{-1}\variance
					\filter \big) +\diffop_{\zsmall}^{\top}\valuefkt(t,y)\alphaYQquer(\zsmall)						\\
					&
					+\frac{1}{2}\trace\big\{\diffop_{\filter\filter}\valuefkt(t,y)\big( \variance\Sigma_{\HR}^{-1}\variance\big)\big\}+\frac{\theta}{2(1-\theta)}\filter^{\top}\Sigma_{R}^{-1}
					\filter \valuefkt(t,y)\\
					&
					+\frac{\theta}{2(1-\theta)}\frac{1}{\valuefkt(t,y)}
					\diffop_{\filter}^{\top}\valuefkt(t,y)\big(\variance\Sigma_{R}^{-1}\variance\big)\diffop_{\filter}\valuefkt(t,y)			\\
					&						+\lambda\Big\{\int_{\mathbb
						R^{\nAktien}}\valuefkt\big(t,\filter+\gammam
					(\zsmall,u),~\zsmall+\gammaYQ(\zsmall)\big) \varphi(u)du
					-\valuefkt(t,y)\Big\},
				\end{split}
			\end{align}
			with the terminal condition $\valuefkt(T,y)=1$. 				 
			\item
			The optimal decision rule   is for $t\in [0,T)$ and  $y={\filter \choose \zsmall}\in\R^{\nZustand}$ given by 		
			\begin{align}
				\label{opti_stra_power_C}
				\begin{split}
					\Pi^{*}=\Pi^{*}(t,y)&=\frac{1}{1-\theta}\Sigma_{R}^{-1}\Big(\filter+\variance\frac{\diffop_{\filter}\valuefkt(t,y)}{\valuefkt(t,y)}  \Big).
				\end{split}						
			\end{align}
		\end{enumerate}
	\end{theorem}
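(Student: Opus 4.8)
The plan is to obtain the PIDE \eqref{HJB_Power_C} from the dynamic programming principle of Lemma~\ref{lemma_DPP} combined with It\^o's formula for the jump-diffusion $\stateprocess$, working throughout under the postulated $C^{1,2}$-regularity of $\valuefkt$. First I would fix an admissible decision rule $\Pi\in\mathcal A$ and abbreviate the exponential weight $\Psi_s=\exp\{\int_t^s b(\stateprocess_r^{\Pi,t,y},\Pi(r,\stateprocess_r^{\Pi,t,y}))\,dr\}$, which is continuous and of finite variation with $d\Psi_s=\Psi_s\,b(\stateprocess_s,\Pi(s,\stateprocess_s))\,ds$. Applying the It\^o product rule to $\Psi_s\valuefkt(s,\stateprocess_s^{\Pi,t,y})$ on $[t,t+h]$ — the covariation term drops out because $\Psi$ is continuous and of finite variation — and reading off the generator $\generator^{\pointp}$ from Lemma~\ref{Generator_pi}, I get
\begin{align*}
\Psi_{t+h}\valuefkt(t+h,\stateprocess_{t+h}^{\Pi,t,y})=\valuefkt(t,y)+\int_t^{t+h}\Psi_s\Big[\tfrac{\partial}{\partial s}\valuefkt+\generator^{\Pi(s,\stateprocess_s)}\valuefkt+b\big(\stateprocess_s,\Pi(s,\stateprocess_s)\big)\valuefkt\Big](s,\stateprocess_s^{\Pi,t,y})\,ds+L_{t+h},
\end{align*}
where $L$ collects the stochastic integral against $\overline W$ and the integral against the compensated measure $\komppoi$, hence is a local $\overline\P$-martingale.

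Next I would exploit that the DPP yields $\valuefkt(t,y)\ge\E^{\overline\P}[\Psi_{t+h}\valuefkt(t+h,\stateprocess_{t+h}^{\Pi,t,y})]$ for every $\Pi$, with the bound attained in the limit along a maximizing family of rules. Taking $\overline\P$-expectations so that the martingale term vanishes, dividing by $h$ and letting $h\downarrow0$ while invoking continuity of the integrand (here the $C^{1,2}$-assumption and continuity of the coefficients enter), I arrive at the infinitesimal form
\begin{align*}
\tfrac{\partial}{\partial t}\valuefkt(t,y)+\sup_{\pointp\in\R^{\nAktien}}\Big[\generator^{\pointp}\valuefkt(t,y)+b(y,\pointp)\,\valuefkt(t,y)\Big]=0,
\end{align*}
with the terminal condition $\valuefkt(T,y)=1$ read off directly from \eqref{Wertfkt_H}.

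It then remains to evaluate the supremum explicitly, which is purely algebraic. Collecting the $\pointp$-dependent terms of $\generator^{\pointp}\valuefkt+b\,\valuefkt$ — namely the contribution $\theta\,\pointp^{\top}\variance\,\diffop_{\filter}\valuefkt$ coming from the drift $\alphamm$, together with $b(y,\pointp)\valuefkt=\theta\valuefkt\big(\pointp^{\top}\filter-\tfrac{1-\theta}{2}\pointp^{\top}\Sigma_{R}\pointp\big)$ — produces a concave quadratic in $\pointp$, concavity being guaranteed by $\valuefkt>0$ and $\theta\in(0,1)$. Setting its gradient to zero gives the maximizer \eqref{opti_stra_power_C}, $\Pi^{*}=\tfrac{1}{1-\theta}\Sigma_{R}^{-1}\big(\filter+\variance\,\diffop_{\filter}\valuefkt/\valuefkt\big)$. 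Substituting $\Pi^{*}$ back and using the elementary identity $\max_{\pointp}\big(a^{\top}\pointp-\tfrac12\pointp^{\top}B\pointp\big)=\tfrac12 a^{\top}B^{-1}a$ with $a=\theta(\variance\,\diffop_{\filter}\valuefkt+\valuefkt\,\filter)$ and $B=\theta(1-\theta)\valuefkt\,\Sigma_{R}$, the maximized Hamiltonian splits into the quadratic gradient term $\tfrac{\theta}{2(1-\theta)}\tfrac1{\valuefkt}\diffop_{\filter}^{\top}\valuefkt(\variance\Sigma_{R}^{-1}\variance)\diffop_{\filter}\valuefkt$, the cross term that recombines with $\revspeed(\revlevel-\filter)$ into the modified drift coefficient, and the term $\tfrac{\theta}{2(1-\theta)}\filter^{\top}\Sigma_{R}^{-1}\filter\,\valuefkt$; the $\pointp$-independent pieces of $\generator^{\pointp}$ (the $\alphaYQquer$ term, the trace $\tfrac12\trace\{\diffop_{\filter\filter}\valuefkt\,\variance\Sigma_{R}^{-1}\variance\}$ and the jump integral) pass through unchanged, giving \eqref{HJB_Power_C}. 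For $\theta<0$ the same quadratic becomes convex, the $\sup$ turns into an $\inf$, and the identical first-order condition reproduces the same $\Pi^{*}$ and PIDE.

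I expect the delicate step to be the passage from the DPP inequality to its infinitesimal version, and in particular the claim that $L$ is a genuine $\overline\P$-martingale so that its expectation vanishes; this requires controlling the exponential functional $\Psi$ and the stochastic integrals against $\overline W$ and $\komppoi$ under $\overline\P$, where the Ornstein–Uhlenbeck-type filter $\Mpro$ is unbounded. Because the diffusion part of $\generator$ is degenerate, no classical regularity result secures the a~priori smoothness of $\valuefkt$, so the derivation above only establishes \eqref{HJB_Power_C} as a \emph{necessary} optimality condition; its rigorous justification, including the required integrability, is exactly the task of the regularization programme carried out in Sec.~\ref{regularization}.
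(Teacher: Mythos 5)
Your proposal is correct and takes essentially the same route as the paper: the paper likewise passes from the dynamic programming principle to the Hamilton--Jacobi--Bellman form $\frac{\partial}{\partial t}\valuefkt+\sup_{\pointp}\{\generator^{\pointp}\valuefkt+b(y,\pointp)\,\valuefkt\}=0$ (compressing the DPP-to-DPE step you spell out via It\^o's product rule and the martingale argument into a citation of Frey et al.), then maximizes the same concave quadratic in $\pointp$ and substitutes the maximizer back to obtain \eqref{HJB_Power_C} and \eqref{opti_stra_power_C}. Incidentally, your sign $+b(y,\pointp)\valuefkt$ inside the supremum is the consistent one: the paper's intermediate equation \eqref{eq:HJB-general} writes $-b(y,\pointp)\valuefkt$, a slip that its explicit form \eqref{Bell_equa_power_C} and the final PIDE silently correct.
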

	
	\begin{proof}		
		Following standard arguments for controlled jump diffusions  maximizing a risk-sensitive performance criterion as in Frey et al.~\cite{Frey et al. (2012)} the above dynamic programming principle implies that the value function satisfies a DPE of the form		
		\begin{align}\label{eq:HJB-general}
			\frac{\partial}{\partial t}	V(t,y)+\sup_{p \in \mathbb	R^{\nAktien}}\big\{\mathcal{L}^{p}\; 	V(t,y)-b(y,p) V(t,y)\big\}& =0, ~(t,y) \in [0,T) \times \R^{\nZustand},\\[-1ex]
			V(T,y) & =1.		
		\end{align}
		Expressing in the above equation the generator  $\mathcal{L}^{p}$ as given in Lemma \ref{Generator_pi} with $\alphamm$ as given in \eqref{alphamm_def} and $ b(y, p)$ as given
		in \eqref{function_b}, the DPE can be written more explicitly as the following PIDE	
		\begin{align}
			0&=\frac{\partial}{\partial t}
			\valuefkt(t,y)+\diffop_{\filter}^{\top}	\valuefkt(t,y)\revspeed(\revlevel-\filter)
			+\diffop_{\zsmall}^{\top}\valuefkt(t,y)\alphaYQquer(\zsmall)
			+\frac{1}{2}\trace\big\{\diffop_{\filter\filter}\valuefkt(t,y)	\variance\Sigma_{R}^{-1}\variance\big\}
			\nonumber\\
			&+\sup\limits_{\pointp\in\R^{\nAktien}}\big\{\diffop_{\filter}^{\top}	\valuefkt(t,y)
			\theta\variance\pointp+\theta\big(\pointp^{\top}\filter -\frac{1-\theta}{2}\pointp^{\top}\Sigma_{R}\pointp\big) \valuefkt(t,y)\big\}
			\label{Bell_equa_power_C}\\
			&+
			\lambda\Big\{\int_{\mathbb
				R^{\nAktien}}\valuefkt\big(t,\filter+\gammam
			(\zsmall,u)~,~\zsmall+\gammaYQ(\zsmall) \big)\varphi(u)du
			-\valuefkt(t,y)\Big\}.
		\end{align}
		The maximizer for the supremum appearing
		in \eqref{Bell_equa_power_C} yields the 	optimal decision rule $\Pi=\Pi(t,y)$ which is given in \eqref{opti_stra_power_C}. 
		Plugging the expression for  the maximizer $\Pi$ back into the DPE
		\eqref{Bell_equa_power_C}  we obtain the  PIDE \eqref{HJB_Power_C}. 
		\qed
	\end{proof}
	
	\begin{remark}\label{myopic}
		The $\HC$-investor's optimal decision rule given in \eqref{opti_stra_power_C} can be rewritten as 
		\begin{align}
			\label{opti_stra_power_myopic}
			\Pi^{*}(t,y)& =
			\Pi^{\HF}(t,\filter)+\frac{1}{1-\theta}\Sigma_{R}^{-1}\variance\frac{\diffop_{\filter}\valuefkt(t,y)}{\valuefkt(t,y)},
		\end{align}
		with $\Pi^{F}=\frac{1}{1-\theta}\Sigma_{R}^{-1}\filter$. Similarly to the setting with expert opinions arriving at fixed points in time which was studied in Gabih et al.~\cite{Gabih et al (2023) PowerFix} we find out that  $\Pi^{*}$ can be decomposed into two parts: The first part  $\Pi^{F}$ is the optimal decision rule of the fully informed investor, and a second part $\Pi^{*}- \Pi^{F}$ depending on filtering-related quantities. In case of fully observed drift  the value of the strategy process $\pi_t$ at time $t$ is obtained by plugging in for $m$  the current value of the drift $\drift_t$, whereas  the partially informed $\HC$-investor plugs in the filter estimate $\Mpro_t$. 	
		In the literature  $\Pi^{F}$ is also known as  {\em myopic decision rule}. The second part is  the ``correction term''  $\Pi^{*}- \Pi^{F}$ which is known as  {\em drift risk} of the partially informed $\HC$-investor since it accounts for the investor's uncertainty about the current value of the non-observable drift, see Rieder and Bäuerle \cite[Remark 1]{Rieder_Baeuerle2005} and Frey et al.~\cite[Remark 5.2]{Frey et al. (2012)}. Further, The decomposition shows that, in contrast to the case of log utility (see Sass et al.~\cite{Sass et al (2017)}), the so-called  {\em certainty equivalence principle} does not apply to power utility. 		 	It states  that the optimal strategy under partial information
		is obtained by replacing the unknown drift $\drift_t$ by the filter estimate $\Mpro^H_t$ 	in the formula for the optimal strategy under full information. 
	\end{remark}

	\section{Regularization Approach}
	\label{regularization}
	The major challenge in the analysis   of the DPE in form of the PIDE \eqref{HJB_Power_C} lies in the
	fact that its diffusion part is degenerated so that it is not
	possible to apply any of the known results on the existence of
	classical solutions to this equation, and therefore the existence of a classical solution of equation \eqref{HJB_Power_C} is an open issue.
	The main problem is the fact
	that one cannot guarantee that the differential operator is uniformly elliptic.
	To see this, note that  the SDE \eqref{State_SDE} for the controlled state process of dimension $\nZustand =d+\nQ$ is driven by a Brownian motion  of dimension $d$ only.  Thus, there is a singular coefficient matrix of the second
	derivatives in \eqref{HJB_Power_C} which is given by  $A(y)=
	\betay(y)\betay^{{\top}}(y)$ with  $\betay(y)={\betam( \restri^{-1}(\zsmall)) \choose 0_{\nQ \times \nAktien}}$ as given in \eqref{coeffs_rest_state}. By
	definition, equation \eqref{HJB_Power_C} is uniformly elliptic if we
	can find some constant $C>0$ such that
	\begin{align}
		z^{\top} A(y) z
		>C\|z\|^2,\quad \text{for all} ~z\in\mathbb R^{\nZustand}\setminus\{0\}~\text{and}~\text{for all}~y\in\mathbb R^{\nZustand}\nonumber,
	\end{align}
	in particular the matrix $A$ needs to be strictly positive definite. To address this fact we follow
	the regularization procedure  which was  used recently in Frey et al. \cite{Gabih
		et al (2014)} and Shardin and Wunderlich \cite{Shardin and Wunderlich (2017)}  and also earlier in Fleming and Soner \cite[Lemma IV.6.3]{Fleming and Soner(2006)} and Krylov \cite[Section 4.6]{Krylov1980}. However, in contrast to the setting  of a drift driven by a finite-state
	Markov chain considered in \cite{Gabih et al (2014),Shardin and Wunderlich (2017)}, the picture changes considerably in our setting as we deal with an unbounded
	Ornstein-Uhlenbeck drift process which creates technical
	difficulties when proving the convergence of the regularized value
	functions. The announced regularization argument  will show
	that \eqref{HJB_Power_C} can be used to compute an approximately
	optimal strategy. For this we add  to the dynamics of the state equation
	\eqref{State_SDE} a term			$\frac{1}{\sqrt{k}}dW^{\ast}$ with $k\in\N$ and $W^{\ast}$ is a
	$\nZustand$-dimensional Brownian motion independent of
	$W$.      Denoting the value function of the control problem with  the modified state process by ${}^\np V$, the DPE associated
	with these modified dynamics has an additional term
	$\frac{1}{2 k}\sum_{j=1}^{\nAktien+\nQ}~ {}^\np V_{y_jy_j}$ and is therefore
	uniformly elliptic. Hence the results of Davis and Lleo \cite[Theorem 4.17]{Davis and Lleo (2013_2)}
	apply directly
	to the modified equation, yielding the existence of a classical
	solution ${}^\np V$. Moreover, the optimal decision rule ${}^\np
	\Pi^{\ast}$ of the modified problem is given by \eqref{HJB_Power_C}
	with ${}^\np V$ instead of $V$. Clearly, one expects that for $k$
	sufficiently large ${}^\np \Pi^{\ast}$ is approximately optimal in
	the original problem and in fact we can show this now. Another route to for justifying equation \eqref{HJB_Power_C} is presented in
	\cite{Kondkaji (2019)} where the author presents numerical results obtained by the approximate solution of PIDE \eqref{HJB_Power_C} using finite difference schemes.
	%
	
	\subsection{\it \textbf{Regularized State Equation}}
	\label{Regularisierte_Zustandsgleichung}
	
	The announced regularization  to the diffusion part		of the state equation will drive the second state component $\YQ$ representing the entries of the covariance matrix $\Qpro$ out of its    domain $\mathcal S^{\YQ}$ which is by  Proposition \ref{properties_filter} bounded. Thus   $\YQ$ now takes values  in the new state space $\widetilde{\mathcal	S}^{\YQ}$  which is the whole $\R^{\nQ}$.   
	Accordingly, the resulting state space for the regularized state process ${}^\np\stateprocess$ is now the whole $ \mathbb R^{\nZustand}$, we denote it  by  $\widetilde{\mathcal	S}_Y$. 
	
	The loss of the boundedness property of $\YQ$  will create various technical difficulties. 
	First,  we need to extend the definition of the	drift, diffusion an jump  coefficients $\alphay$, $\betay$ and
	$\gammay$ of the original state process $\stateprocess={\Mpro \choose\YQ}$ given in \eqref{coeffs_rest_state}. 
	In order to extend the definition  of the coefficients from  $\mathcal{S}^{Y}$ to $\widetilde{\mathcal{S}}^{Y}$, i.e., to allow the second component of $y={m \choose \zsmall}$ take values in $\widetilde{\mathcal	S}^{\YQ}=\R^{\nQ}$ instead of the bounded domain ${\mathcal	S}^{\YQ}$, we introduce for   $\zsmall\in \widetilde{\mathcal S}^{\YQ}$  the 
	distance to $\mathcal S^{\YQ}$  as 
	$$\dist(\zsmall, \mathcal S^{\YQ}):=\inf\{ \|\zsmall-z\|, \quad z\in \mathcal S^{\YQ}\}.$$				
	Now we  fix some  $\varepsilon>0$ and define the $\varepsilon$-neighborhood of $\mathcal{S}^{\YQ}$ by  $$ \mathcal{S}^{\YQ}_\varepsilon=  \{\zsmall \in \R^{\nQ}: \dist(\zsmall, \mathcal S^{\YQ})\le \varepsilon\}.$$			This allows to define 			
	the extension of the coefficients for the regularized state process  				
	\begin{align}
		\widetilde{\alpha}_Y(y,p):=\begin{pmatrix} \widetilde{\alpha}_M(m,\variance,p)\\  \widetilde{\alpha}_\YQ(\zsmall)  \end{pmatrix},~~
		\widetilde{\beta}_Y(y):=\begin{pmatrix} \widetilde{\beta}_M(\variance) \\  0_{\nQ \times \nAktien} \end{pmatrix},~~
		\widetilde{\gamma}_Y(y,u):=\begin{pmatrix} \widetilde{\gamma}_M(\variance,u) \\ \widetilde{\gamma}_\YQ(\zsmall)  \end{pmatrix},\label{coeffs_regu_state}
	\end{align}
	for every
	$y={m \choose \zsmall}\in \widetilde{\mathcal S}^{Y}$ with $q=\restri^{-1}(g)$ such that 
	for $\zsmall \in \mathcal{S}^{\YQ}$  the extended and original coefficients coincide.
	For $\zsmall \in \R^{d_\YQ}\setminus \mathcal{S}^{\YQ}_\varepsilon$ the extended coefficients vanish whereas  in the ``transition domain ''  $\mathcal{S}^{\YQ}_\varepsilon\setminus  \mathcal{S}^{\YQ} $ there is a continuous transition of the coefficients  to zero if $\dist(\zsmall, \mathcal S_{\YQ}) $ reaches	$\varepsilon$. For the drift coefficient this can be obtained  as follows
	\begin{align}
		\widetilde{\alpha}_Y(y,p) := \left\{
		\begin{array}{cl}  \alphay(y,p)
			\big(1-\frac{1}{\eps}\dist(\zsmall, \mathcal S^{\YQ}) \big) ,\quad &
			\text{for }~ \zsmall\in \mathcal{S}^{\YQ}_\varepsilon, \\[1ex] 0, &
			\text{otherwise}.\end{array} \right.\nonumber
	\end{align}
	Analogously we define  the  other extended coefficients.  For values of $\zsmall\in \mathcal{S}^{\YQ}_\varepsilon\setminus \mathcal{S}^{\YQ}$  we describe the dependence of the  coefficients  on $\zsmall$ by the expressions given in  \eqref{coeff_MQ_def},  \eqref{alphamm_def} for which up to now  $\zsmall\in\mathcal{S}^\YQ$ was assumed. 
	Note that  $\varepsilon>0 $ always can be chosen such that there exists the  inverse of  $\variance+\varianceexp$ with $\variance=\mathcal R^{-1}(\zsmall)$ appearing in the jump coefficient $\widetilde{\gamma}_Y$. This follows from the fact that $\varianceexp$ is (strictly) positive definite and that for $g\in \mathcal{S}^{\YQ}$ the matrix $q=\restri^{-1}(g)$ is non-negative definite. Then  $\variance+\varianceexp$ is positive definite not only for  $\zsmall \in \mathcal{S}^{\YQ}$ but also for  $\zsmall \in \widetilde{\mathcal{S}}^{\YQ}$ with $\dist(\zsmall, \mathcal S^{\YQ})<\varepsilon$ for some sufficiently small $\varepsilon$.

	We	also have to extend the definition of the decision rule $\Pi$ to the state space $\widetilde{\mathcal{S}}^Y$ and the associated  set of admissible decision rules $\mathcal{A}$. Therefore, we now define the domain of definition of $\Pi$ from $[0,T]\times{\Statespace}^{\stateprocess}$ to $[0,T]\times \widetilde{\Statespace}^{\stateprocess}$ and impose to $\Pi$	 the same conditions as in Assumption \ref{admi_stra_rule} with ${\Statespace}^{\stateprocess}$ replaced by $\widetilde{\Statespace}^{\stateprocess}$.  					
	Then we can  redefine 	the set of admissible decision rules for the regularized problem  as
	\begin{align}
		\label{set_admiss_Markov_regu} 
		\mathcal{A}:=\Big\{\Pi:[0,T]\times \widetilde{\Statespace}^{\stateprocess}\to\R^d: ~\Pi \text{ is a measurable function satisfying Ass.~\ref{admi_stra_rule}}\Big\}.					
	\end{align}
	Note that for the original control problem with the state process $\stateprocess$ taking values in ${\Statespace}^{\stateprocess}$ the above set is the same as in \eqref{set_admiss_Markov}. 	Therefore we  omit the tilde in the notation  and keep the notation	$\Pi$ and $\mathcal{A}$ also for the regularized problem.

	Next we define the dynamics of the regularized state process by perturbing the SDE \eqref{State_SDE}
	with a term	$\frac{1}{\sqrt{k}}dW^{\ast}$ where $k\in\N$ and $W^{\ast}$ is a Brownian motion of dimension $\nZustand$ independent of	$\overline{W}$. The resulting process is denoted by ${}^\np\stateprocess={{}^\np \Mpro \choose{}^\np \YQ} $.
	For an admissible  decision rule $\Pi\in \mathcal{A}$ defining the strategy $\pi$ by $\pi_t=\Pi(t,{}^\np Y_t)$ we obtain
	\begin{equation}
		\label{filter_pert} d\,{}^\np\stateprocess_t =
		\widetilde{\alpha}_Y({}^\np\stateprocess_t,\pi_t) dt +
		\widetilde{\beta}_Y({}^\np\stateprocess_t) d\overline{W}_t + \int\nolimits_{\mathbb
			R^{\nAktien}} \widetilde{\gamma}_Y({}^\np\stateprocess_{t-},u)\komppoi(dt,
		du) +\frac{1}{\sqrt{\np}} dW^{\ast}_t.
	\end{equation}
	This state process is now driven not only by $\overline{W}$ and $\komppoi$ but also by the		Brownion motion	$W^\ast$.  		
	\begin{remark}\label{regu_remark}
		The coefficients $\widetilde{\alpha}_Y,\widetilde{\beta}_Y,\widetilde{\gamma}_Y$ of the above SDE vanish whenever the second component  $ {}^\np \YQ$ of the regularized state  ${}^\np Y $ leaves $\mathcal{S}^\YQ_\varepsilon$, i.e., satisfies    $\dist({}^\np \YQ_t, \mathcal S^{\YQ})> \varepsilon$. Then the dynamics is given by  $d\,{}^\np\stateprocess_t=\frac{1}{\sqrt{\np}} dW^{\ast}_t$ and describes a standard Brownian motion scaled by ${1}/{\sqrt{\np}}$. 
		
		We also add a perturbation to the dynamics of the first state component ${}^k\Mpro$. This is necessary since the  diffusion coefficient in the original state equation given by  $\betam(\variance)=\variance\Sigma_{\HR}^{-1/2}$ is a linear function of the conditional covariance matrix $\variance$ which is  positive semi-definite but not strictly positive definite and might be singular. Hence,  regularization of the second component ${}^k\YQ$ only, but not of the first component   ${}^k\Mpro$,  cannot yet  guarantee the ellipticity of generator. 
	\end{remark}
	Below we will need the following properties of the coefficients of the regularized state equation \eqref{filter_pert}.

	\begin{lemma}[Lipschitz and linear growth conditions]
		\label{coef-model}\\
		There exists a positive constants   $\widetilde{C}_M, \widetilde{C}_G$ and there exists   a function $\overline{\rho}: \mathbb R^{\nAktien}\rightarrow\mathbb R_+$
		satisfying $\int_{\mathbb R^{\nAktien}}\overline{\rho}^2(u)\varphi(u)du< \infty$, so that for every $ y, y_1, y_2\in\mathbb R^{\nZustand}$ it holds	
		\begin{align}
			\label{lipsch:b,sigma}
			\|\widetilde{\alpha}_Y(y_1,p)-\widetilde{\alpha}_Y(y_2,p)\|+\|\widetilde{\beta}_Y(y_1)-\widetilde{\beta}_Y(y_2)\|&\leq \widetilde{C}_M \|y_1-y_2\| ,\\
			\label{growth_cond}
			\|\widetilde{\alpha}_Y(y,p)\|+\|\widetilde{\beta}_Y(y)\| &\leq \widetilde{C}_G(1+\|y\|),\\
			\|\widetilde{\gamma}_Y(y_1,u)-\widetilde{\gamma}_Y(y_2,u)\|&\leq \overline{\rho}(u)\|y_1-y_2\|, \label{lipsch:Delta}\\
			\|\widetilde{\gamma}_Y(y,u)\|&\leq\overline{\rho}(u)(1+\|y\|).\label{growth-delta}
		\end{align}
	\end{lemma}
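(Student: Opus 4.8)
The plan is to exploit the product structure of every regularized coefficient: by construction each of $\widetilde{\alpha}_Y,\widetilde{\beta}_Y,\widetilde{\gamma}_Y$ equals the cutoff factor $\chi(\zsmall):=\big(1-\tfrac1\eps\dist(\zsmall,\mathcal S^{\YQ})\big)^{+}$ times the corresponding original coefficient from \eqref{coeffs_rest_state}. Two elementary properties of $\chi$ drive the whole argument. First, since $\zsmall\mapsto\dist(\zsmall,\mathcal S^{\YQ})$ is $1$-Lipschitz, $\chi$ is globally Lipschitz with constant $1/\eps$ and satisfies $0\le\chi\le1$. Second, $\chi$ is supported in the closed $\eps$-neighborhood $\overline{\mathcal S^{\YQ}_\eps}$, which by Proposition \ref{properties_filter} is a compact subset of $\R^{\nQ}$. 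I would first record the following product lemma: if $h$ is bounded by $M_h$ and Lipschitz with constant $L_h$ on $\overline{\mathcal S^{\YQ}_\eps}$, then $\chi\cdot h$ is globally bounded by $M_h$ and globally Lipschitz with constant $L_h+M_h/\eps$. The only non-trivial case is $\zsmall_1\in\mathcal S^{\YQ}_\eps$, $\zsmall_2\notin\mathcal S^{\YQ}_\eps$, where $\chi(\zsmall_2)=0$ lets one write $\chi(\zsmall_1)\,h(\zsmall_1)=(\chi(\zsmall_1)-\chi(\zsmall_2))\,h(\zsmall_1)$ and bound it by $M_h\eps^{-1}\norm{\zsmall_1-\zsmall_2}$; the case of both arguments in the support uses the standard add-and-subtract splitting. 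Boundedness of $\chi\cdot h$ trivially yields the linear-growth estimate $M_h\le M_h(1+\norm{y})$.

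The second step supplies the hypotheses of the product lemma, i.e.\ boundedness and Lipschitz continuity of the original coefficients on the compact set $\overline{\mathcal S^{\YQ}_\eps}$. Here the key structural fact, already noted below \eqref{coeffs_regu_state}, is that for $\eps$ small enough $\variance+\varianceexp$ is uniformly positive definite on $\overline{\mathcal S^{\YQ}_\eps}$, so that $(\variance+\varianceexp)^{-1}$ and $(\variance+\varianceexp)^{-1/2}$ are smooth, bounded, Lipschitz matrix functions of $\variance=\restri^{-1}(\zsmall)$ there, while $\zsmall\mapsto\variance$ is linear. Consequently $\betam(\variance)=\variance\Sigma_{R}^{-1/2}$, the $\variance$-factor $\variance(\variance+\varianceexp)^{-1/2}$ of $\gammam$, and the rational expressions $\alphaqq,\gammaq$ together with their images $\alphaYQ,\gammaYQ$ under the linear map $\restri$ are all bounded and Lipschitz on $\overline{\mathcal S^{\YQ}_\eps}$. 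Applying the product lemma to each of these gives the global Lipschitz and linear-growth bounds \eqref{lipsch:b,sigma}--\eqref{growth_cond} for $\widetilde{\beta}_Y$ and for the $\YQ$-block $\widetilde{\alpha}_\YQ$ of the drift.

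The delicate point, and the main obstacle, is the first drift block $\widetilde{\alpha}_M(\filter,\zsmall,p)=\chi(\zsmall)\big(\revspeed(\revlevel-\filter)+\theta\variance p\big)$, because its mean-reversion part is affine and \emph{unbounded} in the first component $\filter$, which ranges over all of $\R^{\nAktien}$ and is untouched by the $\zsmall$-cutoff. I would handle this by splitting $\widetilde{\alpha}_M$ into the globally defined mean-reversion drift and the $\variance$-dependent control term. The cutoff is only needed to localize the $\variance=\restri^{-1}(\zsmall)$-dependence, which is the sole source of ill-definedness outside the domain, whereas $\revspeed(\revlevel-\filter)$ is by itself globally $\norm{\revspeed}$-Lipschitz in $\filter$ and of linear growth on all of $\R^{\nAktien}$. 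The remaining term $\chi(\zsmall)\theta\variance p$ falls under the product lemma with $h=\theta\variance p$, whose bound and Lipschitz constant on $\overline{\mathcal S^{\YQ}_\eps}$ are proportional to $\norm{p}$; adding the two contributions yields \eqref{lipsch:b,sigma} and \eqref{growth_cond} for $\widetilde{\alpha}_Y$, with drift constants $\widetilde{C}_M,\widetilde{C}_G$ that depend on the fixed control value through the factor $\norm{p}$. Keeping the affine mean-reversion drift outside the product estimate is precisely what prevents the $\filter$-$\zsmall$ coupling $\chi(\zsmall)\revspeed(\revlevel-\filter)$ from destroying the global Lipschitz property.

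Finally, for the jump coefficient I would factor out the Gaussian variable $u$: since $\gammam(\variance,u)=\variance(\variance+\varianceexp)^{-1/2}u$ is \emph{linear} in $u$ and $\gammaYQ(\zsmall)$ is independent of $u$, the product lemma applied to the $u$-free matrix factors gives, for the extended $\widetilde{\gamma}_Y$, constants that scale like $\norm{u}$ for the $\Mpro$-block and like a fixed constant for the $\YQ$-block. Choosing the single dominating function $\overline{\rho}(u):=C(1+\norm{u})$ with $C$ large enough then majorizes both blocks simultaneously and yields \eqref{lipsch:Delta}--\eqref{growth-delta}; the integrability $\int_{\R^{\nAktien}}\overline{\rho}^{2}(u)\varphi(u)\,du<\infty$ is immediate because the standard Gaussian density $\varphi$ has a finite second moment. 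Collecting the four bounds completes the proof.
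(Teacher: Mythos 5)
Your proof is correct and shares the paper's skeleton---reduce everything to the transition region $\mathcal{S}^{\YQ}_\eps$ via the cutoff $\chi$, exploit the boundedness of $\mathcal{S}^{\YQ}$ (Proposition \ref{properties_filter}) and the uniform positive definiteness of $\variance+\varianceexp$ on the $\eps$-neighborhood, and conclude from the linear/quadratic structure of the coefficients---but it diverges from the paper precisely at the point where the paper's one-line justification is weakest, and your divergence is an improvement. The paper disposes of the reduction by asserting wholesale that ``multiplying these functions by the bounded and Lipschitz continuous cutoff preserves the Lipschitz and growth conditions.'' As your third paragraph correctly recognizes, this is false for the $\Mpro$-block under the paper's literal extension $\widetilde{\alpha}_M=\chi(\zsmall)\big(\revspeed(\revlevel-\filter)+\theta\variance\pointp\big)$: taking $\filter_1=\filter_2=\filter$ with $\norm{\filter}\to\infty$ and $\zsmall_1,\zsmall_2$ in the transition region, the difference contains $(\chi(\zsmall_1)-\chi(\zsmall_2))\revspeed(\revlevel-\filter)$, whose Lipschitz ratio is of order $\norm{\filter}/\eps$, so \eqref{lipsch:b,sigma} cannot hold with a uniform constant; the product of a non-constant bounded Lipschitz function of $\zsmall$ with an unbounded affine function of $\filter$ is not globally Lipschitz. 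Your fix---cutting off only the $\variance$-dependent pieces and leaving the globally Lipschitz mean-reversion drift untouched---repairs this, at the harmless price of redefining the extension: your $\widetilde{\alpha}_M$ still coincides with $\alphamm$ on the original state space, but no longer vanishes when $\dist(\zsmall,\mathcal{S}^{\YQ})>\eps$, so Remark \ref{regu_remark} and the display of $\widetilde{\alpha}_M$ in the proof of Lemma \ref{L^2_conv_filter_M} would need a corresponding cosmetic adjustment (nothing downstream relies on the vanishing). Your explicit product lemma, including the boundary case $\chi(\zsmall_2)=0$, is exactly the content the paper leaves implicit, and your choice $\overline{\rho}(u)=C(1+\norm{u})$ matches the paper's $\overline{\rho}(u)=C_u\max\big(\norm{\varianceexp^{-1/2}}\norm{u},1\big)$ up to constants. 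Finally, you state openly what the paper suppresses: because of the bilinear term $\theta\variance\pointp$, the constants in \eqref{lipsch:b,sigma}--\eqref{growth_cond} necessarily scale with $\norm{\pointp}$ even on the original domain, so the seemingly uniform $\widetilde{C}_M,\widetilde{C}_G$ must be read as depending affinely on the control value (or the estimates applied after composing with an admissible decision rule); your explicitness here is the more defensible reading.
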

	\begin{proof}
		It is sufficient to provide the proof for $y={\filter \choose \zsmall}$ from the state space $\mathcal{S}^\stateprocess=\mathcal{S}^\Mpro\times\mathcal{S}^\YQ$ of the original state process $\stateprocess={\Mpro \choose\YQ}$  and for $\zsmall\in \mathcal{S}^\YQ_\varepsilon\setminus\mathcal{S}^\YQ$ because when multiplying these functions by the bounded and Lipschitz continuous
		function $\big(1-\frac{1}{\eps}\dist(\zsmall, \mathcal S^{\YQ}) \big) \wedge 0$ the Lipschitz and growth conditions are preserved. 
		The claims in \eqref{lipsch:b,sigma} and in 
		\eqref{growth_cond}  are clear since the conditional covariance $\Qpro$ takes values  in a bounded domain as stated by Proposition \ref{properties_filter}.  This property is inherited to the state component $\YQ$ by the mapping \eqref{transf_Y_QQ}.  Furthermore, the coefficient $\alphaqq$ is quadratic in $\zsmall$. Moreover, we have $\betaq=0$ whereas  $\alphamm$ and $\betam$ are linear in $m$ and $\zsmall$. The jump term $\gammaq$ depends on $ \zsmall$, while the jump term $\gammam$ depends on both $ \zsmall$ and $u$. The derivatives of these functions are bounded in $\zsmall$, and as consequence the inequalities \eqref{lipsch:Delta} and \eqref{growth-delta}  hold. Further, one can easily prove  that { $\overline{\rho}$ can be choosen as 
			$\overline{\rho}(u)=C_u\max\big(\|\varianceexp^{-1/2}\|\|u\|,1\big)$, where $C_u$} is a positive constant.~\qed
	\end{proof}

	Note that the diffusion coefficient  of the regularized  state equation associated to both Brownian motion $\overline{W},W^\ast$ is
	$\big(\widetilde{\beta}^ \top_Y(y),	\frac{1}{\sqrt{\np}}I_{\nZustand} \big)^{\top}$ satisfies the		Lipschitz and growth conditions \eqref{lipsch:b,sigma} and
	\eqref{growth_cond} given in  Assumption \ref{coef-model}, since 	$\widetilde{\beta}_Y(y)$ satisfies these conditions and
	$\frac{1}{\sqrt{\np}}I_{\nZustand}$ does not depend on the state $y$. 
	
	\subsection{\it \textbf{Regularized Control Problem}}
	\label{control_problem_reg}
	
	We now extend the notions of reward function given in \eqref{Zielfkt_H} and
	value function given \eqref{Wertfkt_H} for our original optimization
	problem to the setting of the regularized optimization problem.  We assume that the regularized state process starts at time $t\in[0,T]$ at the initial value $y\in \mathcal{S}^Y$ in the state space of the original control problem. Further, it is controlled by a fixed admissible decision rule $\Pi:[0,T]\times \widetilde{ \mathcal{S}}^Y \to \R^\nAktien$. The corresponding solution to \eqref{filter_pert} is denoted by  ${}^\np\stateprocess_t^{\Pi,t,y}$.				
	The reward and value function of the regularized problems then are defined for $t\in[0,T], y\in \mathcal{S}^Y$ 
	\begin{eqnarray}
		\label{reward_regu} \nonumber {}^\np \reward(t,y;\Pi) &=& \E^{\overline{\P}} 
		\Big[\exp\Big\{\int\nolimits_t^T b(
		{}^\np\stateprocess_s^{\Pi,t,y},\Pi(s,{}^\np\stateprocess_s^{\Pi,t,y})) ds\Big\}\Big]
		\quad \text{for } \Pi\in \mathcal{A},\\
		\nonumber {}^\np V(t,y) & = &\sup_{\Pi \in
			\mathcal{A}}  {}^\np \reward(t,y,\Pi).
	\end{eqnarray}

	\begin{lemma}
		\label{Generator_pi_regular} The generator ${}^\np\generator={}^\np\generator^{p}$ for the regularized state process ${}^\np\stateprocess$
		with dynamics  given in \eqref{filter_pert} is given for a function $f\in C^2(\widetilde{\mathcal S}_{Y})$ by
		\begin{align}
			{}^\np\mathcal{L}^p f(y)&=\diffop_{\filter}^{\top} f(y)\widetilde{\alpha}_M(y,\pointp) +\diffop_{\zsmall}^{\top} f(y)\widetilde{\underline{\alpha}}_Q(y)
			+\frac{1}{2}\trace\Big\{\diffop_{\filter\filter} f(y)\widetilde{\beta}_M(y)\widetilde{\beta}_M(y)^{\top}\Big\}\nonumber\\
			&\quad +\lambda \Big\{\int_{\mathbb R^{\nAktien}} f\big( y+\widetilde{\gamma}_Y(y,u) \big)\varphi(u)du-f(y)  \Big\}
			+\frac{1}{2k}\trace\Big\{\diffop_{yy} f(y)\Big\}.
			\label{new-generator_regular}
		\end{align}
	\end{lemma}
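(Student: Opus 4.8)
The plan is to read off the generator directly from It\^o's formula for jump diffusions applied to $f({}^\np\stateprocess_t)$, following exactly the route used in the proof of Lemma~\ref{Generator_pi}; the only genuinely new ingredient is the additional, independent Brownian driver $W^\ast$ in \eqref{filter_pert}.

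First I would handle the two diffusive drivers jointly. Stacking $\overline{W}$ (dimension $\nAktien$) and $W^\ast$ (dimension $\nZustand$) into a single Brownian motion, the regularized state equation \eqref{filter_pert} has diffusion matrix $\sigma(y)=\big(\widetilde{\beta}_Y(y),\ \tfrac{1}{\sqrt{\np}}I_{\nZustand}\big)$, and by independence of the two drivers the second-order coefficient is $\sigma(y)\sigma(y)^{\top}=\widetilde{\beta}_Y(y)\widetilde{\beta}_Y(y)^{\top}+\tfrac1\np I_{\nZustand}$, with no cross terms. Using the block form $\widetilde{\beta}_Y(y)=\big(\widetilde{\beta}_M(y)^{\top},\,0_{\nQ\times\nAktien}^{\top}\big)^{\top}$ from \eqref{coeffs_regu_state}, only the $\filter$-block of $\widetilde{\beta}_Y\widetilde{\beta}_Y^{\top}$ survives, so the diffusion part of the generator splits into $\tfrac12\trace\{\diffop_{\filter\filter}f\,\widetilde{\beta}_M\widetilde{\beta}_M^{\top}\}$ from $\overline{W}$ and the extra term $\tfrac{1}{2\np}\trace\{\diffop_{yy}f\}$ from $W^\ast$. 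The Lipschitz and growth bounds of Lemma~\ref{coef-model} then guarantee that the stochastic-integral parts are genuine martingales and that the jump integral is well defined, justifying It\^o's formula for $f\in C^2(\widetilde{\mathcal S}_Y)$.

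Next I would collect the drift, diffusion and compensated-jump contributions; writing $\diffop f$ for the full gradient in $y$, this yields the generator in the compensated form
\begin{align*}
{}^\np\mathcal{L}^p f(y)=&~\diffop_{\filter}^{\top}f(y)\,\widetilde{\alpha}_M(y,\pointp)+\diffop_{\zsmall}^{\top}f(y)\,\widetilde{\alpha}_Q(y)+\tfrac12\trace\{\diffop_{\filter\filter}f(y)\,\widetilde{\beta}_M(y)\widetilde{\beta}_M(y)^{\top}\}+\tfrac{1}{2\np}\trace\{\diffop_{yy}f(y)\}\\
&+\lambda\int_{\R^{\nAktien}}\big(f(y+\widetilde{\gamma}_Y(y,u))-f(y)-\diffop^{\top}f(y)\,\widetilde{\gamma}_Y(y,u)\big)\varphi(u)\,du,
\end{align*}
where $\widetilde{\gamma}_Y=(\widetilde{\gamma}_M^{\top},\widetilde{\gamma}_Q^{\top})^{\top}$ and $\widetilde{\alpha}_Y=(\widetilde{\alpha}_M^{\top},\widetilde{\alpha}_Q^{\top})^{\top}$ denote the two blocks of the extended coefficients from \eqref{coeffs_regu_state}.

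Finally I would simplify the compensation integral precisely as in the proof of Lemma~\ref{Generator_pi}. Since $\widetilde{\gamma}_M(\variance,u)=\variance(\variance+\varianceexp)^{-1/2}u$ is odd in $u$ (the regularizing distance factor being independent of $u$), one has $\int_{\R^{\nAktien}}\widetilde{\gamma}_M(\variance,u)\varphi(u)\,du=0$, so the $\filter$-part of the compensator vanishes; as $\widetilde{\gamma}_Q$ does not depend on $u$, the $\zsmall$-part reduces to $-\lambda\,\diffop_{\zsmall}^{\top}f(y)\,\widetilde{\gamma}_Q(y)$. Absorbing this into the $\zsmall$-drift produces $\diffop_{\zsmall}^{\top}f(y)\big(\widetilde{\alpha}_Q(y)-\lambda\widetilde{\gamma}_Q(y)\big)=\diffop_{\zsmall}^{\top}f(y)\,\widetilde{\underline{\alpha}}_Q(y)$ and turns the remaining integral into the uncompensated form $\lambda\{\int_{\R^{\nAktien}} f(y+\widetilde{\gamma}_Y(y,u))\varphi(u)\,du-f(y)\}$, which is exactly \eqref{new-generator_regular}. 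There is no deep analytic difficulty here; the one point that must be handled with care is the independence argument ensuring that the $W^\ast$-driver decouples cleanly into the single term $\tfrac{1}{2\np}\trace\{\diffop_{yy}f\}$ and produces no cross terms with $\overline{W}$ or with the jump part.
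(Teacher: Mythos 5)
Your proposal is correct and follows essentially the same route as the paper: the paper's proof of Lemma~\ref{Generator_pi_regular} simply states that it is analogous to that of Lemma~\ref{Generator_pi}, and your argument is precisely that analogous computation carried out explicitly -- compensated-jump form of the generator, cancellation of the $\filter$-part of the compensator via the zero mean of $\widetilde{\gamma}_M(\cdot,u)$ in $u$, and absorption of the $\zsmall$-part into $\widetilde{\underline{\alpha}}_Q$. The one genuinely new ingredient, the independent driver $W^\ast$ contributing the decoupled term $\tfrac{1}{2\np}\trace\{\diffop_{yy}f\}$ with no cross terms, is exactly the point the paper leaves implicit and you handle it correctly.
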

	\begin{proof}
		The proof is analog to proof of  Lemma \ref{Generator_pi}.
		\qed
	\end{proof}
	Now, the associated dynamic programming equation reads as
	\begin{equation}\label{eq:HJB-regularized}
		{}^\np V(t,y)+\sup_{p \in \mathbb
			R^{\nAktien}}\Big\{{}^\np\mathcal{L}^{p}\; {}^\np
		V(t,y)-b(y,p){}^\np V(t,y)\Big\}=0, ~(t,y) \in [0,T) \times
		\widetilde{\mathcal S}_{},
	\end{equation}
	with terminal value ${}^\np V(T,y)=1$. Note that for the generator
	${}^\np\mathcal{L}^{p}$ the ellipticity condition for the
	coefficients of the second derivatives holds since we have for all
	$z\in \mathbb R^{\nZustand}\setminus\{0\}$ and all
	$y\in\widetilde{\mathcal S}_Y$
	\begin{align}\nonumber
		z^{\top}\Big(\widetilde{\beta}_Y(y) \widetilde{\beta}_Y^{\top}(y)+ \frac{1}{2\np} I_{\nZustand}\Big) z=
		z^{\top} \widetilde{\beta}_Y(y) \widetilde{\beta}_Y^{\top}(y) z + \frac{1}{2\np}z^{\top} z=
		\| \widetilde{\beta}_Y(y) z\|^2 ~+\frac{1}{2\np} \|z\|^2 \geq C \|z\|^2
	\end{align}
	where  $C=\frac{1}{2\np}$. Hence the results of Davis and Lleo \cite{Davis and Lleo (2013_2)}
	apply to this dynamic programming equation. According to Theorem 3.8
	of their paper, there is a classical solution ${}^\np V$ for
	\eqref{eq:HJB-regularized}. Moreover, for every $(t,y)$ there exists
	unique maximizer ${}^\np p^*$ of the problem
	$$\sup_{p \in \mathbb R^{\nAktien}} \Big\{{}^\np\mathcal{L}^{p}\; {}^\np V(t,y)-b(y,p) {}^\np V (t,y)\Big\}.$$
	The maximizer ${}^\np p^*$ can be chosen as a Borel-measurable
	function of $t$ and $y$ and the  optimal strategy   ${}^\np \Pi^{\ast}_t = {}^\np
	\Pi^{\ast}(t,y)$ is given similarly as in \eqref{opti_stra_power_C}.



	\subsection{\it \textbf{$\mathcal L_2$-Convergence ${}^\np{\stateprocess} \to \stateprocess $ of State Processes}}\ 
	We now compare the solution ${}^\np \stateprocess$ of the regularized
	state equation \eqref{filter_pert} with the solution $\stateprocess_t$ of the
	unregularized state equation \eqref{State_SDE}
	and study asymptotic properties for $\np\to \infty$. This will be
	crucial for establishing convergence of the associated reward
	function of the regularized problem to the original optimization
	problem. We assume that both processes start at time $t_0\in[0,T]$
	with the same initial value $y\in\widetilde{\mathcal
		S}_Y:=\mathbb R^{ \nZustand}$, i.e,
	${}^\np \stateprocess_{t_0}=\stateprocess_{t_0}=y $  and are controlled by the same admissible decision rule $\Pi\in \mathcal{A}$.  The corresponding solutions are
	denoted by ${}^\np\stateprocess_t^{\Pi,t_0,y}$ and $\stateprocess_t^{\Pi,t_0,y}$. Note that  the regularized state process ${}^\np\stateprocess_t^{\Pi,t_0,y}$  is controlled by the strategy  $\Pi(t,{}^\np\stateprocess_t)$, whereas the strategy for the original state process $\stateprocess_t^{\Pi,t_0,y}$  reads as $\Pi(t,\stateprocess_t)$. 
	\begin{lemma}[Uniform $\mathcal L_2$-convergence w.r.t. $\Pi\in\mathcal{A}$]
		\label{L^2_conv_filter_M}\\				
		Let $t\in[t_0,T]$ and $y\in\mathcal{S}^\stateprocess$ be fixed . Then it		holds
		$$\lim\limits_{\np\to\infty}\E^{\overline{\P}} \Big[ \sup_{t \in [t_0,T]} \|{}^\np\stateprocess_t^{\Pi,t_0,y}-\stateprocess_t^{\Pi,t_0,y}\|^2\Big] =0
		\quad\mbox{uniformly for }\quad \Pi\in \mathcal{A}.$$
	\end{lemma}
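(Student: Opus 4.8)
The plan is to estimate the difference $\Delta_t:={}^\np\stateprocess_t^{\Pi,t_0,y}-\stateprocess_t^{\Pi,t_0,y}$ by a Gronwall argument that respects the block structure $\Delta_t=(\Delta^M_t,\Delta^Q_t)$ with $\Delta^M_t={}^\np\Mpro_t-\Mpro_t$ and $\Delta^Q_t={}^\np\YQ_t-\YQ_t$ inherited from $\stateprocess={\Mpro\choose\YQ}$. The crucial starting observation is that, since the common initial value $y$ lies in the original state space $\mathcal S^\stateprocess$ whose $Q$-part is bounded (Prop.~\ref{properties_filter}), the unregularized trajectory $\stateprocess^{\Pi,t_0,y}$ never leaves $\mathcal S^\stateprocess$; hence along its path the extended coefficients $\widetilde\alpha_Y,\widetilde\beta_Y,\widetilde\gamma_Y$ coincide with the original ones, and every coefficient difference can be written purely in terms of the extended coefficients, to which the global Lipschitz and linear-growth bounds of Lemma~\ref{coef-model} apply with constants independent of $\Pi$.

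First I would treat the covariance block separately, because by \eqref{coeffs_rest_state} the SDE for $\YQ$ is autonomous: its coefficients $\alphaYQ,\gammaYQ$ depend on $\zsmall$ only and the $\overline{W}$-diffusion part vanishes. Thus $\Delta^Q$ solves a closed SDE driven by the Lipschitz differences of $\widetilde\alpha_\YQ,\widetilde\gamma_\YQ$ together with the $Q$-components of the perturbation $\frac{1}{\sqrt\np}W^\ast$. A standard moment estimate — Burkholder--Davis--Gundy for the $\komppoi$-integral (legitimate since $\int_{\R^\nAktien}\overline{\rho}^2(u)\varphi(u)\,du<\infty$) followed by Gronwall — yields, uniformly in $\Pi$, both $\E^{\overline{\P}}[\sup_s\|\Delta^Q_s\|^2]\le C/\np$ and, in its $L^4$ version, $\E^{\overline{\P}}[\sup_s\|\Delta^Q_s\|^4]\le C/\np^2$.

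Next I would set up the Gronwall inequality for $\phi_\np(t):=\E^{\overline{\P}}[\sup_{s\in[t_0,t]}\|\Delta^M_s\|^2]$. The diffusion and jump coefficients $\betam,\gammam$ of the $\Mpro$-equation depend on $\zsmall$ only, so their differences are controlled by $\|\Delta^Q\|$ with bounded (respectively $\overline{\rho}(u)$-weighted) coefficients. The drift $\alphamm(\filter,\variance,\pointp)=\revspeed(\revlevel-\filter)+\theta\variance\pointp$, after inserting $\pointp=\Pi(t,\cdot)$ and splitting the state- and control-arguments, decomposes so that $\Delta^M$ enters only through the bounded coefficient $\revspeed$, while every term carrying the unbounded factor $1+\|{}^\np\stateprocess\|$ — arising from the linear growth \eqref{Linear-growth-rule} of $\Pi$ in $\theta\variance\Pi$ and from the cutoff-factor difference multiplying $\revspeed(\revlevel-{}^\np\Mpro)$ — is multiplied by the covariance difference $\|\Delta^Q\|$; the remaining control-difference term is bounded by $C\|\Delta_t\|$ via \eqref{Lipschitz_rule} and the boundedness of $\variance$ on $\mathcal S^\YQ$. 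Applying Burkholder--Davis--Gundy and Cauchy--Schwarz, and noting that $\frac{1}{\sqrt\np}W^{\ast}$ contributes a term $C(T-t_0)/\np$, I obtain
\[ \phi_\np(t)\le C\Big(\frac1\np+\int_{t_0}^{t}\phi_\np(s)\,ds+\int_{t_0}^{t}\E^{\overline{\P}}\big[\|\Delta^Q_s\|^2(1+\|{}^\np\stateprocess_s\|)^2\big]\,ds\Big). \]

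The main obstacle is the cross-term integral, which in the bounded-drift settings of \cite{Gabih et al (2014),Shardin and Wunderlich (2017)} is absent but here reflects the unboundedness of the Ornstein--Uhlenbeck filter $\Mpro$. I would resolve it by first establishing a fourth-moment bound $\sup_\np\E^{\overline{\P}}[\sup_t\|{}^\np\stateprocess_t\|^4]<\infty$, uniform in $\Pi$, from the linear-growth estimates of Lemma~\ref{coef-model} and \eqref{Linear-growth-rule} via a further Burkholder--Davis--Gundy/Gronwall argument (the perturbation contributes moments bounded uniformly in $\np$). Cauchy--Schwarz together with the $L^4$-estimate for $\Delta^Q$ then gives
\[ \E^{\overline{\P}}\big[\|\Delta^Q_s\|^2(1+\|{}^\np\stateprocess_s\|)^2\big]\le\big(\E^{\overline{\P}}\|\Delta^Q_s\|^4\big)^{1/2}\big(\E^{\overline{\P}}(1+\|{}^\np\stateprocess_s\|)^4\big)^{1/2}\le\frac{C}{\np}, \]
so the cross-term integral is $O(1/\np)$. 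Feeding this back leaves $\phi_\np(t)\le C/\np+C\int_{t_0}^{t}\phi_\np(s)\,ds$, and Gronwall yields $\phi_\np(T)\le(C/\np)e^{C(T-t_0)}\to0$. Combining with the covariance estimate gives $\E^{\overline{\P}}[\sup_t\|\Delta_t\|^2]\le C/\np\to0$, and since every constant stems from the $\Pi$-independent bounds of Lemma~\ref{coef-model}, Assumption~\ref{admi_stra_rule} and the uniform moment bound, the convergence holds uniformly over $\Pi\in\mathcal A$.
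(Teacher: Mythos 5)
Your argument is correct in substance, but it takes a genuinely different route from the paper's proof, and the comparison is instructive. The paper runs a \emph{single} Gronwall argument on the full difference ${}^\np\stateprocess-\stateprocess$: it splits it into the drift integral ${}^\np\Phi$ and the martingale part ${}^\np\Psi$ (diffusion, jump and perturbation terms), bounds ${}^\np\Psi$ by Doob's inequality together with \eqref{lipsch:b,sigma} and \eqref{lipsch:Delta}, and bounds the drift difference by splitting off only the control difference $\theta \Qpro_u\big({}^\np\Pi_u-\Pi_u\big)$, which is handled exactly as you do — boundedness of $\Qpro$ (Proposition~\ref{properties_filter}) plus the Lipschitz condition \eqref{Lipschitz_rule} on $\Pi$ — while the remaining state difference at frozen control is disposed of by a blanket appeal to the global Lipschitz bound \eqref{lipsch:b,sigma} of Lemma~\ref{coef-model}; no moment estimates of the processes are needed and the rate $O(1/\np)$ follows at once. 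You instead exploit the autonomy of the covariance block, first derive $L^2$- and $L^4$-rates for $\Delta^Q$, and then close a separate Gronwall inequality for the $\Mpro$-block in which the unbounded factors $\big(1+\|{}^\np\stateprocess\|\big)$ — produced by the cutoff difference multiplying $\revspeed(\revlevel-{}^\np\Mpro)$ and by the linear growth \eqref{Linear-growth-rule} of $\Pi$ in the term $\theta\variance\,\Pi$ — always appear paired with $\|\Delta^Q\|$ and are absorbed by Cauchy--Schwarz against a uniform fourth-moment bound. What your extra machinery buys is precision exactly where the paper's shortcut is delicate: the product of the bounded Lipschitz cutoff with the unbounded functions $\revspeed(\revlevel-\filter)$ and $\theta\variance\pointp$ (at controls of linear growth) is \emph{not} globally Lipschitz in $y$ with a constant uniform in the control, so the paper's one-line estimate of its quantity $X_M$ via \eqref{lipsch:b,sigma} glosses over the cross-terms that you isolate and treat rigorously; the price you pay is the higher-moment apparatus ($L^4$-estimates and uniform fourth moments of ${}^\np\stateprocess$), which the paper's proof never needs. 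Two points you should make explicit to complete your route: the Burkholder--Davis--Gundy estimate at fourth order for the compensated Poisson integrals, and the uniform fourth-moment bound for ${}^\np\stateprocess$, both require $\int_{\mathbb R^{\nAktien}}\overline{\rho}^4(u)\varphi(u)\,du<\infty$, which is stronger than the square-integrability stated in Lemma~\ref{coef-model} but does hold here since $\overline{\rho}$ grows linearly and $\varphi$ is the Gaussian density; and the uniformity in $\Pi$ of your fourth-moment bound should be traced to the fact that on the support of the cutoff the factor $\variance$ is bounded, so the controlled drift has linear growth with constants depending only on $C_G$ and the model data.
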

	The proof is given in Appendix \ref{proof_conv_state}. 
	\begin{remark}\label{rem_short_notation}
		For a better readability  we  frequently  omit the superscripts ${\Pi,t,y}$ at  $\stateprocess^{\Pi,t,y}$ and ${}^\np\stateprocess^{\Pi,t,y}$ and just write $\stateprocess$ and ${}^\np\stateprocess$, keeping the dependence in mind. Further, we write $\Pi_t, {}^\np\Pi_t$ for $\Pi(t,\stateprocess_t), \Pi(t,{}^\np\stateprocess_s)$.
	\end{remark}
	\subsection{\it \textbf{Auxiliary Results for the Convergence of Reward Functions}}
	\label{aux_conv_reward}

	The above lemma on the convergence of the state processes is a crucial result for the following analysis of the convergence of reward function. We recall their definition in \eqref{Zielfkt_H} for the original problem and  \eqref{reward_regu} for the regularized problem which can be rewritten as 				
	\begin{align}
		\label{eta_def}
		\begin{array}{rll}	
			\reward(t,y;\Pi)&=\E^{\overline{\P}} \big[\exp\{\etaT^{\Pi,t,y}\}\big] &\text{with }\quad \etaT^{\Pi,t,y}= 	\int\limits_t^T b \big( \stateprocess_s^{\Pi,t,y},\Pi(s,\stateprocess_s^{\Pi,t,y})\big) ds,\\[1ex]
			{}^\np\reward(t,y;{}^\np\Pi)&=\E^{\overline{\P}} \big[\exp\{{}^\np \etaT^{\Pi,t,y}\}\big]
			&\text{with }\quad
			{}^\np \etaT^{\Pi,t,y} = \int\limits_t^T b \big({}^\np\stateprocess_s^{{}^\np\Pi,t,y},\Pi(s,{^\np}\stateprocess_s^{\Pi,t,y})\big) ds.
		\end{array}
	\end{align}
	Starting point for the convergence of reward function established below in Theorem \ref{conv_reward} is  Lemma \ref{L_1_Convergence} for which we need the following  estimates.	They are necessary since contrary to the regularization procedure used in  Frey et al. \cite{Frey-Wunderlich-2014}, 
	Shardin and Wunderlich \cite{Shardin and Wunderlich (2017)} in this work the hidden signal is an unbounded Ornstein-Uhlenbeck for which the filter 
	processes take values on the whole	$\mathbb{R}^{\nAktien}$. This creates technical difficulties  which do not allow to directly adopt the approach in \cite{Frey-Wunderlich-2014,Shardin and Wunderlich (2017)} since there 	the drift is driven by a finite-state Markov chain leading to a
	bounded filter given in terms of the conditional probabilities of		the different states.

	\begin{lemma}[Moments of the filter process $\mathbf{\Mpro}$ are bounded]
		\label{Moments_state_pros} \\
		Let $\Mpro=\Mpro^{t,y}\in\mathcal{S}^\stateprocess$ with $y={\filter \choose \zsmall}$ be the filter process of conditional expectation starting at time $t$ with $\Mpro_t^{t,y}=\filter$ and $\Qpro_t=\restri^{-1}(\zsmall)$.
		Then for every  $p\geq  0$ there exists a  constant  $ C_{M,p}>0$ such that the $p$-th-order moment  $\Mpro^{t,y}$ is  bounded on $[t,T]$, i.e, 					
		\begin{align}
			\nonumber  
			\sup_{s\in[t,T]}\E[\|\Mpro_s^{t,y}\|^p]\leq C_{M,p}.
		\end{align}  
	\end{lemma}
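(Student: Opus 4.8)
The plan is to treat this as a standard moment estimate for the jump--diffusion $\Mpro$ whose $\P$-dynamics is recalled from Lemma~\ref{Filter_C_Int},
\begin{align*}
	d\Mpro_s=\revspeed(\revlevel-\Mpro_s)\,ds+\betam(\Qpro_s)\,d\widetilde{W}_s+\int_{\R^{\nAktien}}\gammam(\Qpro_{s-},u)\,\komppoi(ds,du).
\end{align*}
The decisive structural observation is that, by Proposition~\ref{properties_filter}, the conditional covariance satisfies $\norm{\Qpro_s}\le C_{\Qpro}$ uniformly in $s$, so that the diffusion coefficient $\betam(\Qpro_s)=\Qpro_s\Sigma_R^{-1/2}$ is bounded and the jump coefficient obeys $\norm{\gammam(\Qpro_{s-},u)}\le C\norm{u}$ with $\int_{\R^{\nAktien}}\norm{u}^r\varphi(u)\,du<\infty$ for every $r\ge 0$, while the drift $\revspeed(\revlevel-\Mpro_s)$ is affine in $\Mpro$. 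Thus all coefficients satisfy a linear growth condition in $\Mpro$, and the bound should follow from the usual It\^o/Gronwall machinery.

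First I would reduce to the case of an even integer $p=2n$; the claim for arbitrary $p\ge 0$ then follows by Lyapunov's (Jensen's) inequality, choosing $2n\ge p$ and using $\E[\norm{\Mpro_s}^p]\le(\E[\norm{\Mpro_s}^{2n}])^{p/(2n)}$. For $p=2n$ I would apply It\^o's formula to $f(\Mpro_s)=\norm{\Mpro_s}^{2n}$, stopped at $\tau_N=\inf\{s\ge t:\norm{\Mpro_s}\ge N\}\wedge T$, so that on $[t,\tau_N]$ the stochastic integrals against $\widetilde{W}$ and against the compensated measure $\komppoi$ have bounded integrands, hence are genuine martingales and vanish in expectation. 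Taking expectations then leaves only the affine-drift contribution, the It\^o (diffusion) correction, and the jump compensator.

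The core of the argument is to bound each of these three terms by $C(1+\E[\norm{\Mpro_{r\wedge\tau_N}}^{2n}])$. For the affine drift this is Young's inequality applied to $2n\norm{m}^{2n-2}m^{\!\top}\revspeed(\revlevel-m)$; for the It\^o correction it follows from the boundedness of $\betam(\Qpro_s)\betam^{\top}(\Qpro_s)$; for the jump compensator I would use the elementary estimate $\big|\norm{m+v}^{2n}-\norm{m}^{2n}\big|\le C(\norm{m}^{2n-1}\norm{v}+\norm{v}^{2n})$ together with $\norm{v}=\norm{\gammam(\variance,u)}\le C\norm{u}$ and the finiteness of the Gaussian moments $\int\norm{u}\varphi(u)\,du$ and $\int\norm{u}^{2n}\varphi(u)\,du$. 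This yields
\begin{align*}
	\E[\norm{\Mpro_{s\wedge\tau_N}}^{2n}]\le\norm{\filter}^{2n}+C\int_t^s\big(1+\E[\norm{\Mpro_{r\wedge\tau_N}}^{2n}]\big)\,dr,
\end{align*}
and Gronwall's lemma produces a bound $(\norm{\filter}^{2n}+CT)e^{CT}$ that is uniform in $N$ and in $s\in[t,T]$. Letting $N\to\infty$ and invoking Fatou's lemma removes the localization and furnishes the constant $C_{M,2n}$, whence $C_{M,p}:=1+C_{M,2n}^{\,p/(2n)}$.

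I expect the jump-compensator estimate to be the only genuinely delicate point: one must control $\int_{\R^{\nAktien}}\big(f(\Mpro_{r-}+\gammam(\Qpro_{r-},u))-f(\Mpro_{r-})\big)\lambda\varphi(u)\,du$ without the boundedness of the signal that was available in the finite-state setting. It is precisely here that the combination of the uniform bound on $\Qpro$ from Proposition~\ref{properties_filter} and the finiteness of all Gaussian moments of $u$ compensates for the unboundedness of the Ornstein--Uhlenbeck filter, replacing the boundedness of the conditional probabilities exploited in \cite{Frey-Wunderlich-2014,Shardin and Wunderlich (2017)}.
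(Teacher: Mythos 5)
Your proposal is correct, but it takes a genuinely different route from the paper's own proof. You treat $\Mpro$ as a generic jump--diffusion whose drift is affine in $\Mpro$, whose diffusion coefficient $\betam(\Qpro_s)$ is bounded thanks to Proposition~\ref{properties_filter}, and whose jump amplitude is dominated by $C\norm{u}$ with all Gaussian moments of the mark finite, and you then run the standard It\^o--localization--Gronwall machinery on $\norm{\Mpro_s}^{2n}$, finishing with Fatou and Jensen. The paper argues quite differently: it writes the explicit solution of the filter equation between two arrival dates together with the Gaussian Bayesian update at the arrival dates, concludes that conditionally on the $\sigma$-algebra $\AlgT_s=\sigma\{T_k,\ T_k\le s\}$ of arrival times the vector $\Mpro_s$ is Gaussian, and then invokes the Gaussian moment-comparison inequality of \cite{Rosinski and Suchanecki (1980)}, $\E[\norm{U}^p]\le C_p\big(\E[\norm{U}^2]\big)^{p/2}$, so that only the conditional second moment must be controlled; that moment is bounded by $2\trace\big(\E[\Qpro_s\mid\AlgT_s]\big)+2\E[\norm{\drift_s}^2\mid\AlgT_s]$, i.e., by the bounded conditional covariance plus the bounded second moment of the Ornstein--Uhlenbeck drift (citing \cite{Gabih et al (2019) FullInfo}), with no It\^o formula and no Gronwall argument. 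The paper's route is shorter given the two cited results, avoids exponential-in-$T$ constants, and makes transparent why the unbounded signal is harmless: the filter inherits conditional Gaussianity, so all moments are slaved to the second. Your route is self-contained and more robust: since it never uses Gaussianity of $\Mpro$, only linear growth of the drift, it applies verbatim to the controlled dynamics under $\overline{\P}$, where the drift of $\Mpro$ acquires the term $\theta\Qpro_s\pi_s$ with $\pi_s=\Pi(s,\stateprocess_s)$, of linear growth uniformly in $\Pi\in\mathcal{A}$ by Assumption~\ref{admi_stra_rule}; this yields directly the uniform-in-$\Pi$ bound of Corollary~\ref{StateY_bounded} under $\overline{\P}$, which the conditional-Gaussianity argument does not immediately give, because a state-dependent controlled drift destroys Gaussianity. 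Two cosmetic points to fix when writing your argument up: apply It\^o to the smooth function $(m^{\!\top}m)^n$ rather than to a power of the maximum norm used elsewhere in the paper (which is not differentiable), passing to the equivalent norm at the end; and note that on $[t,\tau_N]$ the integrand of the compensated Poisson integral is not bounded in $u$ but only dominated by $C\big(N^{2n-1}\norm{u}+\norm{u}^{2n}\big)$, which is exactly the $\lambda\varphi$-integrability needed for the martingale property, while the passage $N\to\infty$ uses $\tau_N\uparrow T$, i.e., non-explosion, which follows from the same Lipschitz and linear growth conditions.
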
			
	The proof is given in Appendix \ref{Moments_state_pros_proof}.
	The boundedness of the conditional covariance $\Qpro$, see Lemma  \ref{properties_filter}, is inherited to the second state component $\YQ$ such that there exists a constant $K_G>0$ with $\|\YQ_s\|\le  K_G$ for all $s\in[t,T]$. 				
	Using properties of the maximum norm in $\R^\nZustand$  and shorthand notations from Remark \ref{rem_short_notation} we obtain
	$
	\|\stateprocess_s\| = \max\{\|\Mpro_s\|, \|\YQ_s\|\} \le \max\{\|\Mpro_s\|, K_G\}.$ As a consequence, Lemma \ref{Moments_state_pros} implies  that all moments of original state process ${\mathbf \stateprocess}$ are bounded. 
	
	\begin{cor}[Moments of original state process ${\mathbf \stateprocess}$ are bounded]
		\label{StateY_bounded}	\\
		Let $t\in[0,T]$ and $y\in\mathcal{S}^\stateprocess$ be fixed. Then  	
		for every $p\ge  0$ there	exists a constant $\KBY{p}>0$ such that for the moments of the state
		process $\stateprocess=\stateprocess^{\Pi,t,y}$ satisfying SDE	\eqref{filter_pert} and  starting at time $t$ with $\stateprocess_t^{\Pi,t,y}=y$ it	holds 
		\begin{align}
			\label{StateY_bound} 
			\sup_{s\in[t,T]}		\E^{\overline{\P}}\big[\|\stateprocess_s^{\Pi,t,y}\|^p\big] \le \KBY{p},\quad \text{uniformly   for all $ \Pi\in\mathcal{A}$}.
		\end{align}
	\end{cor}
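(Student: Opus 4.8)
The plan is to reduce the estimate for the full state process to the moment bound for its conditional-mean component, exploiting that the covariance component is bounded. First I would note that, by the mapping \eqref{transf_Y_QQ}, the boundedness of the conditional covariance $\Qpro$ established in Proposition \ref{properties_filter} transfers to the second state component, giving a constant $K_G>0$ with $\|\YQ_s^{\Pi,t,y}\|\le K_G$ for all $s\in[t,T]$ and every admissible $\Pi$; this bound is a property of $\Qpro$ alone and hence independent of the decision rule. Since $\R^{\nZustand}$ carries the maximum norm, the state decomposes as $\|\stateprocess_s^{\Pi,t,y}\|=\max\{\|\Mpro_s^{\Pi,t,y}\|,\|\YQ_s^{\Pi,t,y}\|\}\le\max\{\|\Mpro_s^{\Pi,t,y}\|,K_G\}$.

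Next I would apply the elementary inequality $\max\{a,b\}^p\le a^p+b^p$, valid for $a,b\ge0$ and every $p\ge0$, to obtain pointwise $\|\stateprocess_s^{\Pi,t,y}\|^p\le\|\Mpro_s^{\Pi,t,y}\|^p+K_G^p$. Taking $\E^{\overline{\P}}$ and then the supremum over $s\in[t,T]$ yields
\begin{align}
\sup_{s\in[t,T]}\E^{\overline{\P}}\big[\|\stateprocess_s^{\Pi,t,y}\|^p\big]\le\sup_{s\in[t,T]}\E^{\overline{\P}}\big[\|\Mpro_s^{\Pi,t,y}\|^p\big]+K_G^p.\nonumber
\end{align}
It then remains to control the first term on the right uniformly in $\Pi$, which is precisely the content of Lemma \ref{Moments_state_pros} (in its $\overline{\P}$-version matching the controlled dynamics of the present section); setting $\KBY{p}:=C_{M,p}+K_G^p$ would complete the proof.

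The one point deserving care --- and the place where the real work sits --- is the uniformity over $\mathcal{A}$ in the bound for $\Mpro$. Under $\overline{\P}$ the conditional mean follows \eqref{Filter_M_int_Ph}, whose diffusion and jump coefficients $\betam(\Qpro)$ and $\gammam(\Qpro,u)$ depend only on the bounded process $\Qpro$ and not on the control, so the sole source of $\Pi$-dependence is the drift term $\theta\Qpro\pi$ with $\pi=\Pi(s,\stateprocess_s)$. Combining the boundedness $\|\Qpro_s\|\le C_{\Qpro}$ from Proposition \ref{properties_filter}, the linear growth condition \eqref{Linear-growth-rule} on $\Pi$, and $\|\YQ_s\|\le K_G$, I would estimate $\|\theta\Qpro_s\pi_s\|\le|\theta|\,C_{\Qpro}C_G\big(1+\|\Mpro_s\|+K_G\big)$, so that the drift of $\Mpro$ has linear growth in $\|\Mpro\|$ with a constant depending only on the model parameters $\revspeed,\revlevel$ and on $C_{\Qpro},C_G,K_G$, but not on the particular $\Pi$. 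Feeding this $\Pi$-uniform linear-growth bound into the It\^o--Gr\"onwall moment estimate underlying Lemma \ref{Moments_state_pros} produces the constant $C_{M,p}$ uniformly over $\mathcal{A}$. I expect no genuine obstacle beyond keeping this uniform bookkeeping clean, since the delicate part --- controlling the moments of the unbounded Ornstein-Uhlenbeck-type dynamics of $\Mpro$ --- has already been absorbed into Lemma \ref{Moments_state_pros}.
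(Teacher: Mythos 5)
Your proposal is correct, and its first half --- the bound $\|\YQ_s\|\le K_G$ inherited from Proposition \ref{properties_filter}, the max-norm identity $\|\stateprocess_s\|=\max\{\|\Mpro_s\|,\|\YQ_s\|\}$, and the reduction of \eqref{StateY_bound} to a moment bound on $\Mpro$ --- is exactly the paper's proof, which consists of precisely this reduction followed by a citation of Lemma \ref{Moments_state_pros}. Where you diverge is in how the moment bound on $\Mpro$ is justified, and your version is the more careful one. The paper's Lemma \ref{Moments_state_pros} is proved not by an It\^o--Gr\"onwall estimate (as you guessed) but by exploiting that the \emph{uncontrolled} filter SDE \eqref{filter_C0} has an explicit solution which, conditionally on the arrival times $T_1,T_2,\dots$, is Gaussian, so that all moments are dominated by powers of the second moment via a Gaussian moment inequality; the second moment is then bounded using $\E[\|\Mpro_s-\drift_s\|^2\,|\,\cdot\,]\le 2\trace(\Qpro_s)+2\E[\|\drift_s\|^2\,|\,\cdot\,]$ and the bounded moments of the Ornstein--Uhlenbeck process $\drift$. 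That Gaussian argument applies verbatim only to the filter dynamics without the control term; under $\overline{\P}$ the relevant process solves \eqref{Filter_M_int_Ph}, whose drift contains $\theta\Qpro_t\pi_t$ with $\pi_t=\Pi(t,\stateprocess_t)$, and conditional Gaussianity is lost. The paper's one-line citation thus passes over exactly the point you flag, namely why the bound holds for the controlled dynamics under $\overline{\P}$ uniformly in $\Pi$. Your route --- noting that the sole $\Pi$-dependence is the drift term $\theta\Qpro\pi$, bounding it by $|\theta|\,C_{\Qpro}C_G(1+\|\Mpro_s\|+K_G)$ with control-independent constants, and feeding this linear-growth bound into a standard jump-diffusion moment/Gr\"onwall estimate --- is a correct, self-contained way to close this step, and it buys precisely the uniformity over $\mathcal{A}$ that the corollary asserts; the paper's approach buys brevity by treating the lemma as directly applicable. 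The only inaccuracy in your write-up is the mischaracterization of the lemma's proof technique, which is harmless since you replace that proof by your own estimate. (A minor remark: the reference to SDE \eqref{filter_pert} in the corollary statement is a typo for \eqref{State_SDE}; like you, one should read the corollary as concerning the unregularized state process.)
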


	\begin{lemma}[Second-order moments of regularized state  processes $\mathbf{{}^\np\stateprocess}$  are bounded]
		\label{StateY_regu_bounded}
		\ \\
		Let $t\in[0,T]$ and $y\in\mathcal{S}^\stateprocess$ be fixed.
		Then there
		exists a constant $\KBYreg{2}$ such that for the second-order moments of the regularized state
		process ${}^\np\stateprocess={}^\np\stateprocess^{\Pi,t,y}$ satisfying SDE	\eqref{State_SDE} and  starting at time $t$ with $\stateprocess_t^{\Pi,t,y}=y$ it	holds      
		\begin{align}
			\label{StateY_bound_regu} 
			\sup_{s\in[t,T]}\sup_{\np}	\E^{\overline{\P}}\big[\|{}^\np\stateprocess_s^{\Pi,t,y}\|^2\big] \le\KBYreg{2},
			\quad \text{uniformly   for all $ \Pi\in\mathcal{A}$}.
		\end{align}				
	\end{lemma}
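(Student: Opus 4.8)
The plan is to apply Itô's formula to the squared Euclidean norm of ${}^\np\stateprocess$ and to close a Gronwall inequality whose constants are independent of both the regularization index $\np$ and the decision rule $\Pi$. Write $|y|^2:=\sum_{j=1}^{\nZustand}(y^j)^2$ for the squared Euclidean norm; since $\norm{y}\le|y|\le\sqrt{\nZustand}\,\norm{y}$, it suffices to bound $\E^{\overline{\P}}[\,|{}^\np\stateprocess_s|^2]$. First I would apply the Itô formula for jump--diffusions to the regularized SDE \eqref{filter_pert}, expanding the jump contribution through $|x+\widetilde{\gamma}_Y|^2-|x|^2=2x^{\top}\widetilde{\gamma}_Y+|\widetilde{\gamma}_Y|^2$ and splitting the Poisson integral into its $\komppoi$-martingale part and its $\lambda\,\varphi(u)\,du\,dr$ compensator. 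After localizing at the stopping times $\tau_n=\inf\{r\colon|{}^\np\stateprocess_r|\ge n\}\wedge T$, the integrals against $\overline{W}$, $W^{\ast}$ and $\komppoi$ are true martingales and drop out under $\E^{\overline{\P}}$, leaving for $s\in[t,T]$
\begin{align*}
\E^{\overline{\P}}\big[|{}^\np\stateprocess_{s\wedge\tau_n}|^2\big]
&=|y|^2+\E^{\overline{\P}}\!\int_{t}^{s\wedge\tau_n}\Big[\,2\,{}^\np\stateprocess_r^{\top}\widetilde{\alpha}_Y({}^\np\stateprocess_r,\pi_r)+\trace\big\{\widetilde{\beta}_Y\widetilde{\beta}_Y^{\top}\big\}+\tfrac{\nZustand}{\np}\\
&\qquad\qquad+\lambda\int_{\R^{\nAktien}}\big(2\,{}^\np\stateprocess_r^{\top}\widetilde{\gamma}_Y+|\widetilde{\gamma}_Y|^2\big)\varphi(u)\,du\Big]\,dr,
\end{align*}
where $\widetilde{\beta}_Y,\widetilde{\gamma}_Y$ are evaluated at $({}^\np\stateprocess_r,u)$ and $\pi_r=\Pi(r,{}^\np\stateprocess_r)$.

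Next I would estimate the integrand, the crucial point being the control-dependence of the drift. By \eqref{alphamm_def} the first block of $\widetilde{\alpha}_Y$ is $\revspeed(\revlevel-\filter)+\theta\variance\,\pointp$ with $\variance=\restri^{-1}(\zsmall)$; on the support $\mathcal{S}^{\YQ}_\varepsilon$ of the extended coefficients $\zsmall$, hence $\variance$, is bounded by a constant depending only on $C_{\Qpro}$ and $\varepsilon$, while outside that support $\widetilde{\alpha}_Y$ vanishes. Combining this with the linear growth \eqref{Linear-growth-rule} of the admissible decision rule, $\norm{\pi_r}\le C_G(1+\norm{{}^\np\stateprocess_r})$, yields $\norm{\widetilde{\alpha}_Y({}^\np\stateprocess_r,\pi_r)}\le C(1+\norm{{}^\np\stateprocess_r})$ with $C$ independent of $\Pi$, so that $2\,{}^\np\stateprocess_r^{\top}\widetilde{\alpha}_Y\le C'(1+|{}^\np\stateprocess_r|^2)$. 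The diffusion term is controlled by the growth bound \eqref{growth_cond} (and equivalence of matrix norms), and the regularization term satisfies $\nZustand/\np\le\nZustand$ because $\np\ge 1$, which is precisely what produces uniformity in $\np$. For the jump terms I would use \eqref{growth-delta} together with $\int_{\R^{\nAktien}}\overline{\rho}^2(u)\varphi(u)\,du<\infty$ from Lemma \ref{coef-model}; the cross term additionally needs $\int\overline{\rho}(u)\varphi(u)\,du<\infty$, which follows from the square-integrability by Cauchy--Schwarz since $\varphi$ is a probability density. All bounds yield constants uniform over $\np\ge 1$ and $\Pi\in\mathcal{A}$, so that with $g_n(s):=\E^{\overline{\P}}[|{}^\np\stateprocess_{s\wedge\tau_n}|^2]$ one arrives at
\begin{align*}
g_n(s)\le |y|^2+C_1\,T+C_2\int_{t}^{s}g_n(r)\,dr,\qquad s\in[t,T],
\end{align*}
with $C_1,C_2$ independent of $n$, $\np$ and $\Pi$. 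Gronwall's lemma then gives $g_n(s)\le(|y|^2+C_1T)e^{C_2T}$, and letting $n\to\infty$ via Fatou's lemma yields the same bound for $\E^{\overline{\P}}[|{}^\np\stateprocess_s|^2]$.

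The main obstacle --- and the point where the unbounded Ornstein--Uhlenbeck signal makes the argument more delicate than in the bounded-filter settings of \cite{Shardin and Wunderlich (2017),Frey-Wunderlich-2014} --- is obtaining the linear-growth estimate for the controlled drift $\widetilde{\alpha}_Y(y,\Pi(t,y))$ uniformly over $\mathcal{A}$: one must exploit that the conditional-covariance block $\variance$ multiplying the control stays bounded on the support of the extended coefficients, while the control itself is only of linear growth in $y$. Once this is in place, uniformity in $\np$ is automatic from $\nZustand/\np\le\nZustand$, and the square-integrability of $\overline{\rho}$ against $\varphi$ keeps the jump contributions finite. Finally, since the Gronwall bound does not depend on $s$, taking the supremum over $s\in[t,T]$ and then over $\np$ preserves it; passing back to the maximum norm via $\norm{y}\le|y|$ gives the assertion \eqref{StateY_bound_regu} with the finite constant $\KBYreg{2}:=(|y|^2+C_1T)e^{C_2T}$.
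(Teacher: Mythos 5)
Your proof is correct in substance, but it takes a genuinely different---and considerably longer---route than the paper. The paper disposes of this lemma in two lines: Corollary \ref{StateY_bounded} already gives $\sup_{s\in[t,T]}\E^{\overline{\P}}\big[\|\stateprocess_s\|^2\big]\le \KBY{2}$ uniformly in $\Pi$, and Lemma \ref{L^2_conv_filter_M} gives $\E^{\overline{\P}}\big[\sup_{s}\|{}^\np\stateprocess_s-\stateprocess_s\|^2\big]\to 0$ uniformly in $\Pi$, so this quantity is bounded by some $C$ uniformly in $\np$; the elementary inequality $\|{}^\np\stateprocess_s\|^2\le 2\|\stateprocess_s\|^2+2\|{}^\np\stateprocess_s-\stateprocess_s\|^2$ then yields the claim with $\KBYreg{2}=2\KBY{2}+C$. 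You instead re-derive the bound from scratch via It\^o's formula for the jump-diffusion \eqref{filter_pert}, localization, coefficient estimates and Gronwall. What your route buys: it is self-contained---it needs neither the $\mathcal{L}_2$-convergence lemma (whose own proof in Appendix A is a Gronwall argument of exactly this type) nor the Gaussian filter-moment machinery of Lemma \ref{Moments_state_pros} standing behind Corollary \ref{StateY_bounded}---and it makes explicit where uniformity in $\Pi$ and $\np$ really comes from: the boundedness of the covariance block $\variance$ on the support of the extended coefficients combined with the linear growth \eqref{Linear-growth-rule} of admissible decision rules (a point that the growth condition \eqref{growth_cond} of Lemma \ref{coef-model} glosses over, since $\widetilde{\alpha}_Y(y,p)$ does grow in $p$), together with $\nZustand/\np\le\nZustand$. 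What the paper's route buys: brevity and a cleaner logical architecture, since both ingredients it invokes are needed elsewhere in Section 6 anyway.

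One small inaccuracy in your display: because the SDE is driven by the \emph{compensated} measure $\komppoi$, the compensator contribution in the It\^o expansion of the squared norm is $\lambda\int_{\R^{\nAktien}}|\widetilde{\gamma}_Y|^2\varphi(u)\,du$ only---the cross term $2\,{}^\np\stateprocess_r^{\top}\widetilde{\gamma}_Y$ cancels against the compensator correction already contained in the jump integral of the state equation (for the $\Mpro$-block it vanishes in any case, since $\int_{\R^{\nAktien}}\gammam(\variance,u)\varphi(u)\,du=0$, but for the $\YQ$-block it does not). So your identity double-counts a term of the form $2\lambda\,{}^\np\YQ_r^{\top}\widetilde{\gamma}_{\YQ}({}^\np\YQ_r)$. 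This is harmless for the conclusion, since that term is itself bounded by $C\big(1+|{}^\np\stateprocess_r|^2\big)$ and is absorbed into the Gronwall constants, but as stated the equality is not exact.
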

	
	\begin{proof}
		Using the shorthand notations from Remark \ref{rem_short_notation}, applying  Corollary  \ref{StateY_bounded} and 		the  $\mathcal L_2$-convergence ${}^\np{\stateprocess}_s \to \stateprocess_s $	  it holds for all $s\in[t,T],\; k\in\N$, and uniformly   for all $ \Pi\in\mathcal{A}$
		$$
		\E^{\overline{\P}}\big[\|{}^\np\stateprocess_s\|^2\big]  = \E^{\overline{\P}}\big[\|{}^\np\stateprocess_s + \stateprocess_s-\stateprocess_s\|^2\big] \le 2 	\E^{\overline{\P}}\big[\|\stateprocess_s\|^2\big]	+ 2 \E^{\overline{\P}}\big[\|{}^\np\stateprocess_s - \stateprocess_s\|^2\big]	 	\le 2C_{Y,2} + C: = \KBYreg{2}
		$$
		for some $C>0$.
		\qed
	\end{proof}

	\begin{lemma}[Uniform $\mathcal L_1$-convergence ${{}^\np\etaT^{\Pi,t,y} \to \etaT^{\Pi,t,y}}	$ for all ${\Pi\in\mathcal{A}}$]
		\label{L_1_Convergence}\\	
		Let $t\in[0,T]$ and $y\in\mathcal{S}^\stateprocess$ be fixed.				
		Then it			holds 
		$$\lim\limits_{\np\to\infty}\E^{\overline{\P}} \Big[  \big|{}^\np\etaT^{\Pi,t,y}-\etaT^{\Pi,t,y}\big|\Big] =0
		\quad\text{uniformly for all}\quad \Pi\in \mathcal{A}.$$				
	\end{lemma}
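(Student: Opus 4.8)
The plan is to dominate the $\mathcal{L}_1$-distance between the random exponents by the uniform $\mathcal{L}_2$-distance of the state processes established in Lemma~\ref{L^2_conv_filter_M}. Using the shorthand of Remark~\ref{rem_short_notation}, I write $\Pi_s=\Pi(s,\stateprocess_s)$ and ${}^\np\Pi_s=\Pi(s,{}^\np\stateprocess_s)$, so that $\etaT=\int_t^T b(\stateprocess_s,\Pi_s)\,ds$ and ${}^\np\etaT=\int_t^T b({}^\np\stateprocess_s,{}^\np\Pi_s)\,ds$ by \eqref{eta_def}. The triangle inequality for the integral then gives
\[
\E^{\overline{\P}}\big[|{}^\np\etaT-\etaT|\big]\le\int_t^T\E^{\overline{\P}}\big[|b({}^\np\stateprocess_s,{}^\np\Pi_s)-b(\stateprocess_s,\Pi_s)|\big]\,ds,
\]
so it suffices to control the integrand uniformly in $s$ and in $\Pi\in\mathcal{A}$.

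First I would establish a pointwise estimate for the integrand. Writing $y=\binom{m}{g}$ and recalling $b(y,p)=\theta\big(p^{\top}m-\tfrac{1-\theta}{2}\|\sigma_{\wealth}p\|^2\big)$ with $\sigma_{\wealth}=\Sigma_R^{1/2}$, I treat the bilinear and quadratic parts separately via
\[
({}^\np\Pi_s)^{\top}{}^\np\Mpro_s-\Pi_s^{\top}\Mpro_s=({}^\np\Pi_s)^{\top}({}^\np\Mpro_s-\Mpro_s)+({}^\np\Pi_s-\Pi_s)^{\top}\Mpro_s
\]
and
\[
\|\sigma_{\wealth}{}^\np\Pi_s\|^2-\|\sigma_{\wealth}\Pi_s\|^2=({}^\np\Pi_s-\Pi_s)^{\top}\Sigma_R({}^\np\Pi_s+\Pi_s).
\]
Invoking the Lipschitz and linear-growth bounds of Assumption~\ref{admi_stra_rule}, namely $\|{}^\np\Pi_s-\Pi_s\|\le C_M\|{}^\np\stateprocess_s-\stateprocess_s\|$ and $\|\Pi_s\|\le C_G(1+\|\stateprocess_s\|)$, $\|{}^\np\Pi_s\|\le C_G(1+\|{}^\np\stateprocess_s\|)$, together with $\|{}^\np\Mpro_s-\Mpro_s\|\le\|{}^\np\stateprocess_s-\stateprocess_s\|$ for the maximum norm, yields a bound of the form
\[
|b({}^\np\stateprocess_s,{}^\np\Pi_s)-b(\stateprocess_s,\Pi_s)|\le C\,\|{}^\np\stateprocess_s-\stateprocess_s\|\,\big(1+\|\stateprocess_s\|+\|{}^\np\stateprocess_s\|\big),
\]
with a constant $C$ depending only on $\theta$, $C_M$, $C_G$ and $\|\Sigma_R\|$, hence independent of $s$, $\Pi$ and $\np$.

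Next I would apply the Cauchy--Schwarz inequality to the product on the right, obtaining
\[
\E^{\overline{\P}}\big[|b({}^\np\stateprocess_s,{}^\np\Pi_s)-b(\stateprocess_s,\Pi_s)|\big]\le C\Big(\E^{\overline{\P}}\big[\|{}^\np\stateprocess_s-\stateprocess_s\|^2\big]\Big)^{1/2}\Big(\E^{\overline{\P}}\big[(1+\|\stateprocess_s\|+\|{}^\np\stateprocess_s\|)^2\big]\Big)^{1/2}.
\]
The second factor is bounded by $\sqrt{3(1+\KBY{2}+\KBYreg{2})}$, uniformly in $s$, $\np$ and $\Pi$, thanks to the uniform second-moment bounds of Corollary~\ref{StateY_bounded} and Lemma~\ref{StateY_regu_bounded}. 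The first factor is dominated by $\big(\E^{\overline{\P}}[\sup_{s\in[t,T]}\|{}^\np\stateprocess_s-\stateprocess_s\|^2]\big)^{1/2}$, which by Lemma~\ref{L^2_conv_filter_M} tends to $0$ as $\np\to\infty$ uniformly in $\Pi$. Integrating over $[t,T]$ and letting $\np\to\infty$ then gives the assertion.

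The main difficulty lies in the fact that the difference $b({}^\np\stateprocess_s,{}^\np\Pi_s)-b(\stateprocess_s,\Pi_s)$ depends simultaneously on the perturbation of the state (through the first component $\Mpro$) and on the two control values $\Pi(s,\cdot)$ evaluated at the two distinct states; the quadratic term in $b$ forces products of norms to appear, so $\mathcal{L}_2$-convergence of the states alone does not close the estimate and one must control the growth factor $1+\|\stateprocess_s\|+\|{}^\np\stateprocess_s\|$ through Cauchy--Schwarz. It is precisely here that the unboundedness of the Ornstein--Uhlenbeck filter $\Mpro$ matters: unlike the bounded Markov-chain filter of \cite{Frey-Wunderlich-2014,Shardin and Wunderlich (2017)}, one cannot rely on a uniform sup-norm bound and instead needs the uniform moment estimates of Corollary~\ref{StateY_bounded} and Lemma~\ref{StateY_regu_bounded}, whose uniformity in $\Pi$ is what ultimately transfers to the uniform $\mathcal{L}_1$-convergence.
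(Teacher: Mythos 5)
Your proposal is correct and follows essentially the same route as the paper's proof: split the difference of $b$ into its bilinear and quadratic parts, control the decision-rule differences via the Lipschitz and linear-growth conditions of Assumption~\ref{admi_stra_rule}, apply Cauchy--Schwarz together with the uniform second-moment bounds of Corollary~\ref{StateY_bounded} and Lemma~\ref{StateY_regu_bounded}, and conclude with the uniform $\mathcal L_2$-convergence of Lemma~\ref{L^2_conv_filter_M}. The only cosmetic differences are that you pair ${}^\np\Pi_s$ with the $\Mpro$-difference (the paper pairs $\Pi_s$ with it), use the quadratic-form identity instead of the difference-of-norms factorization, and merge everything into a single Cauchy--Schwarz step rather than estimating term by term.
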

	\begin{proof}
		Using the shorthand notations from Remark \ref{rem_short_notation} it holds  for every $t\in [0,T]$
		\begin{align}
			\E^{\overline{\P}}  \big[\big{|} {}^\np\etaT^{\Pi,t,y}-\etaT^{\Pi,t,y}
			\big{|}\Big]
			&= \E^{\overline{\P}} \Big[\Big|\int_t^T\big( b ({}^\np\stateprocess_s^{\Pi,t,y},{}^\np\Pi_s)-b ( \stateprocess_s^{\Pi,t,y},\Pi_s)\big) ds \big|\Big]\nonumber\\
			&\leq\E^{\overline{\P}} \Big[ \int_t^T\Big|  b ({}^\np\stateprocess_s,{}^\np\Pi_s)-b ( \stateprocess_s,\Pi_s) \Big|\; ds\Big].
			\label{estimate_a}
		\end{align}
		Recall that 
		$b(y,p)=\theta\big(\pointp^{\top}\filter-\frac{1-\theta}{2}\|\sigma_{\wealth}\pointp\|^2\big)$ for $y={\filter \choose q}$ and set $C_{\theta}=\frac{1}{2}\theta(1-\theta)$. Then, for the integrand in the last inequality we have
		\begin{align}
			\hspace{-1.5cm}\Big|  b ({}^\np\stateprocess_s,{}^\np\Pi_s)-b ( \stateprocess_s,\Pi_s) \Big|
			\leq \theta|\Pi_s^{\top}\Mpro_s-{}^\np\Pi_s^{\top}{}^\np\Mpro_s|+C_{\theta}\big|\|\sigma_{\wealth}\Pi\|^2-\|\sigma_{\wealth}{}^\np\Pi\|^2\big|.\label{estimation_b}
		\end{align}
		\medskip	\paragraph{Estimation of the first term on the right-hand side of \eqref{estimation_b}}
		It holds
		\begin{align}
			\big|\Pi_s^{\top}\Mpro_s-{}^\np\Pi_s^{\top}{}^\np\Mpro_s|
			&= |\Pi_s^{\top}\Mpro_s-\Pi_s^{\top}{}^\np\Mpro_s+\Pi_s^{\top}{}^\np\Mpro_s-{}^\np\Pi_s^{\top}{}^\np\Mpro_s\big|\nonumber\\
			&\leq \big|\Pi_s^{\top}(\Mpro_s-{}^\np\Mpro_s)\big|+\big|(\Pi_s-{}^\np\Pi_s)^{\top}{}^\np\Mpro_s\big|\nonumber\\
			&\leq\big|\Pi_s^{\top}(\Mpro_s-{}^\np\Mpro_s)\big|+\|(\Pi_s-{}^\np\Pi_s)\|\|{}^\np\Mpro_s\|\nonumber\\
			&\leq\big|\Pi_s^{\top}(\Mpro_s-{}^\np\Mpro_s)\big|+C_L\|(\stateprocess_s-{}^\np\stateprocess_s)\|\|{}^\np\Mpro_s\|.
			\label{estimation_c}
		\end{align}
		Integrating, taking expectation  it follows  from \eqref{Linear-growth-rule} and Corollary \ref{StateY_bounded} for the first term 
		\begin{align}
			\E^{\overline{\P}} \Big[ \int_t^T\big|\Pi_s^{\top}(\Mpro_s-{}^\np\Mpro_s)\big|\; ds\Big]
			& \le  \int_t^T \E^{\overline{\P}} \big[  \|\Pi_s \| ~ \|{}^\np\Mpro_s-\Mpro_s \|\big] ds\\
			&  \le  \int_t^T \big(\E^{\overline{\P}} \big[ \|\Pi_s \|^2\big] \big)^{1/2} \big(\E^{\overline{\P}} \big[\|{}^\np\Mpro_s-\Mpro_s \|^2 \big] \big)^{1/2}ds\\				
			&  \le  \int_t^T \big(\E^{\overline{\P}} \big[ C_G^2(1+\|Y_s\|)^2\big] \big)^{1/2} \big(\sup_{ s\in [t,T]}\E^{\overline{\P}} \big[\|{}^\np\Mpro_s-\Mpro_s \|^2 \big] \big)^{1/2}ds\\	
			& \le  \int_t^T C_G\big(2+2C_{Y,2} \big)^{1/2} \big(\sup_{ s\in [t,T]}\E^{\overline{\P}} \big[\|{}^\np\stateprocess_s-\stateprocess_s \|^2 \big] \big)^{1/2}ds\\						
			&  \le  C_G\big(2+2C_{Y,2} \big)^{1/2} (T-t)\big(\sup_{ s\in [t,T]}\E^{\overline{\P}} \big[\|{}^\np\stateprocess_s-\stateprocess_s \|^2 \big] \big)^{1/2}.				
		\end{align}		
		Lemma \ref{L^2_conv_filter_M} implies that 
		the last term converges to $0$ as  $\np\to\infty$ and uniformly w.r.t $\Pi\in\mathcal{A}$. 
		For  the second term in \eqref{estimation_c} we apply Cauchy-Schwarz Inequality to obtain
		\begin{align}
			\E^{\overline{\P}} \Big[ \int_t^T\|(\stateprocess_s-{}^\np\stateprocess_s)\|\|{}^\np\Mpro_s\|\; ds\Big]
			\leq \int_t^T\big(\E^{\overline{\P}}  \|{}^\np\stateprocess_s-\stateprocess_s \|^2 \big)^{1/2}\big(\E^{\overline{\P}}  \|{}^\np\Mpro_s\|^2\big)^{1/2}ds\nonumber\\
			\leq\;(T-t) \Big(\sup_{ s\in [t,T]}\E^{\overline{\P}} \big[\|{}^\np \stateprocess_s- \stateprocess_s\|^2\big]\Big)^{1/2} \KBYreg{2}
		\end{align}
		where we applied Lemma \ref{StateY_regu_bounded} on the boundedness of moments of ${}^\np\stateprocess$. From Lemma \ref{L^2_conv_filter_M}   it follows that 
		the last term converges to $0$ as  $\np\to\infty$ and uniformly w.r.t $\Pi\in\mathcal{A}$.\\ 
		\paragraph{Estimation of the second term on the right-hand side of \eqref{estimation_b}} It holds 
		\begin{align}
			\big|\|\sigma_{\wealth}\Pi_s\|^2-\|\sigma_{\wealth}{}^\np\Pi_s\|^2\big|
			&= \big|\|\sigma_{\wealth}\Pi_s\|-\|\sigma_{\wealth}{}^\np\Pi_s\|\big|~
			\big|\|\sigma_{\wealth}\Pi\|+\|\sigma_{\wealth}{}^\np\Pi\|\big|
			\nonumber\\
			&\leq\|\sigma_{\wealth}(\Pi_s-{}^\np\Pi_s)\| \cdot\big(\|\sigma_{\wealth}\Pi\|+\|\sigma_{\wealth}{}^\np\Pi\|\big)\nonumber\\
			&\leq\|\sigma_{\wealth}\|^2\|(\Pi_s-{}^\np\Pi_s)\|~~~C_G\big((1+\|\stateprocess_s\|)+(1+\|{}^\np\stateprocess_s\|)\big)\nonumber\\
			&\leq \|\sigma_{\wealth}\|^2C_M\|(\stateprocess_s-{}^\np\stateprocess_s\| ~~~C_G\big(2+\|\stateprocess_s\|+\|{}^\np\stateprocess_s\|\big)\nonumber\\
			&\leq C_GC_M \|\sigma_{\wealth}\|^2\big(2+\|\stateprocess_s\|+\sup_{\np}\|{}^\np\stateprocess_s\|\big)\|(\stateprocess_s-{}^\np\stateprocess_s)\|.\nonumber
		\end{align}				
		By Corollary \ref{StateY_bounded} and Lemma \ref{StateY_regu_bounded} we have
		$\E^{\overline{\P}}\|\stateprocess_s\|^2<\infty$~and $\sup_{\np}\E^{\overline{\P}}\|{}^\np\stateprocess_s\|^2<\infty$ so that  
		by  taking  expectation and using Cauchy-Schwarz Inequality, it holds 				
		\begin{align}
			\E^{\overline{\P}}\big[\big(2+\|\stateprocess_s\|+\sup_{\np}\|{}^\np\stateprocess_s\|\big)\|(\stateprocess_s-{}^\np\stateprocess_s)\|\big] 
			& \le   \sqrt{3}\Big(4+\E^{\overline{\P}}[\|\stateprocess_s\|^2]+\sup_{\np}\E^{\overline{\P}}[\|{}^\np\stateprocess_s\|^2]\Big)^{\frac{1}{2}} \times \\
			& \big(\E^{\overline{\P}}\big[\|(\stateprocess_s-{}^\np\stateprocess_s)\|^2]\big)^{\frac{1}{2}}
			\leq C \big( \E^{\overline{\P}}\big[\|(\stateprocess_s-{}^\np\stateprocess_s)\|^2]\big)^{\frac{1}{2}}, 
		\end{align}
		for some  constant $C>0$.  Integrating yields 
		
		\begin{align}
			\int_t^T  \E^{\overline{\P}} \big[\big|\|\sigma_{\wealth}\Pi_s\|^2-\|\sigma_{\wealth}{}^\np\Pi_s\|^2\big|\big]\; ds
			\leq\;(T-t) C \big(\sup_{s\in [t,T]}\E^{\overline{\P}} \big[\|{}^\np \stateprocess_s- \stateprocess_s\|^2\big]\big)^{1/2}
		\end{align}
		which according to Lemma \ref{L^2_conv_filter_M} converges to zero for $k\to\infty$. Finally, we 
		substitute   the derived estimations into \eqref{estimation_b} and \eqref{estimate_a} which proves the claim.~\qed
	\end{proof}

	\subsection{\it \textbf{ Convergence of  Reward  Functions}}
	\label{subsec_conv_reward}
	In this subsection we present another main result consisting in the uniform convergence of  the reward functions. For this fact we rely on the concept of uniform integrability  which is of a basic importance in connection with convergence of moments. To allow for the uniform integrability we need the following assumption 
	\begin{assumption}\label{main_assumption}\  \\
		Let $t\in[0,T]$ and $y\in\mathcal{S}^\stateprocess$ be fixed.	
		Then there there exist constants  $\delta, C_\delta>0$  such that for the random variables  ${}^\np\etaT^{\Pi,t,y}$  defined in \eqref{eta_def}
		it holds 
		\begin{align}
			{\sup_{k\in\N}}\;\E^{\overline{\P}}\big[\exp\big\{(1+\delta)\; {}^\np\etaT^{\Pi,t,y}\big\}\big]\leq C_{\delta},\quad \text{uniformly for all}\quad \Pi\in \mathcal{A}.
			\label{eta_introduce}
		\end{align} 
	\end{assumption}
	The next theorem on the uniform convergence of reward functions is
	our main result, convergence of the value function and
	$\varepsilon$-optimality of ${}^\np \Pi$ follows easily
	from this theorem.
	\begin{theorem}[Uniform convergence of reward functions]
		\label{conv_reward}\\
		Under Assumptions  \ref{admi_stra_rule} and  \ref{main_assumption} it
		holds for   $t\in[0,T],~y\in \mathcal S_{Y}$
		\[ \sup\limits_{\Pi\in\mathcal{A}} |{}^\np \reward(t,y;\Pi) - \reward(t,y;\Pi)|\to 0 \quad\text{for } \np\to\infty.\]
	\end{theorem}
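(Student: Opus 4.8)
The plan is to deduce the convergence of the exponential moments from the uniform $\mathcal L_1$-convergence ${}^\np\etaT^{\Pi,t,y}\to\etaT^{\Pi,t,y}$ established in Lemma~\ref{L_1_Convergence}, using the uniform exponential moment bound of Assumption~\ref{main_assumption} as a uniform-integrability reservoir. First I would observe that, by Jensen's inequality applied to the modulus,
\[
|{}^\np\reward(t,y;\Pi)-\reward(t,y;\Pi)| \le \E^{\overline{\P}}\big[\,|\exp\{{}^\np\etaT^{\Pi,t,y}\}-\exp\{\etaT^{\Pi,t,y}\}|\,\big],
\]
so it suffices to bound the right-hand side uniformly in $\Pi\in\mathcal{A}$. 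The elementary mean-value inequality $|e^a-e^b|\le |a-b|(e^a+e^b)$ (from $e^a-e^b=e^{\xi}(a-b)$ with $\xi$ between $a,b$, and $e^{\xi}\le e^a+e^b$) yields the pointwise estimate $|\exp\{{}^\np\etaT\}-\exp\{\etaT\}|\le |{}^\np\etaT-\etaT|\,(\exp\{{}^\np\etaT\}+\exp\{\etaT\})$, where I suppress the superscripts $\Pi,t,y$ for readability.

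The second ingredient is a $(1+\delta)$-exponential moment bound for \emph{both} exponentials. For ${}^\np\etaT^{\Pi,t,y}$ this is precisely Assumption~\ref{main_assumption}. For the limit $\etaT^{\Pi,t,y}$ I would, for each fixed $\Pi$, pass to a subsequence along which ${}^\np\etaT^{\Pi,t,y}\to\etaT^{\Pi,t,y}$ almost surely (such a subsequence exists by the $\mathcal L_1$-convergence) and invoke Fatou's lemma to obtain $\E^{\overline{\P}}[\exp\{(1+\delta)\etaT^{\Pi,t,y}\}]\le C_\delta$ as well. In particular, by Jensen, both $\E^{\overline{\P}}[\exp\{{}^\np\etaT\}]$ and $\E^{\overline{\P}}[\exp\{\etaT\}]$ are bounded by $C_\delta^{1/(1+\delta)}$, uniformly in $\np$ and in $\Pi$.

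The core step would be to split the expectation of the pointwise estimate over the event $A_\eps=\{|{}^\np\etaT-\etaT|\le\eps\}$ and its complement. On $A_\eps$ the mean-value bound gives $|\exp\{{}^\np\etaT\}-\exp\{\etaT\}|\le\eps(\exp\{{}^\np\etaT\}+\exp\{\etaT\})$, so this part contributes at most $2\eps\,C_\delta^{1/(1+\delta)}$. On $A_\eps^{c}$ I would bound $|\exp\{{}^\np\etaT\}-\exp\{\etaT\}|\le \exp\{{}^\np\etaT\}+\exp\{\etaT\}$ and apply Hölder's inequality with exponents $1+\delta$ and $(1+\delta)/\delta$, using $(a+b)^{1+\delta}\le 2^{\delta}(a^{1+\delta}+b^{1+\delta})$ together with Markov's inequality $\overline{\P}(A_\eps^{c})\le \E^{\overline{\P}}[|{}^\np\etaT-\etaT|]/\eps$. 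Collecting both parts gives, uniformly in $\Pi$,
\[
\E^{\overline{\P}}\big[|\exp\{{}^\np\etaT\}-\exp\{\etaT\}|\big]\;\le\; 2\eps\,C_\delta^{\frac{1}{1+\delta}}\;+\;2\,C_\delta^{\frac{1}{1+\delta}}\Big(\tfrac{1}{\eps}\,\E^{\overline{\P}}[|{}^\np\etaT-\etaT|]\Big)^{\frac{\delta}{1+\delta}}.
\]

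To conclude I would, given a target accuracy, first choose $\eps$ small enough to make the first term small, and then—with $\eps$ now fixed—let $\np\to\infty$ and invoke Lemma~\ref{L_1_Convergence} to drive $\sup_{\Pi\in\mathcal{A}}\E^{\overline{\P}}[|{}^\np\etaT^{\Pi,t,y}-\etaT^{\Pi,t,y}|]$, and hence the second term, to zero. Since every constant appearing is independent of $\Pi$, this yields $\sup_{\Pi\in\mathcal{A}}|{}^\np\reward(t,y;\Pi)-\reward(t,y;\Pi)|\to0$. The delicate point—and the reason Assumption~\ref{main_assumption} is imposed—is that the naive route of applying Hölder directly to $|{}^\np\etaT-\etaT|\,(\exp\{{}^\np\etaT\}+\exp\{\etaT\})$ would demand $\mathcal L_q$-convergence of the $\etaT$'s for some $q>1$, which Lemma~\ref{L_1_Convergence} does not supply; the event-splitting trick is exactly what lets one trade the available $\mathcal L_1$-convergence against the uniform $(1+\delta)$-exponential moment while preserving uniformity in $\Pi$.
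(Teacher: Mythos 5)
Your proof is correct, but it follows a genuinely different route from the paper's. The paper argues ``softly'': it views $|{}^\np \reward - \reward|$ as a difference of exponential moments, notes that the uniform $\mathcal L_1$-convergence of Lemma~\ref{L_1_Convergence} gives convergence in distribution of $\exp\{{}^\np\etaT\}$ to $\exp\{\etaT\}$, establishes uniform integrability of the family $(\exp\{{}^\np\etaT\})_\np$ from Assumption~\ref{main_assumption} via exactly the H\"older--Markov computation you use on $A_\eps^c$, and then invokes Chung's theorem (convergence in distribution plus uniform integrability implies convergence of means). Your argument is ``hard'': the mean-value inequality $|e^a-e^b|\le|a-b|(e^a+e^b)$, the split over $A_\eps=\{|{}^\np\etaT-\etaT|\le\eps\}$, and H\"older--Markov on the complement produce an explicit bound
\[
\E^{\overline{\P}}\big[|\exp\{{}^\np\etaT\}-\exp\{\etaT\}|\big]\le 2\eps\,C_\delta^{\frac{1}{1+\delta}}+2\,C_\delta^{\frac{1}{1+\delta}}\Big(\tfrac{1}{\eps}\,\E^{\overline{\P}}[|{}^\np\etaT-\etaT|]\Big)^{\frac{\delta}{1+\delta}}
\]
in which every constant is independent of $\Pi$, so uniformity over $\mathcal{A}$ is immediate from Lemma~\ref{L_1_Convergence}. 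This is actually a merit of your version: the paper's appeal to ``uniform convergence in distribution'' and to Chung's single-sequence theorem leaves the uniformity in $\Pi$ somewhat implicit (one would have to argue along diagonal sequences $\Pi_\np$ to make it airtight), whereas your estimate makes it explicit. The price you pay is the extra Fatou/subsequence step to get $\E^{\overline{\P}}[\exp\{(1+\delta)\etaT\}]\le C_\delta$ for the limit variable --- a bound the paper never needs, since Chung's theorem delivers integrability of the limit for free. Both proofs consume the same two inputs (Lemma~\ref{L_1_Convergence} and Assumption~\ref{main_assumption}), and both are valid.
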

	\begin{proof}
		From \eqref{eta_def} We have
		$$\reward(t,y;\Pi)=\E^{\overline{\P}} \big[\exp\{\etaT^{\Pi,t,y}\}\big]\quad \text{and}\quad {}^\np
		\reward(t,y;\Pi)=\E^{\overline{\P}} \big[\exp\{{}^\np \etaT^{\Pi,t,y}\}\big].$$
		Using shorthand notations from Remark \ref{rem_short_notation} the uniform convergence of 
		reward functions reads as a uniform convergence of exponential  moments,  i.e, 
		$$
		\big|\E^{\overline{\P}}\big[\exp\{{}^\np \etaT\}\big]-\E^{\overline{\P}}\big[\exp\{ \etaT\}\big]\big|\to 0 \quad\text{for } \np\to\infty, \quad t\in[0,T],~y\in \mathcal S_{Y}.
		$$
		According to Chung~\cite[Theo.~4.5.4]{Chung (1968)} the above convergence is ensured if the sequence of  random variables
		$(\exp\{{}^\np \etaT\})_k$ is uniformly integrable and converges in distribution to 
		$\exp\{\etaT\}$. The uniform $\mathcal L_1$-convergence ${}^\np\eta_s^{\Pi,t,y} \to \eta_s^{\Pi,t,y} $ for all $\Pi\in\mathcal{A}$ given in Lemma \ref{L_1_Convergence} implies the uniform convergence in distribution ${}^\np\etaT^{\Pi,t,y} \to \etaT^{\Pi,t,y} $, and accordingly it yields the uniform convergence in distribution  
		$\exp\{{}^\np \etaT\} \to \exp\{ \etaT\} $for all $\Pi\in\mathcal{A}$ since the function $x\to e^x$ is continuous. To conclude the proof we have to show that the sequence of the random variables
		$(\exp\{{}^\np \etaT\})_k$ is uniformly integrable. i.e.,
		\[
		\lim_{\alpha\to\infty}\sup_k\int_{\{\exp\{{}^\np \etaT\}\geq\alpha\}} \exp\{{}^\np \etaT\}d\overline{\P}=0.
		\]
		In fact, we set  $X_k=\exp\{{}^\np \etaT\}$ and $X=\exp\{ \etaT\}$ and note that $X_k, X>0$.  Assumption \ref{main_assumption} implies $\sup_k\E^{\overline{\P}}[ X_k^{1+\delta}] \le C_\delta$ for some $\delta, C_\delta>0$.  Using Hölder's Inequality yields
		\begin{align}
			\int_{\{X_k\geq\alpha\}} X_kd\overline{\P}		= \E^{\overline{\P}}\big[ \one_{\{X_k\geq\alpha\}}X_k\big]
			\leq\Big(\E^{\overline{\P}}\big[ X_k^{1+\delta}\big]\Big)^{\frac{1}{1+\delta}}
			\Big(\E^{\overline{\P}}\big[ \one_{\{X_k\geq\alpha\}}\big]\Big)^{\frac{\delta}{1+\delta}}.\label{inequa1}
		\end{align}
		Markov's Inequality implies 
		$$
		\E^{\overline{\P}}\big[ \one_{\{X_k\geq\alpha\}}\big]=\overline{\P}(X_k\geq\alpha)=\overline{\P}\big(X_k^{1+\delta}\geq\alpha^{1+\delta}\big)\le \frac{1}{\alpha^{1+\delta}}\E^{\overline{\P}}[X_k^{1+\delta}].
		$$
		Substituting into \eqref{inequa1} yields
		\begin{align}
			\int_{\{X_k\geq\alpha\}} X_kd\overline{\P} &\leq\Big(\E^{\overline{\P}}\big[ X_k^{1+\delta}\big]\Big)^{\frac{1}{1+\delta}}~
			\frac{1}{\alpha^{\delta}} \Big(\E^{\overline{\P}}\big[ X_k^{1+\delta}\big]\Big)^{\frac{\delta}{1+\delta}}\\
			&\leq \frac{1}{\alpha^{\delta}}\E^{\overline{\P}}\big[ X_k^{1+\delta}\big]\leq \frac{1}{\alpha^{\delta}}\sup_k\E^{\overline{\P}}\big[ X_k^{1+\delta}\big],
		\end{align}
		which allows to conclude that
		\[
		\lim_{\alpha\to\infty}\sup_k\int_{\{\exp\{{}^\np \etaT\}\geq\alpha\}} \exp\{{}^\np \etaT\}d\overline{\P}=0
		\]
		since $\sup_k\E^{\overline{\P}}\big[ X_k^{1+\delta}\big]\le C_{\delta}$ by Assumption \ref{main_assumption}.
		\qed
	\end{proof}
	\subsection{\it \textbf{Convergence of Value Functions}}
	\label{subsec_conv_value}
	Finally we show  the convergence of value functions and  that the optimal  decision rule ${}^\np \Pi^{\ast}$ for
	the regularized problem \eqref{eq:HJB-regularized} is
	$\varepsilon$-optimal in the original problem. This gives a method
	for computing (nearly) optimal strategies.   The results and proofs are similar to Frey et al.~\cite{Frey-Wunderlich-2014}. For the convenience of the reader we
	give the proofs of the next two  of corollaries in Appendix \ref{proof_conv_value} and \ref{proof_eps_opt}.
	\begin{cor}[Convergence of value functions]
		\label{conv_value}		\ \\	
		Under Assumptions  \ref{admi_stra_rule} and  \ref{main_assumption} it	holds for   $t\in[0,T],~y\in \mathcal S_{Y}$
		\[{}^\np V(t,y) \to V(t,y) \quad\text{for } \np\to\infty, \quad t\in[0,T], ~y\in \mathcal S_Y.\]
	\end{cor}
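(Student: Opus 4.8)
The plan is to obtain the claim as an immediate consequence of the uniform convergence of the reward functions established in Theorem~\ref{conv_reward}, combined with the elementary fact that taking a supremum over a fixed index set is non-expansive with respect to the supremum norm. I would first fix $t\in[0,T]$ and $y\in\mathcal S_Y$ and recall from \eqref{Wertfkt_H} and the definition of the regularized value function that, for $\theta\in(0,1)$,
\[
V(t,y)=\sup_{\Pi\in\mathcal A}\reward(t,y;\Pi),\qquad {}^\np V(t,y)=\sup_{\Pi\in\mathcal A}{}^\np\reward(t,y;\Pi),
\]
so that both value functions are suprema of the respective reward functions over one and the same set $\mathcal A$ of admissible decision rules. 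Before manipulating the difference of the suprema I would check that both are finite: Assumption~\ref{main_assumption} yields, via Jensen's inequality, the uniform bound ${}^\np\reward(t,y;\Pi)\le C_\delta^{1/(1+\delta)}$ for all $\Pi\in\mathcal A$ and all $\np$, and passing to the limit using Theorem~\ref{conv_reward} the same bound holds for $\reward(t,y;\Pi)$; hence $V(t,y)$ and ${}^\np V(t,y)$ are well-defined real numbers and their difference is meaningful.

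The key step is then the inequality
\[
\big|\,{}^\np V(t,y)-V(t,y)\,\big|
=\Big|\sup_{\Pi\in\mathcal A}{}^\np\reward(t,y;\Pi)-\sup_{\Pi\in\mathcal A}\reward(t,y;\Pi)\Big|
\le\sup_{\Pi\in\mathcal A}\big|\,{}^\np\reward(t,y;\Pi)-\reward(t,y;\Pi)\,\big|,
\]
which follows from the standard two-sided estimate $\sup_\Pi a_\Pi-\sup_\Pi b_\Pi\le\sup_\Pi(a_\Pi-b_\Pi)\le\sup_\Pi|a_\Pi-b_\Pi|$ applied with the two families interchanged. Letting $\np\to\infty$, the right-hand side vanishes by Theorem~\ref{conv_reward}, which gives ${}^\np V(t,y)\to V(t,y)$ and proves the corollary.

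For the case $\theta<0$ the value function is defined through an infimum rather than a supremum, but the argument is unchanged: the infimum is likewise non-expansive, $\big|\inf_\Pi a_\Pi-\inf_\Pi b_\Pi\big|\le\sup_\Pi|a_\Pi-b_\Pi|$, so the same bound by $\sup_{\Pi\in\mathcal A}|{}^\np\reward(t,y;\Pi)-\reward(t,y;\Pi)|$ applies and the conclusion follows verbatim.

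I do not expect a genuine obstacle here, since the entire analytical content already sits in Theorem~\ref{conv_reward}. The one point worth emphasizing is that mere pointwise convergence ${}^\np\reward(t,y;\Pi)\to\reward(t,y;\Pi)$ for each fixed $\Pi$ would not by itself permit interchanging the limit with the supremum over $\mathcal A$; it is precisely the uniformity in $\Pi$ supplied by Theorem~\ref{conv_reward} that forces the non-expansiveness estimate to collapse to zero in the limit. This explains why all the preceding auxiliary results (Lemmas~\ref{L^2_conv_filter_M} and~\ref{L_1_Convergence}, Corollary~\ref{StateY_bounded}, Lemma~\ref{StateY_regu_bounded}) were carefully formulated with bounds and convergence rates that are uniform over all admissible decision rules.
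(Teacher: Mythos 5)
Your proof is correct and follows essentially the same route as the paper: both bound $|{}^\np V(t,y)-V(t,y)|$ by $\sup_{\Pi\in\mathcal{A}}|{}^\np\reward(t,y;\Pi)-\reward(t,y;\Pi)|$ via the non-expansiveness of the supremum (and, for $\theta<0$, of the infimum, which the paper handles by writing $\inf$ as $-\sup(-\cdot)$), and then invoke the uniform convergence of Theorem~\ref{conv_reward}. Your additional finiteness check via Assumption~\ref{main_assumption} and Jensen's inequality is a sound extra precaution that the paper leaves implicit.
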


	\begin{cor}[$\varepsilon$-Optimality]
		\label{eps_opt}  \\ 
		Under Assumptions  \ref{admi_stra_rule} and  \ref{main_assumption} it	holds for   $t\in[0,T],~y\in \mathcal S_{Y}$ that  for every  $\varepsilon>0$ there exists  $\np_0\in\mathbb N$,
		so that
		\[|V(t,y)- \reward(t,y;{}^\np \Pi^{\ast})| \leq\varepsilon\quad\text{for }\np\geq \np_0,\]
		meaning that  for $\np\ge\np_0$ the optimal decision rules  ${}^\np \Pi^{\ast}$ of the regularized problems are $\varepsilon$-optimal   for the original control problem.
	\end{cor}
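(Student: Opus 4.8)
The plan is to reduce the claim to the two convergence results already in hand by a single triangle-inequality split, inserting the regularized value function ${}^\np V(t,y)$ as an intermediate term. Concretely, I would write
\begin{align}
|V(t,y)-\reward(t,y;{}^\np \Pi^{\ast})|
\le |V(t,y)-{}^\np V(t,y)|
+|{}^\np V(t,y)-\reward(t,y;{}^\np \Pi^{\ast})|.
\end{align}
The first summand is handled immediately by Corollary \ref{conv_value}: since ${}^\np V(t,y)\to V(t,y)$ as $\np\to\infty$, for any $\varepsilon>0$ there is $\np_1$ with $|V(t,y)-{}^\np V(t,y)|\le \varepsilon/2$ for all $\np\ge \np_1$.

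For the second summand I would exploit that ${}^\np \Pi^{\ast}$ is by construction an optimizer of the regularized problem, so that ${}^\np V(t,y)={}^\np \reward(t,y;{}^\np \Pi^{\ast})$ (here it is used that ${}^\np \Pi^{\ast}\in\mathcal{A}$, the regularized problem being posed over the same admissible class $\mathcal{A}$). Hence
\begin{align}
|{}^\np V(t,y)-\reward(t,y;{}^\np \Pi^{\ast})|
=|{}^\np \reward(t,y;{}^\np \Pi^{\ast})-\reward(t,y;{}^\np \Pi^{\ast})|
\le \sup_{\Pi\in\mathcal{A}}|{}^\np \reward(t,y;\Pi)-\reward(t,y;\Pi)|,
\end{align}
and the right-hand side tends to zero by the uniform convergence of reward functions in Theorem \ref{conv_reward}. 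Thus there is $\np_2$ with this term $\le \varepsilon/2$ for $\np\ge \np_2$, and taking $\np_0=\max\{\np_1,\np_2\}$ yields $|V(t,y)-\reward(t,y;{}^\np \Pi^{\ast})|\le\varepsilon$ for $\np\ge\np_0$, which is the assertion.

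The point that carries the real content—though it has been secured upstream—is the second step: because the decision rule ${}^\np \Pi^{\ast}$ itself varies with $\np$, mere pointwise convergence of the reward functions would not be enough to pass to the limit along the $\np$-dependent strategies. It is precisely the uniformity over $\Pi\in\mathcal{A}$ in Theorem \ref{conv_reward} that permits evaluating the difference ${}^\np \reward-\reward$ at the moving argument ${}^\np \Pi^{\ast}$. Consequently the main obstacle does not lie in this corollary but in the ingredients behind that uniform estimate, namely the uniform $\mathcal L_2$-convergence of the state processes (Lemma \ref{L^2_conv_filter_M}) together with the uniform-integrability bound of Assumption \ref{main_assumption}; the only genuinely new verification needed here is that the regularized optimizer ${}^\np \Pi^{\ast}$ indeed belongs to $\mathcal{A}$, which follows from the $C^{1,2}$-regularity of ${}^\np V$ supplied by the Davis--Lleo existence result.
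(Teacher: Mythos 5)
Your proof is correct and follows essentially the same route as the paper: the same triangle-inequality split through the intermediate quantity ${}^\np V(t,y)={}^\np \reward(t,y;{}^\np \Pi^{\ast})$, with Corollary \ref{conv_value} controlling the value-function term and the uniformity over $\Pi\in\mathcal{A}$ in Theorem \ref{conv_reward} controlling the reward term evaluated at the $\np$-dependent optimizer. Your closing remarks on why the uniformity is indispensable and on the admissibility of ${}^\np \Pi^{\ast}$ are accurate and, if anything, make explicit points the paper leaves implicit.
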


	\begin{appendix}					
		\section*{Appendix }
		\section{Proof of Lemma \ref{L^2_conv_filter_M} }		
		\label{proof_conv_state}	
		\begin{proof}
			Without loss of generality let $t_0=0$. Using the integral notation of  the
			regularized and non-regularized state equation  and shorthand notations from Remark \ref{rem_short_notation} we get
			$${}^\np\stateprocess_t - \stateprocess_t = {}^\np \Phi_t +{}^\np \Psi_t$$
			with
			\begin{align}
				{}^\np \Phi_t& := \int\nolimits_0^t \big[\widetilde{\alpha}_{\stateprocess}\big({}^\np\stateprocess_s,\Pi(s,{}^\np\stateprocess_s)\big) - \widetilde{\alpha}_{\stateprocess}\big(\stateprocess_s,\Pi(s,\stateprocess_s)\big)\big] ds\quad \text{and} \nonumber\\
				{}^\np \Psi_t & =  \int\nolimits_0^t \big(\widetilde{\beta}_{\stateprocess}({}^\np\stateprocess_s)-\widetilde{\beta}_{\stateprocess}(\stateprocess_s)\big)  d\overline{W}_s + \int\nolimits_0^t \int\nolimits_{\mathbb R^{\nAktien}} \big(\widetilde{\gamma}_{\stateprocess}({}^\np\stateprocess_s,u)-\widetilde{\gamma}_{\stateprocess}(\stateprocess_s,u)\big)\komppoi(ds, du)
				+ \frac{1}{\sqrt{\np}} \,d W_t^{\ast} \nonumber.
			\end{align}
			We now set $~{}^\np D_t:=\E^{\overline{\P}}\Big[ \sup\limits_{s\leq t} \big\|{}^\np\stateprocess_s-\stateprocess_s \big\|^2\Big]$, then it holds
			\begin{align}
				{}^\np D_t &= \E^{\overline{\P}}\Big[ \sup_{s\leq t}  \big\| {}^\np \Phi_s + {}^\np \Psi_s  \big\|^2\Big]
				\leq2 \E^{\overline{\P}}\Big[ \sup_{s\leq t}  \big\|{}^\np \Phi_s  \big\|^2\Big] + 2 \E^{\overline{\P}}\Big[ \sup_{s\leq t}  \big\| {}^\np \Psi_s  \big\|^2\Big].
				\label{Gt_def_M}
			\end{align}
			\paragraph{Estimation of ${}^\np \Phi$}\\ 
			Using Cauchy-Schwartz inequality we obtain for the first term on the right-hand side 
			\begin{align}
				\sup_{s\leq t} \big\| {}^\np \Phi_s  \big\|^2& =  \sup_{s\leq t} \Big\| \int\nolimits_0^s \Big[\widetilde{\alpha}_{\stateprocess}\big({}^\np\stateprocess_u,\Pi(u,{}^\np\stateprocess_u)\big) - \widetilde{\alpha}_{\stateprocess}\big(\stateprocess_u,\Pi(u,\stateprocess_u)\big)\big] du \Big\|^2\nonumber\\
				&\le \sup_{s\leq t} \; s\cdot \int\nolimits_0^s \big\| \widetilde{\alpha}_{\stateprocess}\big({}^\np\stateprocess_u,\Pi(u,{}^\np\stateprocess_u)\big) - \widetilde{\alpha}_{\stateprocess}\big(\stateprocess_u,\Pi(u,\stateprocess_u)\big) \big\|^2 du\nonumber\\
				&\le t\cdot \int\nolimits_0^t \big\| \widetilde{\alpha}_{\stateprocess}\big({}^\np\stateprocess_u,\Pi(u,{}^\np\stateprocess_u)\big) - \widetilde{\alpha}_{\stateprocess}\big(\stateprocess_u,\Pi(u,\stateprocess_u)\big) \big\|^2 du.
				\label{Reg_Y_cond_M}
			\end{align}					
			Recall that    for $y={m \choose \zsmall}$ and $q=\restri^{-1}(g)$ we have  $\widetilde{\alpha}_{\stateprocess}(y,p)= {\widetilde{\alpha}_M( m, q,p) \choose \widetilde{\alpha}_G(\zsmall)}$.  Using properties of the maximum-norm   we obtain					
			$$
			\big\| \widetilde{\alpha}_{\stateprocess}\big({}^\np\stateprocess_u,\Pi(u,{}^\np\stateprocess_u)\big) - \widetilde{\alpha}_{\stateprocess}\big(\stateprocess_u,\Pi(u,\stateprocess_u)\big)  \big\|^2=\max\{X_M,~X_\YQ\}
			$$
			with 
			$	X_M= \big\| \widetilde{\alpha}_M({}^\np \Mpro_u,{}^\np \Qpro_u,{}^\np\Pi_u) - \widetilde{\alpha}_M(\Mpro_u,\Qpro_u,\Pi_u)  \big\|^2$ \text{and} $X_\YQ= \big\| \widetilde{\alpha}_\YQ({}^\np \YQ_u) - \widetilde{\alpha}_\YQ(\YQ_u)  \big\|^2	$ 
			where we write as announced in Remark \ref{rem_short_notation}  ${}^\np\Pi_u$ for $\Pi(u,{}^\np\stateprocess_u)$  and $\Pi_u$ for $\Pi(u,\stateprocess_u)$.\\
			\smallskip	
			\paragraph{Estimation of $X_M$}\\ 
			Recall  $\widetilde{\alpha}_M(m,q,p)=\big(\revspeed(\revlevel-\filter)+\theta \variance p\big)d_{\varepsilon}$ with $d_{\varepsilon}=\big(1-\frac{1}{\eps}\dist(\zsmall, \mathcal S^{\YQ}) \big) \wedge 0 \in  [0,1]$. Then, it holds					
			\begin{align}
				\hspace{-0.2cm}X_M
				&=\big\| \widetilde{\alpha}_M({}^\np \Mpro_u,{}^\np \Qpro_u,{}^\np\Pi_u) - \widetilde{\alpha}_M(\Mpro_u,\Qpro_u,{}^\np\Pi_u)+
				\widetilde{\alpha}_M(\Mpro_u,\Qpro_u,{}^\np\Pi_u)
				-\widetilde{\alpha}_M(\Mpro_u,\Qpro_u,\Pi_u)  \big\|^2\\
				&\leq ~~~2\big\| \widetilde{\alpha}_M({}^\np \Mpro_u,{}^\np \Qpro_u,{}^\np\Pi_u) - \widetilde{\alpha}_M(\Mpro_u,\Qpro_u,{}^\np\Pi_u)\big\|^2+2\big\|	\widetilde{\alpha}_M(\Mpro_u,\Qpro_u,{}^\np\Pi_u)	-\widetilde{\alpha}_M(\Mpro_u,\Qpro_u,\Pi_u)  \big\|^2\\\nonumber	
				& \leq 2\widetilde{C}_M^2 \big\| {}^\np\stateprocess_u -\stateprocess_u   \big\|^2+2\theta^2\;d_{\varepsilon}^2  \big\|\Qpro_u {}^\np\Pi_u-\Qpro_u\Pi_u\big\|^2\\
				& \leq 2\widetilde{C}_M^2 \big\| {}^\np\stateprocess_u -\stateprocess_u   \big\|^2+2\theta^2\;d_{\varepsilon}^2  \|\Qpro_u\|^2\big\|{}^\np\Pi_u-\Pi_u  \big\|^2\\
				&\leq  2\widetilde{C}_M^2 \big\| {}^\np\stateprocess_u -\stateprocess_u   \big\|^2+2\theta^2\;C_M^2  C_Q^2 \big\| {}^\np\stateprocess_u -\stateprocess_u   \big\|^2.
			\end{align}	
			Note that to obtain the last inequality we have used  the Lipschitz condition \eqref{lipsch:b,sigma} for the term $\widetilde{\alpha}_M$, the Lipschitz condition \eqref{Lipschitz_rule} for  $\Pi$ and Lemma \ref{properties_filter} for the boundedness of $\Qpro$.\\	
			\paragraph{Estimation of $X_G$} \\ 
			Applying  the Lipschitz condition \eqref{lipsch:b,sigma} to  $\widetilde{\alpha}_{\Qpro}$ we obtain
			\begin{align}
				X_Q&= \big\| \widetilde \alpha_{\YQ}({}^\np \YQ_u) - \widetilde{\alpha}_{\YQ}(\YQ_u)  \big\|^2 \leq {\widetilde{C}_M^2} \big\| {}^\np\YQ_u -\YQ_u   \big\|^2\leq {\widetilde{C}_M^2} \big\| {}^\np\stateprocess_u -\stateprocess_u   \big\|^2.
				\nonumber
			\end{align}
			Substituting  into \eqref{Reg_Y_cond_M} yields 
			\begin{align}
				\hspace{-0.8cm}\E^{\overline{\P}}\Big[\sup_{s\leq t} \big\| {}^\np \Phi_s  \big\|^2\Big]
				&\leq C \!\int_0^t  \E^{\overline{\P}}\big[   \big\|  {}^\np\stateprocess_u -\stateprocess_u   \big\|^2\big]\; du \leq C\! \int_0^t  \E^{\overline{\P}}\big[\sup_{v\leq u} \big\| {}^\np\stateprocess_v - \stateprocess_v \big\|^2\big]\; du
				= C\int_0^t {}^\np D_u\; du,
				\label{Reg_Y_cond_2_M}
			\end{align}
			for some  constant $C>0$.\\
			\paragraph{Estimation of $^{\np}\Psi$} \\ Applying Doob's Inequality for martingales to the second term on the right-hand side of 
			\eqref{Gt_def_M} yields the following
			\begin{align}
				\E^{\overline{\P}}\Big[\sup_{s\leq t} \big\| {}^\np \Psi_s  \big\|^2\Big] &\leq 4 \E^{\overline{\P}}\Big[  \big\| {}^\np \Psi_t  \big\|^2\Big]\nonumber\\
				&= 4  \Big( \int\nolimits_0^t \E^{\overline{\P}}\Big[\operatorname{tr}\Big\{ \big(\widetilde{\beta}_{\stateprocess}({}^\np\stateprocess_s)-\widetilde{\beta}_{\stateprocess}(\stateprocess_s)\Big)^{\top} \Big(\widetilde{\beta}_{\stateprocess}({}^\np\stateprocess_s)-\widetilde{\beta}_{\stateprocess}(\stateprocess_s)\Big)\Big\}\Big]  ds\nonumber \\
				& ~~~~+   \int\nolimits_0^t \int\nolimits_{\mathbb R^{\nAktien}} \E^{\overline{\P}}\Big[ \big\|\widetilde{\gamma}_{\stateprocess}({}^\np\stateprocess_s,u)-\widetilde{\gamma}_{\stateprocess}(\stateprocess_s,u))  \big\|^2\Big] \varphi(u) du ds + \frac{\nZustand t}{\np}\Big). 
				\label{M_esti_M}
			\end{align}
			Using the  Lipschitz condition \eqref{lipsch:b,sigma} for the coefficient $\widetilde{\beta}_{\stateprocess}$ it holds
			\begin{align}
				\E^{\overline{\P}}\Big[\operatorname{tr}\Big\{ \Big(\widetilde{\beta}_{\stateprocess}({}^\np\stateprocess_s)-\widetilde{\beta}_{\stateprocess}(\stateprocess_s)\Big)^{\top} \Big(\widetilde{\beta}_{\stateprocess}({}^\np\stateprocess_s)-\widetilde{\beta}_{\stateprocess}(\stateprocess_s)\Big)\Big\}\Big]
				\leq {\widetilde{C}_M^2} \;\E^{\overline{\P}} \Big[ \big\| {}^\np\stateprocess_s - \stateprocess_s  \big\|^2\Big]\nonumber\\
				\leq {\widetilde{C}_M^2}\; \E^{\overline{\P}}\big[\sup_{v\leq s}  \big\| {}^\np\stateprocess_v - \stateprocess_v  \big\|^2\big]=  {\widetilde{C}_M^2}\; {}^\np D_s,
				\label{rs_1_M}
			\end{align}
			whereas  the  Lipschitz condition \eqref{lipsch:Delta} for the coefficient $\widetilde{\gamma}_{\stateprocess}$ yields
			\begin{align}
				\E^{\overline{\P}}\Big[ \big\| \widetilde{\gamma}_{\stateprocess}({}^\np\stateprocess_s,u)-\widetilde{\gamma}_{\stateprocess}(\stateprocess_s,u)  \big\|^2\Big]
				&\leq \overline{\rho}^2(u)\;\E^{\overline{\P}} \Big[ \big\| {}^\np\stateprocess_s - \stateprocess_s  \big\|^2\Big]\nonumber\\
				&\leq \overline{\rho}^2(u)\; \E^{\overline{\P}}\Big[\sup_{v\leq s}  \big\|{}^\np\stateprocess_v - \stateprocess_v  \big\|^2\Big]
				\leq \overline{\rho}^2(u)\; {}^\np D_s.
				\label{rs_2_M}
			\end{align}
			We now substitute  \eqref{rs_1_M} and \eqref{rs_2_M} into \eqref{M_esti_M} so that we obtain for a generic constant  $C>0$
			\begin{align}
				\E^{\overline{\P}}\Big[ \sup_{s\leq t} \big\| {}^\np \Psi_s  \big\|^2\Big] 
				&\leq 4  \Big(\int\nolimits_0^t{\widetilde{C}_M^2}~ {}^\np D_s ds + \int\nolimits_0^t {}^\np D_s  ds\int\nolimits_{\mathbb R^{\nAktien}} \overline{\rho}^2(u) \varphi(u)du+ \frac{ \nZustand t}{\np}\Big)
				\nonumber\\&
				\leq C \int\nolimits_0^t  {}^\np D_s ds +\frac{4  \nZustand t}{\np}. \label{M_esti2_M}
			\end{align}
			\paragraph{Conclusion} \quad Substituting  \eqref{Reg_Y_cond_2_M} and \eqref{M_esti2_M} into \eqref{Gt_def_M} yields 
			${}^\np D_t \leq\frac{4  \nZustand T}{\np} + C \int\nolimits_0^t {}^\np D_s ds .$
			Finally, we apply Gronwall's Lemma to end up with the following estimate
			\[{}^\np D_T \leq\frac{4  \nZustand T}{\np}\; e^{CT} \to 0 \quad\text{f{\"u}r } \np\to \infty,\]
			which proves the claim.
			\qed
		\end{proof}
		\section{Proof of Lemma \ref{Moments_state_pros} }
		\label{Moments_state_pros_proof} 
		\begin{proof}						
			We apply    \cite[Lemma 2.1]{Rosinski and Suchanecki (1980)} stating that  for a Gaussian random vector  $U\in\R^n$ and for all $p,q>0 $  there exists a constant $C_{p,q}>0$ such that 
			\begin{align}
				\label{Moments_Gauss_pros}  
				\E^{\overline{\P}}[\|U\|^p]\leq C_{p,q}\big(\E^{\overline{\P}}[\|U\|^q]\big)^{p/q}.
			\end{align}
			Choosing  $q=2$ and  $C_p=C_{p,2}$ it yields
			\begin{align}
				\label{Moments_Gauss2}  
				\E^{\overline{\P}}[\|U\|^p]\leq C_p\big(\E^{\overline{\P}}[\|U\|^2]\big)^{p/2}.
			\end{align}		
			
			Without loss of generality let $t=0$. 
			The filter process $\Mpro=\Mpro^{0,y}$  given in Lemma \ref{filter_C} by the SDE \eqref{filter_C0} and the update formula \eqref{filter_C_update} enjoys  the following closed-form solution  given in terms of the  recursion for $k=0,1,\ldots$, starting with $\Mpro_0=m,   \Qpro_0=\restri^{-1}(g)$, 
			\begin{align}
				\Mpro_s &= \revlevel + e^{-\revspeed (s-T_k)}(M_{T_k}-\revlevel) 	+  \int_{T_k}^s e^{-\revspeed (s-u)} \Qpro_u\,
				\Sigma_R^{-1/2}d\widetilde{W}_u \quad \text{for } s\in [T_k,T_{k+1})\\
				\Mpro_{T_{k+1}}& =\rho_k\Mpro_{T_{k+1}-}+(I_d-\rho_k)Z_{k+1},
			\end{align}
			and  $\Qpro$ as defined in given in Lemma \ref{filter_C}.
			The expression in the first line can  be derived  by 
			computing the differential  which shows that $\Mpro$ satisfies SDE \eqref{filter_C0}. It shows that conditioned on the arrival times $T_1,T_2,\ldots$ of the expert opinions, the process $\Mpro$ is Gaussian. This follows from the fact that given the arrival dates   the conditional covariance   $\Qpro$ is deterministic and bounded. Thus  at the arrival dates we have Gaussian updates and between two arrival dates $T_k$ and $T_{k+1}$ the process $\Mpro$ is Gaussian and  $\Mpro_s$ is a Gaussian vector with $\mathcal{F}_{T_k}^\HC$-conditional  mean $\revlevel + e^{-\revspeed (t-T_k)}(M_{T_k}-\revlevel ) $ and  covariance matrix  $\int_{T_k}^s e^{-2\revspeed (s-u)}  \Qpro_u\,\Sigma_R^{-1}\Qpro_u du$.
			
			Let us  introduce the $\sigma$-algebras $\AlgT_s=\sigma\{T_k, T_k\le s\}$ generated by the arrival times up to time $s\in[0,T]$. 	
			Applying  \eqref{Moments_Gauss2} and that conditioned on $\AlgT_s$ the random variable  $\Mpro_s$ is Gaussian    we obtain
			\begin{align}
				\label{Moments_Gauss_h}
				\E^{\overline{\P}}[\|\Mpro_s\|^p]=\E^{\overline{\P}}\big[\E^{\overline{\P}}[\,\|\Mpro_s\|^p|\AlgT_s]\,\big]\leq \E^{\overline{\P}}\big[C_p\big(\E^{\overline{\P}}[\|\Mpro_s\|^2|\AlgT_s]\big)^{p/2}\big].
			\end{align} 
			Next we  apply Lemma.~3.1 in Gabih et al.~\cite{Gabih et al (2019) FullInfo}. It yields that there exists a constant $C_M>0$ independent of $t\in[0,T]$ and $\AlgT_s$ such that
			\begin{align}
				\E^{\overline{\P}}[\|\Mpro_s\|^2|\AlgT_s]=\E^{\overline{\P}}[\|\Mpro_s-\drift_s+\drift_s\|^2|\AlgT_s]& \leq 2 \E^{\overline{\P}}[\|\Mpro_s-\drift_s\|^2|\AlgT_s]+2\E^{\overline{\P}}[\|\drift_s\|^2|\AlgT_s]\nonumber\\
				&\leq 2\trace(\E^{\overline{\P}}[\Qpro_s|\AlgT_s])+2\E^{\overline{\P}}[\|\drift_s\|^2|\AlgT_s]\leq C_M.
				\label{estimation_cond_mean}
			\end{align}
			Here we have used  that the conditional covariance matrix $\Qpro$ is bounded according to Lemma \ref{properties_filter}, and  that the drift $\drift$ is Gaussian Ornstein-Uhlenbeck process with bounded moments. Note that the boundedness of $\Qpro$ implies that  $C_M$ can be chosen such that it is independent of the arrival times $T_1,T_2,\ldots$ of the expert opinions, i.e.,~independent of $\AlgT_s$.   Averaging yields $\E^{\overline{\P}}[\|\Mpro_s\|^2]=\E^{\overline{\P}}[\E^{\overline{\P}}[\|\Mpro_s\|^2|\AlgT_s]]\le C_M$ and substituting into inequality \eqref{Moments_Gauss_h} we obtain that all moments of the filter process $\Mpro$ are bounded i.e,  
			$						\E^{\overline{\P}}[\|\Mpro_s\|^p]\leq C_pC_M^{p/2} =: C_{M,p}  \text{ for all } s\in[0,T].$					
			\qed
		\end{proof}

		\section{Proof of Corollary \ref{conv_value}}
		\label{proof_conv_value}
		\begin{proof}
			For $\theta\in(0,1)$ the assertion follows from
			\begin{eqnarray*}
				|{}^\np V(t,y) - V(t,y)| &= & \Big| \sup\limits_{\Pi\in\mathcal{A}} {}^\np \reward(t,y;{}^\np\Pi) - \sup\limits_{\Pi\in\mathcal{A}} \reward(t,y;\Pi) \Big|
				\leq \sup\limits_{\Pi\in\mathcal{A}} | {}^\np \reward(t,y,{}^\np\Pi) - \reward(t,y,\Pi)|
			\end{eqnarray*}
			and  Theorem \ref{conv_reward}.
			Analogously for  $\theta<0$  it holds
			\begin{eqnarray*}
				|{}^\np V(t,y) - V(t,y)| &= & \Big| \inf\limits_{\Pi\in\mathcal{A}} {}^\np \reward(t,y;{}^\np\Pi) - \inf\limits_{\pi\in\mathcal{A}} \reward(t,y;\Pi) \Big| \\
				&= & \Big| \sup\limits_{\Pi\in\mathcal{A}} (-{}^\np \reward(t,y,{}^\np\pi)) - \sup\limits_{\Pi\in\mathcal{A}} (-\reward(t,y,\Pi)) \Big|
				\le \sup\limits_{\Pi\in\mathcal{A}} |{}^\np \reward(t,y;{}^\np\Pi) - \reward(t,y;\Pi)|.\\[-5ex]
			\end{eqnarray*}
			\qed
		\end{proof}
		
		\section{Proof of Corollary \ref{eps_opt}}
		\label{proof_eps_opt}
		\begin{proof}
			It holds
			\begin{align}
				|V(t,y) -\reward(t,y,{}^\np \pi^*)|
				& \le |V(t,y) -{}^\np \reward(t,y,{}^\np \Pi^*)| + |{}^\np \reward(t,y,{}^\np \Pi^*)- \reward(t,y,{}^\np \Pi^*)|  \\
				\label{eps1} 
				&=  |V(t,y) -{}^\np V(t,y)| + |{}^\np \reward(t,y,{}^\np \Pi^*)- \reward(t,y,{}^\np \Pi^*)|,
			\end{align}
			where for the first term on the r.h.s. we used ${}^\np \reward(t,y,{}^\np\Pi^*) = V^\np(t,y)$.
			Using the convergence properties for the reward function given in
			Theorem
			\ref{conv_reward} and for the value function given in Corollary
			\ref{conv_value} we can find for every $\varepsilon>0$  some
			$\np_0\in \mathbb N$, such that for $\np\ge \np_0$ it holds
			\[|V(t,y) -{}^\np V(t,y)| \leq\frac{\varepsilon}{2} \quad\text{and}\quad  |{}^\np \reward(t,y,{}^\np \Pi^*)- \reward(t,y,{}^\np \Pi^*)| \leq\frac{\varepsilon}{2}.\]
			Plugging the above estimates into \eqref{eps1} yields 
			$ |V(t,y) -\reward(t,y,{}^\np \Pi^*)| \le \frac{\varepsilon}{2}
			+ \frac{\varepsilon}{2} = \varepsilon	$ for  $\np\geq \np_0$.
			\qed
		\end{proof}

	\end{appendix}
	\smallskip

	\let\oldbibliography\thebibliography
	\renewcommand{\thebibliography}[1]{%
		\oldbibliography{#1}%
		\setlength{\itemsep}{.15ex plus .05ex}%
	}
	\bibliographystyle{amsplain}

\end{document}